\documentclass[11pt]{article}

\usepackage[margin=1in]{geometry}

\usepackage[
  colorlinks,
  linkcolor = blue,
  citecolor = blue,
  urlcolor = blue]{hyperref}

\usepackage{amsmath,amssymb,amsthm}

\usepackage{tikz}
\usetikzlibrary{arrows}
\usetikzlibrary{calc,positioning}

\usepackage{thm-restate}

\usepackage{colonequals}

\usepackage[font=small]{caption}


\newcommand{\C}{\mathbb{C}}

\newcommand{\U}{\mathrm{U}}


\newcommand{\ket}[1]{|#1\rangle}
\newcommand{\bra}[1]{\langle#1|}
\newcommand{\braket}[2]{\langle#1|#2\rangle}
\newcommand{\ketbra}[2]{|#1\rangle\langle#2|}
\newcommand{\ketbras}[1]{|#1\rangle\langle#1|}
\newcommand{\set}[1]{\left\{#1\right\}}
\newcommand{\abs}[1]{|#1|}            
\newcommand{\Abs}[1]{\left|#1\right|} 
\newcommand{\norm}[1]{\left\|#1\right\|}


\newcommand{\mc}[1]{\mathcal{#1}}
\newcommand{\mx}[1]{\begin{pmatrix}#1\end{pmatrix}}

\newcommand{\tp}{^{\textrm{T}}}
\newcommand{\ct}{^{\dagger}}
\newcommand{\x}{\otimes}
\newcommand{\vp}{\varphi}
\newcommand{\ve}{\varepsilon}
\newcommand{\id}{I}
\newcommand{\bip} [2]{\C^{#1}\x\C^{#2}}    
\newcommand{\bips}[1]{\bip{#1}{#1}}        

\newcommand{\partitle}[1]{\vspace{-10pt}\subsubsection*{\normalfont\emph{#1}}}
\newcommand\error{\operatorname{error}}
\newcommand{\defeq}{\colonequals}


\renewcommand\Re{\operatorname{Re}}
\renewcommand\Im{\operatorname{Im}}
\renewcommand\vec{\operatorname{vec}}
\DeclareMathOperator{\tr}{Tr}

\DeclareMathOperator{\supp}{supp}

\DeclareMathOperator{\rows}{rows}
\DeclareMathOperator{\cols}{cols}

\DeclareMathOperator{\Lin}{L}
\DeclareMathOperator{\Pos}{Pos}


\newtheoremstyle{noCaption}
{\topsep}
{\topsep}
{\itshape}
{}
{}
{}
{0pt}
{}%

\newtheorem{theorem}{Theorem}
\newtheorem*{theorem*}{Theorem}
\newtheorem{lemma}{Lemma}
\newtheorem{definition}{Definition}

\newtheorem*{claim*}{Claim}
\newtheorem*{fact*}{Fact}
\newtheorem{corollary}{Corollary}
\newtheorem*{example*}{Example}

\theoremstyle{definition}

\theoremstyle{noCaption}
\newtheorem*{problem*}{}

\title{A framework for bounding nonlocality of state discrimination}
\author{Andrew M.~Childs, Debbie Leung, Laura Man\v{c}inska, and Maris Ozols \\[2pt] \normalsize
Department of Combinatorics \& Optimization \\ \normalsize
and Institute for Quantum Computing \\ \normalsize
University of Waterloo}
\date{}

\begin{document}
\maketitle

\begin{abstract}
We consider the class of protocols that can be implemented by local quantum operations and classical communication (LOCC) between two parties. In particular, we focus on the task of discriminating a known set of quantum states by LOCC. Building on the work in the paper \emph{Quantum nonlocality without entanglement} \cite{IBM}, we provide a framework for bounding the amount of nonlocality in a given set of bipartite quantum states in terms of a lower bound on the probability of error in any LOCC discrimination protocol. We apply our framework to an orthonormal product basis known as the domino states and obtain an alternative and simplified proof that quantifies its nonlocality. We generalize this result for similar bases in larger dimensions, as well as the ``rotated'' domino states, resolving a long-standing open question \cite{IBM}.
\end{abstract}

\tableofcontents

\newpage

\section{Introduction}

The 1999 paper \emph{Quantum nonlocality without entanglement} \cite{IBM} exhibits an orthonormal basis $S \subset \bip{3}{3}$ of product states, known as domino states, shared between two separated parties. When the parties are restricted to perform only local quantum operations and classical communication (LOCC), it is impossible to discriminate the domino states arbitrarily well \cite{IBM}. In such cases we say that perfect discrimination cannot be achieved with \emph{asymptotic} LOCC. Moreover, \cite{IBM} also quantifies the extent to which any LOCC protocol falls short of perfect discrimination of the domino states.

This result spurred interest in state discrimination with LOCC. Several alternative proofs \cite{Walgate, Groisman, Cohen07} of the impossibility of perfect LOCC discrimination of the domino states were given along with many other results concerning perfect state discrimination (e.g., \cite{BDMSST, Walgate-Multi, Ghosh-Bell, Groisman, Virmani, ChenYang-Multipartite, ChenYang-Entangled, Walgate, DMSST, ChenLi-Criterion, Horodecki, HilleryMimih, Fan, Ghosh-MaxEnt, Chefles, ChenLi-ProdBasis, JiCaoYing, Watrous, Nathanson, NisetCerf, DuanUB, Feng3, DuanSep, DuanXinYing}). However, the problem of asymptotic LOCC state discrimination has not received much attention since the initial study of nonlocality without entanglement \cite{IBM}.

The main motivation for our work is to better understand the phenomenon of quantum nonlocality without entanglement. More concretely, our goals are to
\begin{itemize}
  \item simplify the original proof,
  \item render the technique applicable to a wider class of sets of bipartite states,
  \item exhibit new classes of product bases that cannot be \emph{asymptotically} (as opposed to just perfectly) discriminated with LOCC,
  \item pin down where exactly the difference between LOCC and separable operations lies, and
  \item investigate the possibility of larger gaps between the sets of LOCC and separable operations.
\end{itemize}

In particular, we seek to exhibit quantitative gaps between the classes of LOCC and separable operations. Separable operations often serve as a relaxation of LOCC operations and such gaps show how imprecise this relaxation can be. The rationale behind this relaxation is that separable operations have a clean mathematical description whereas LOCC operations can be much harder to understand.

There is also an operational motivation to quantify the difference between separable measurements and those implemented by asymptotic LOCC: the former are precisely the measurements that cannot generate entangled states, while the latter are those that do not require entanglement to implement \cite{IBM, Koashi07, Koashi09}.  Thus, a separable measurement that cannot be implemented by asymptotic LOCC uses entanglement irreversibly.

\subsection*{Our contributions}

In this paper, we develop a framework for obtaining quantitative results on the hardness of quantum state discrimination by LOCC. More precisely, we provide a method for proving a lower bound on the error probability of any LOCC measurement for discriminating states from a given set $S$.

Our first main contribution (Theorem~\ref{thm:eta}) is that
any LOCC measurement for discriminating states from a set $S$ errs with probability $p_{\error} \geq \frac{2}{27} \frac{\eta^2}{|S|^5}$, where
$\eta$ is a constant that depends on $S$ (see Definition~\ref{def:eta}). Intuitively, $\eta$ measures the nonlocality of $S$.

Our second main contribution is a systematic method for bounding the nonlocality constant $\eta$ for a large class of product bases. Together with the above theorem, this lets us quantify the hardness of LOCC discrimination for the following bases of product states:

\begin{enumerate}
  \item \emph{domino states}, the original set of nine states in $3 \times 3$ dimensions first considered in~\cite{IBM}, have $p_{\error} \geq 1.9 \times 10^{-8}$;
  \item \emph{domino-type states}, a generalization of domino states to higher dimensions corresponding to tilings of a rectangular $d_A \times d_B$ grid by tiles of size at most two, have $p_{\error} \geq 1/(216 D^2 d_A^5 d_B^5)$, where $D$ is a property of the tiling that we call ``diameter'';
  \item \emph{$\theta$-rotated domino states}, a $1$-parameter family that includes the domino states and the standard basis as extreme cases, have $p_{\error} \geq 2.4 \times 10^{-11} \sin^2 2 \theta$ (determining whether these states can be discriminated perfectly by LOCC and finding a lower bound on the probability of error were left as open problems in~\cite{IBM}).
\end{enumerate}

The rest of the paper is organized as follows. In Section~\ref{sec:Background} we introduce notation, give background on LOCC measurements and state discrimination, and summarize related prior work. In Section~\ref{sec:Framework} we introduce our general framework for lower bounding the error probability of LOCC measurements, and in Section~\ref{sec:LowerBd} we prove Theorem~\ref{thm:eta}. In Section~\ref{sec:Tilings} we consider the case where $S$ is a product basis and propose a method for bounding the nonlocality constant $\eta$ by another quantity that we call ``rigidity.'' Our approach is based on a description of sets of bipartite states in terms of tilings. In Section~\ref{sec:Dominoes} we define the three classes of states mentioned above and prove a bound on the rigidity of the domino states; bounds on the rigidity of the domino-type states and the rotated domino states appear in Appendices~\ref{apx:DimBox} and~\ref{apx:RotBox}, respectively. Finally, we discuss limitations of our framework in Section~\ref{sec:Limitations} and conclude with a discussion of open problems in Section~\ref{sec:Conclusions}.

\section{Background} \label{sec:Background}

\subsection{Notation}

The following notation is used in this paper. Let $\Lin(\C^n,\C^m)$ be the set of all linear operators from $\C^n$ to $\C^m$ and let $\Lin(\C^n) \defeq \Lin(\C^n,\C^n)$. Next, let $\Pos(\C^n) \subset \Lin(\C^n)$ be the set of all positive semidefinite operators on $\C^n$. Let $\norm{M}_{\max} \defeq \max_{ij} \abs{M_{ij}}$ denote the largest entry of $M \in \Lin(\C^n)$ in absolute value. Finally, for any natural number $n$, let $[n] \defeq \set{1,\dotsc, n}$ and let $I_n$ be the $n \times n$ identity matrix.

\subsection{Separable and LOCC measurements}

A $k$-outcome \emph{POVM measurement} (or simply a measurement) on an $n$-dimensional state space is a set of operators $\set{E_1, \dotsc, E_k} \subset \Pos(\C^n)$ such that $\sum_{i=1}^k E_i = \id_n$. The operators $E_i$ are called \emph{POVM elements}. The probability of obtaining outcome $i$ upon measuring state $\rho$ is $\tr(E_i \rho)$. 

When it is necessary to keep track of the post-measurement state, it is more convenient to use a \emph{non-destructive} measurement. Such a measurement is specified by a set of \emph{measurement operators} $\set{M_1, \dotsc, M_k} \subset \Lin(\C^n,\C^m)$ for some finite $m$ where $\sum_{i=1}^k M_i\ct M_i = \id_n$. The probability of obtaining outcome $i$ upon measuring state $\rho$ is $\tr(M_i\ct M_i \rho)$ 
and the $m$-dimensional post-measurement state is $M_i \rho M_i\ct/\tr(M_i\ct M_i \rho)$.

Note that a non-destructive measurement followed by discarding the post-measurement state corresponds to a POVM measurement with elements $E_i = M_i\ct M_i$.

\subsubsection{Separable measurements}

\begin{definition}
A measurement $\mc{E} = \set{E_1, \dotsc, E_k}$ on a bipartite state space $\bip{d_A}{d_B}$ is \emph{separable} if all POVM elements $E_i$ are separable, i.e.,
\begin{equation}
  E_i = \sum_j E^A_j \x E^B_j
\end{equation}
for some $E^A_j \in \Pos(\C^{d_A})$ and $E^B_j\in\Pos(\C^{d_B})$.
\end{definition}

Note that the above definition is equivalent to saying that $\mc{M}$ is obtained from a measurement with product POVM elements, followed by classical post-processing (coarse graining).

\subsubsection{LOCC measurements}

Informally, a bipartite $n$-outcome LOCC measurement $\mc{E}$ consists of the two parties taking finitely many turns (called \emph{rounds}) of applying adaptive non-destructive measurements to their state spaces and exchanging classical messages. This is followed by coarse graining all measurement records into $n$ bins, each corresponding to one of the $n$ outcomes of $\mc{E}$.

Let us describe such a protocol $\mc{E}$ more formally, adopting notation similar to that of \cite{IBM}. Let $\Lambda$ denote the empty string, corresponding to no message being sent. The protocol begins when one of the parties, say Alice, applies a non-destructive measurement
\begin{equation}
  \mc{A}(\Lambda) = \set{A_1(\Lambda), \dotsc, A_{k(\Lambda)}(\Lambda)}
\end{equation}
to her state space and communicates the round 1 measurement outcome $m_1 \in [k(\Lambda)]$ to Bob. Then, depending on the value of $m_1$ received, Bob applies a non-destructive measurement
\begin{equation}
  \mc{B}(m_1) = \set{B_1(m_1), \dotsc, B_{k(m_1)}(m_1)}
\end{equation} 
to his state space and communicates the round 2 measurement outcome $m_2 \in [k(m_1)]$ to Alice. The protocol proceeds with the two parties taking finitely many alternating turns of a similar form, where the non-destructive measurement applied at round $t$ depends on the \emph{measurement record} $m = (m_1, \dotsc ,m_{t-1})$ accumulated during the previous rounds.

Let $m$ be the measurement record after the execution of the first $t$ rounds of the protocol. Then the measurement operator that Alice and Bob have effectively implemented is a product operator $A_m \x B_m$, where\footnote{Here we assume for simplicity that $t$ is even; in the odd case the operators $A_m$ and $B_m$ can be defined similarly.}
\begin{align}
  A_m &\defeq A_{m_{t-1}}(m_1, \dotsc, m_{t-2}) \dotso
          A_{m_3}(m_1, m_2) A_{m_1}(\Lambda), \\
  B_m &\defeq B_{m_t}(m_1, \dotsc, m_{t-1}) \dotso
          B_{m_4}(m_1, m_2, m_3) B_{m_2}(m_1).
\end{align}

Alice and Bob may choose to terminate the protocol depending on the measurement record obtained. At this point they must output one of the $n$ outcomes of the LOCC measurement $\mc{E}$ that they are implementing. If $L(k)$ is the set of all terminating measurement records corresponding to outcome $k \in [n]$, then the $k$th POVM element of $\mc{E}$ is given by
\begin{equation}
  E_k \defeq \!\! \sum_{m \in L(k)} A_m\ct A_m \x B_m\ct B_m.
\end{equation}
Since each $E_k$ is separable, any LOCC measurement is separable.

\subsubsection{Finite and asymptotic LOCC}

We consider two scenarios: when a measurement can be performed in a finite number of rounds or asymptotically.

\begin{definition}
We say that a measurement $\mc{E}$ can be \emph{implemented by (finite) LOCC} if there exists a finite-round LOCC protocol that, for any input state, produces the same distribution of measurement outcomes as $\mc{E}$.
\end{definition} 

\begin{definition}
We say that a measurement $\mc{E}$ can be \emph{implemented by asymptotic LOCC} if there exists a sequence $\mc{P}_1, \mc{P}_2, \dotsc$ of finite-round LOCC protocols whose output distributions converge to that of $\mc{E}$.
\end{definition}

The exact implementation scenario is not practical since any real-world device is susceptible to errors due to imperfections in implementation. However, proving that a certain task cannot be performed asymptotically is considerably harder than showing that it cannot be done (exactly) by any finite LOCC protocol.

\subsubsection{LOCC protocol as a tree}

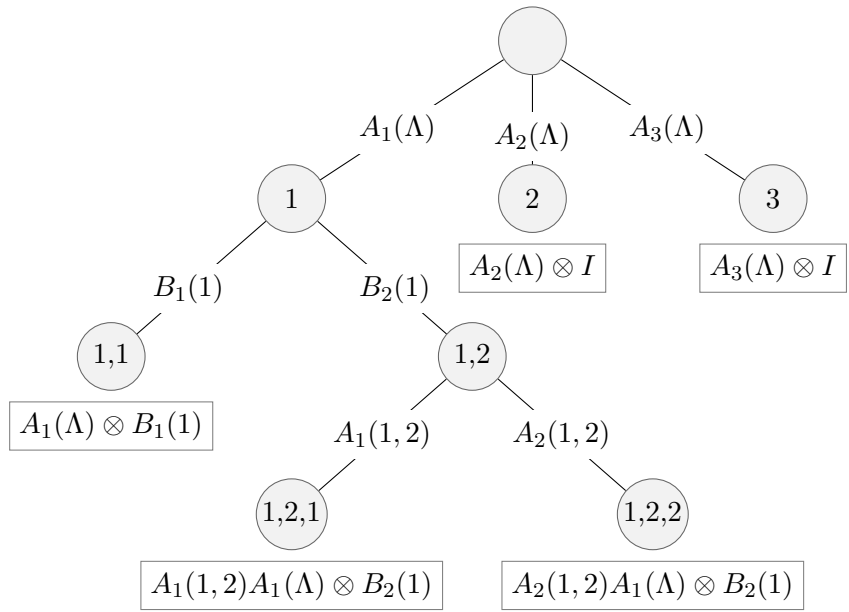
\begin{figure}[!ht]
\centering


\begin{tikzpicture}
[vx/.style = {circle, draw = black!60, fill = gray!10, minimum size = 9.0mm, inner sep = 1pt},
 every label/.style = {draw = gray, label distance = 4pt},
 edg/.style = {fill = white}]

\tikzstyle{level 1} = [level distance = 2.1cm, sibling distance = 3.2cm]
\tikzstyle{level 2} = [level distance = 2.1cm, sibling distance = 4.8cm]

\node[vx](root){} child foreach \i in {1,2,3} {
  \ifnum \i = 1 {
    node[vx] {\i}
    child foreach \j in {1,2}{
    \ifnum \j = 2 {
      node[vx]{\i,\j} 
      child foreach \k in {1,2}{
      node[vx, label = below: {$A_{\k}(1,2) A_1(\Lambda) \otimes B_2(1)$}]    
                              {\i,\j,\k}
      edge from parent node[edg]{$A_{\k}(\i,\j)$}}}
    \else{node[vx, label = below: {$A_{\i}(\Lambda) \otimes B_1(\i)$}]{\i,\j}}
    \fi
    edge from parent node[edg] {$B_{\j}(1)$}}}
   \else{node[vx, label = below: {$A_{\i}(\Lambda) \otimes I$}] {\i}}
   \fi
  edge from parent node[edg]{$A_{\i}(\Lambda)$}
};


\end{tikzpicture}

\caption{Tree structure of a three-outcome LOCC measurement. In round one Alice performs a three-outcome non-destructive measurement $\mc{A}(\Lambda)$; in round two, upon receiving message ``1'', Bob performs a two-outcome non-destructive measurement $\mc{B}(1)$ and upon receiving message ``2'' or ``3'' he terminates the protocol; in round three, upon receiving message ``1'', Alice terminates the protocol and, upon receiving message ``2'', she performs a two-outcome non-destructive measurement $\mc{A}(1,2)$. All nodes are labeled by the accumulated measurement record. The corresponding measurement operator is given below each leaf.}
\label{fig:Tree}
\end{figure}

We represent an LOCC measurement protocol as a tree (see Figure~\ref{fig:Tree}). The protocol begins at the root and proceeds downward along the edges. Each edge represents a certain measurement outcome obtained at its parent node, and \emph{leaves} are the nodes where the protocol terminates. The set of all leaves is partitioned into subsets, each corresponding to an outcome of the LOCC measurement being implemented.

A path from the root to a leaf is called a \emph{branch}. There is a one-to-one correspondence between the branches and the possible courses of execution of the LOCC protocol. Likewise, there is a one-to-one correspondence between the nodes of the tree and the accumulated measurement records.

The \emph{measurement at node $u$} is the measurement performed by the acting party once the protocol has reached node $u$. In contrast, the \emph{measurement operator corresponding to node $u$} is the measurement operator that has been implemented upon reaching node $u$. For example, consider the node $(1,2)$.  The measurement at node $(1,2)$ is given by the POVM $\{A_1(1,2), A_2(1,2)\}$, whereas the measurement operator corresponding to the node $(1,2)$ is given by $A_1(\Lambda) \otimes B_2(1)$.  As another example, the measurement operators corresponding to the leaves are exactly the measurement operators of the LOCC protocol prior to coarse graining.

\subsection{Bipartite state discrimination problem}

The goal of this paper is to investigate the limitations of two-party LOCC protocols for the task of bipartite quantum state discrimination, which is as follows:

\begin{problem*}
Let $S = \set{\ket{\psi_1}, \dotsc, \ket{\psi_n}} \subset \bip{d_A}{d_B}$ be a known set of quantum states. Suppose that $k \in [n]$ is selected uniformly at random and Alice and Bob are given the corresponding parts of state $\ket{\psi_k} \in S$. Their task is to determine the index $k$ by performing a measurement on this state.
\end{problem*}

A case of special interest is when $S$ is an orthonormal product basis, i.e., each $\ket{\psi_i} = \ket{\alpha_i} \ket{\beta_i}$ for some orthonormal bases $\ket{\alpha_i}\in\C^{d_A}$ and $\ket{\beta_i}\in\C^{d_B}$. Such states can be perfectly discriminated by a separable measurement $\mc{E}$ with POVM elements
\begin{equation}
  E_i \defeq \ketbras{\alpha_i} \x \ketbras{\beta_i}.
\end{equation}
However, this measurement cannot always be implemented by finite~\cite{Walgate, Groisman} or even asymptotic LOCC~\cite{IBM}. In such cases we say that $S$ possesses nonlocality (without entanglement).

\subsection{Previous results}

The first example of an orthonormal product basis of bipartite quantum states that cannot be perfectly discriminated by (even asymptotic) LOCC was given in \cite{IBM}. This is a striking illustration of the difference between the power of LOCC and separable operations. Furthermore, \cite{IBM} quantifies the information deficit of any LOCC protocol for discriminating these states. This result has been a starting point for many other studies on state discrimination by LOCC, with the ultimate goal of understanding LOCC operations and how they differ from separable ones. We briefly describe some of the directions that have been explored. Unless otherwise stated, these results refer to the discrimination of \emph{pure} states with \emph{finite} LOCC.

First consider the problem of discriminating two states without any restrictions on their dimension. Surprisingly, any two orthogonal (possibly entangled) pure states can be perfectly discriminated by LOCC, even when they are held by more than two parties \cite{Walgate-Multi}. Furthermore, optimal discrimination of any two multipartite pure states can be achieved with LOCC both in the sense of minimum error probability \cite{Virmani} and unambiguous discrimination \cite{ChenYang-Multipartite, ChenYang-Entangled, JiCaoYing}. Recently this has been generalized to implementing an arbitrary POVM by LOCC in any $2$-dimensional subspace \cite{Croke}.

Many authors have considered the problem of perfect state discrimination by \emph{finite} LOCC. In particular, the case where one party holds a small-dimensional system is well understood. Reference \cite{Walgate} characterizes when a set of orthogonal (possibly entangled) states in $\bips{2}$ can be perfectly discriminated by LOCC. A similar characterization for sets of orthogonal \emph{product} states in $\bips{3}$ has been given by \cite{Feng3}. In addition, \cite{Walgate} characterizes when a set of orthogonal states in $\bip{2}{n}$ can be perfectly discriminated by LOCC when Alice performs the first nontrivial measurement. It is also known that $\theta$-rotated domino states cannot be perfectly discriminated by LOCC (unless $\theta = 0$) \cite{Groisman}. Furthermore, the original domino states have inspired a construction of $n$-partite $d$-dimensional product bases that cannot be perfectly discriminated with LOCC \cite{NisetCerf}. 

The role of entanglement in perfect state discrimination by finite LOCC has also been considered. It is not possible to perfectly discriminate more than two Bell states by LOCC~\cite{Ghosh-Bell}. In fact, the same is true for any set of more than $n$ maximally entangled states in $\bips{n}$ \cite{Nathanson}. Multipartite states from an orthonormal basis can be perfectly discriminated by LOCC only if it is a product basis~\cite{Horodecki}. Also, no basis of the subspace orthogonal to a state with orthogonal Schmidt number 3 or greater can be perfectly discriminated by LOCC \cite{DuanSep}. On the other hand, any three orthogonal maximally entangled states in $\bips{3}$ can be perfectly discriminated by LOCC~\cite{Nathanson}. In fact, if the number of dimensions is not restricted, one can find arbitrarily large sets of orthogonal maximally entangled states that can be perfectly discriminated by LOCC~\cite{Fan}. Contrary to intuition, states with more entanglement can sometimes be discriminated perfectly with LOCC while their less entangled counterparts cannot \cite{Horodecki}. Generally, however, a set of orthogonal multipartite states $S\subset\C^D$ can be perfectly discriminated with LOCC only if $\abs{S}\leq\frac{D}{d(S)}$, where $d(S)$ measures the average entanglement of the states in $S$ \cite{HMMOV}.

It is known that local projective measurements are sufficient to discriminate states from an orthonormal product basis with LOCC \cite{Rinaldis, ChenLi-ProdBasis}. Moreover, there is a polynomial-time (cubic in $\max \set{d_A, d_B}$) algorithm for deciding if states from a given orthonormal product basis of $\bip{d_A}{d_B}$ can be perfectly discriminated with LOCC \cite{Rinaldis}. The state discrimination problem for \emph{incomplete} orthonormal sets (i.e., orthonormal sets of states that do not span the entire space) seems to be harder to analyze. However, unextendible product bases might be an exception (although commonly referred to as ``bases'' these are in fact incomplete orthonormal sets). It is known that states from an unextendible product basis cannot be perfectly discriminated by finite LOCC \cite{BDMSST}. In fact, the same holds for any basis of a subspace spanned by an unextendible product basis in $\C^2 \x \C^2 \x \C^2$~\cite{DuanXinYing}. Curiously, there are only two families of unextendible product bases in $\bips{3}$, one of which is closely related to the domino states \cite{DMSST}.

The problem of state discrimination with \emph{asymptotic} LOCC has been studied less. It is known that states from an unextendible orthonormal product set cannot be perfectly discriminated with LOCC even asymptotically \cite{Rinaldis}. Reference \cite{KKB} gives a necessary condition for perfect asymptotic LOCC discrimination, and also shows that for perfectly discriminating states from an orthonormal product basis, asymptotic LOCC gives no advantage over finite LOCC. The latter result implies that the algorithm from \cite{Rinaldis} also covers the asymptotic case. On the other hand, even in some very basic instances of state discrimination it remains unclear whether asymptotic LOCC is superior to finite LOCC (see \cite{DuanSep,KKB} for specific sets of states). 

Another line of study originating from \cite{IBM} aims at understanding the difference between the classes of separable and LOCC operations. To this end, \cite{Cohen} constructs an $r$-round LOCC protocol implementing an arbitrary separable measurement whenever such a protocol exists. A different approach is to exhibit quantitative gaps between the two classes. To the best of our knowledge, only two quantitative gaps other than that of \cite{IBM} are known. References \cite{Koashi07, Koashi09} demonstrate a gap between the success probabilities achievable by bipartite separable and LOCC operations for unambiguously discriminating $\ket{00}$ from a fixed rank-2 mixed state. The largest known difference between the two classes is a gap of 0.125 between the achievable success probabilities for \emph{tripartite} EPR pair distillation \cite{ChitambarSep}. Moreover, as the number of parties grows, the gap approaches 0.37 \cite{ChitambarSep}.

At a first glance one might think that the nonlocality without entanglement phenomenon is related to quantum discord. However, the quantum discord value cannot be used to determine whether states from a given ensemble can be discriminated with LOCC \cite{Discord}.

Finally, if a set of orthogonal (product or entangled) states cannot be perfectly discriminated by LOCC, one can measure their nonlocality by considering how much entanglement is needed to achieve perfect discrimination \cite{Cohen-entanglement,BBKW}.

\section{Framework} \label{sec:Framework}

In this section we introduce a framework for proving lower bounds on the error probability of any LOCC measurement for discriminating bipartite states from a given set
\begin{equation}
  S \defeq \set{\ket{\psi_1}, \dotsc, \ket{\psi_n}} \subset \bip{d_A}{d_B}.
\end{equation}
We make no assumptions about the states $\ket{\psi_i}$.  In particular, they need not be product states or be mutually orthogonal.

From now on, $\mc{P}$ denotes an arbitrary LOCC protocol for discriminating states from $S$. In rough outline our argument proceeds as follows:
\begin{enumerate}
  \item We modify $\mc{P}$ so that it can be stopped when a specific amount of information $\ve$ has been obtained (see Section~\ref{sec:Interpolation}). This is done by terminating the protocol prematurely and possibly making the last measurement less informative (see Section~\ref{sec:Stopping}).
  \item When the information gain is $\ve$, we lower bound a measure of disturbance (defined in Section~\ref{sec:Delta}) by $\eta \ve$ for some constant $\eta$ (see Section~\ref{sec:Epsilon delta}).
  \item We show that at least two of the possible initial states have become nonorthogonal at this stage of the protocol, and we infer a lower bound on the error probability of $\mc{P}$ (see Section~\ref{sec:LowerBd}).
\end{enumerate}

Our framework reuses some ideas of the original approach~\cite{IBM}. However, instead of mutual information, we quantify how much an LOCC protocol has learned about the state using error probability. This allows us to replace the long mutual information analysis in the original paper with a simple application of Helstrom's bound. The idea of relating information gain and disturbance also comes from~\cite{IBM}. Here, we analyze this tradeoff using the nonlocality constant (see Definition~\ref{def:eta}) which can be applied to any set of states. In Section \ref{sec:Tilings} we give a method for lower bounding the nonlocality constant that applies specifically when $S$ is an \emph{orthonormal basis} of $\bip{d_A}{d_B}$. In Section \ref{sec:Dominoes} we apply this method for the domino states and some other related bases.

\subsection{Interpolated LOCC protocol} \label{sec:Interpolation}

Consider an arbitrary node in the tree representing the protocol $\mc{P}$. Let $m$ be the corresponding measurement record and let \mbox{$A \x B$} denote the Kraus operator that is applied to the initial state when this node is reached. Note that the output dimensions of operators $A$ and $B$ could be arbitrary.

The initial state $\ket{\psi_k}$ yields measurement record $m$ with probability
\begin{equation}
  p(m|\psi_k)
  \defeq \tr \bigl[ (A \x B)\ct (A \x B) \ket{\psi_k} \bra{\psi_k} \bigr]
   = \bra{\psi_k} (a \x b) \ket{\psi_k}
\end{equation}
where $a \defeq A\ct A \in \Pos(\C^{d_A})$ and $b \defeq B\ct B \in \Pos(\C^{d_B})$. Note that we need not concern ourselves with the arbitrary output dimensions of $A$ and $B$ from this point onward.  We use Bayes's rule and the uniformity of the probabilities $p(\psi_k)$ to obtain the probability that the initial state was $\ket{\psi_k}$ conditioned on the measurement record being $m$:
\begin{equation}
 p(\psi_k|m)
  = \frac{p(\psi_k) p(m|\psi_k)}{\sum_{j=1}^n p(\psi_j) p(m|\psi_j)}
  = \frac{             \bra{\psi_k} (a \x b) \ket{\psi_k}}
         {\sum_{j=1}^n \bra{\psi_j} (a \x b) \ket{\psi_j}} \,.
  \label{eq:p}
\end{equation}
At the root, the measurement record $m$ is the empty string and $p(\psi_k|m) = \frac{1}{n}$ for all~$k$. As we proceed toward the leaves, these probabilities fluctuate away from $\frac{1}{n}$.  For example, if $\mc{P}$ discriminates the states perfectly, the distribution reaches a Kronecker delta function.

For a given node $m$ let us define
\begin{equation}
  p_{\max}(m) \defeq \max_{k \in [n]} p(\psi_k|m).
\end{equation}
Let $\ve \defeq p_{\max}(m) - \frac{1}{n}$. Then $\ve$ characterizes the uniformity of the distribution $p(\psi_k|m)$ and thus the amount of information learned about the input state. The next theorem shows that we can modify the protocol $\mathcal{P}$ so that it can be stopped when some but not too much information has been learned.
While this idea originates from \cite{IBM}, we use a specific result from \cite{KKB}.

\begin{theorem}[Kleinmann, Kampermann, Bru\ss{} \cite{KKB}]
Let $\mc{P}$ be an LOCC protocol for discriminating states from a set $S$ of size $n$. For any $\ve > 0$ there exists an LOCC protocol $\mc{P}_{\ve}$ that has the same success probability as $\mc{P}$, but each branch of $\mc{P}_{\ve}$ has a node $m$ such that either
\begin{align}
  p_{\max}(m) &= \frac{1}{n} + \ve
  & \text{or} &&
  p_{\max}(m) &< \frac{1}{n} + \ve \text{ and $m$ is a leaf of $\mc{P}$}.
  \label{eq:Stage1}
\end{align}
\end{theorem}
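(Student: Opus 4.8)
The plan is to construct $\mc{P}_\ve$ from $\mc{P}$ by a single top-down pass over the protocol tree, inserting an \emph{interpolating} measurement precisely at the nodes where $p_{\max}$ would otherwise jump past the threshold $\tfrac1n+\ve$. First I would make a harmless reduction: replace each measurement of $\mc{P}$ by a binary tree of two-outcome non-destructive measurements (a two-outcome split realizing $M_1$ against the remaining outcomes, followed recursively by a binary split of those), so that the Kraus operators at the leaves, and hence the realized POVM, are unchanged. From now on every measurement in the tree has two outcomes.

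Now take a node $u$ with $p_{\max}(u)<\tfrac1n+\ve$ at which Alice, say, applies $\{A_1,A_2\}$, and write $G$ for Alice's accumulated Kraus operator at $u$. For $c_1,c_2,d\ge 0$ with $c_1+c_2+d=1$, replace $\{A_1,A_2\}$ by the measurement $\{N_1,N_2\}$ with $N_j\ct N_j\defeq c_j\id+d\,A_j\ct A_j$; this family interpolates between the trivial measurement (some $c_j=1$, $d=0$, which leaves $p(\psi_j|\cdot)$ unchanged) and $\{A_1,A_2\}$ (at $d=1$). The conditional distribution at the child reached via outcome $j$ depends only on the ratio $c_j:d$, and — parametrizing $d=1/(1+r_1+r_2)$, $c_j=d\,r_j$ — the two ratios $r_1,r_2\in[0,\infty]$ are free and independent. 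So for each outcome $j$ separately I slide $r_j$ down from $\infty$ (the distribution of $u$) toward $0$ (the distribution of $u_j$) and stop the first time $p_{\max}$ of that child equals $\tfrac1n+\ve$, going all the way to $r_j=0$ if it never does. By continuity of $p_{\max}$ in $r_j$ and the intermediate value theorem, the children $v_1,v_2$ satisfy $p_{\max}(v_j)\le\tfrac1n+\ve$ — so the edge $u\to v_j$ does not jump the threshold — and $p_{\max}(v_j)=\tfrac1n+\ve$ exactly whenever outcome $j$'s child would have crossed.

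The delicate step is to \emph{finish} the interpolated measurement so that $\mc{P}_\ve$ realizes exactly the POVM of $\mc{P}$ (hence the same outcome distribution and success probability). Completing $v_j$ back to $u_j$ on its own is impossible, since $N_j\ct N_j$ carries the surplus $c_j\id$ and need not dominate $A_j\ct A_j$; and merely splitting a scalar multiple of $A_j\ct A_j$ off $v_j$ and returning the surplus sends the surplus back to the distribution of $u$, so the construction never terminates. Instead, out of each $v_j$ I apply one more measurement whose outcomes are routed into the two sub-protocols $\mc{T}_1,\mc{T}_2$ that $\mc{P}$ runs after $u_1,u_2$ — each preceded by a unitary correction that places the post-measurement state exactly where $\mc{P}$ would have it — realizing from $v_j$ a convex mixture of ``run $A_1$, then $\mc{T}_1$'' and ``run $A_2$, then $\mc{T}_2$'' with some weights $\lambda_j^1,\lambda_j^2$. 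A short computation using $A_2\ct A_2=\id-A_1\ct A_1$ pins down the only consistent weights in terms of $c_j,d$, checks that the resulting measurement at each $v_j$ is legitimate (its elements sum to $\id$), and shows that the weights balance across the two children, $\lambda_1^1+\lambda_2^1=\lambda_1^2+\lambda_2^2=1$; hence the subtree at $u$ contributes $G\ct A_1\ct T_1^{(o)}A_1 G+G\ct A_2\ct T_2^{(o)}A_2 G$ to every POVM element $E_o$, matching $\mc{P}$, where $T_j^{(o)}$ is the $o$-th POVM element of $\mc{T}_j$. Because the surplus is absorbed \emph{forward}, into $\mc{T}_1$ and $\mc{T}_2$, rather than bounced back to $u$, no extra rounds pile up.

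Finally I recurse into (copies of) $\mc{T}_1$ and $\mc{T}_2$: inside them each accumulated operator differs from the corresponding one in $\mc{P}$ only by an overall scalar on one party's side, so by scale-invariance of \eqref{eq:p} every $p(\psi_j|\cdot)$ — hence every value of $p_{\max}$ — is unchanged, and the same construction applies; it terminates because $\mc{P}$ has finite depth, so $\mc{P}_\ve$ is a finite-round LOCC protocol with the same success probability. Since no edge of $\mc{P}_\ve$ carries $p_{\max}$ from strictly below $\tfrac1n+\ve$ to strictly above it, on every branch $p_{\max}$ either attains the value $\tfrac1n+\ve$ at some node or stays below $\tfrac1n+\ve$ until a leaf, which is then also a leaf of $\mc{P}$ — this is \eqref{eq:Stage1}. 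I expect the finishing step to be the main obstacle: preserving the POVM \emph{exactly} with a bounded number of rounds forces the surplus $c_j\id$ to be spent on a prescribed mixture of the original branches, and one must verify at once that this mixture is a valid measurement at every newly created node and is globally POVM-preserving. (Non-invertible accumulated Kraus operators are a minor nuisance, dispatched by restricting everything to the relevant supports.)
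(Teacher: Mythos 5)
Your construction is correct and follows essentially the same route as the paper's proof idea (and the underlying argument of \cite{KKB}): dilute the offending measurement by interpolating $N_j\ct N_j = c_j\id + d\,A_j\ct A_j$ between the trivial and original measurement, stop at $p_{\max}=\frac1n+\ve$ by continuity, and then complete it with a second measurement that re-creates the original children's post-measurement states with total weights summing to one, exactly as depicted in Figure~\ref{fig:Interpolation}. Your explicit weights $\lambda_j^i=c_j+d\,\delta_{ij}$ and the check $\sum_j\lambda_j^i=1$ correctly fill in the details the paper defers to \cite{KKB}.
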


\begin{proof}[Proof idea]
Let $u$ be a node in the protocol tree of $\mc{P}$ and let $v_1, \dotsc, v_m$ be the children of $u$. Assume that for some $i$ we have
\begin{equation}
  p_{\max}(u) < \frac{1}{n} + \ve < p_{\max}(v_i),
  \label{eq:Jump}
\end{equation}
which means that the measurement outcome corresponding to the edge $(u,v_i)$ is too informative. To rectify this, we break up the measurement at node $u$ into two steps.  We represent the outcomes of the first measurement by new nodes $\tilde{v}_1, \dotsc, \tilde{v}_m$ while the outcomes of the second measurement lead to the original nodes $v_1, \dotsc, v_m$ (see Figure~\ref{fig:Interpolation}).

\begin{figure}[!ht]
\centering


\begin{tikzpicture}[
  circ/.style = {circle, draw = black, fill = gray!40, inner sep = 0mm, minimum size = 2mm},
  > = latex']

\def\w{1.5};
\def\h{1.5};

\def\tw{0.5};
\def\th{0.9};

\newcommand{\tree}[1]{
  \begin{scope}
    \def\d{0.5}
    \pgfsetcornersarced{\pgfpoint{2mm}{2mm}}
    \draw [dotted, ultra thick, gray!60] (\w+#1*\d,\h) -- (-\d,\h) -- (-\w+\d,-0.35) -- (-\w-\d-0.12,-0.35);
  \end{scope}

  \node (u) at (0,2*\h) [circ, label = $u$] {};

  \foreach \i in {1,2,3} {
    \pgfmathparse{\i-2};
    \let\x = \pgfmathresult;
    \node (v\i) at (\w*\x,0) [circ, label = left:$v_{\i}$] {};
    \draw (v\i) -- (\w*\x-\tw,-\th) -- (\w*\x+\tw,-\th) -- (v\i);
    \node at (\w*\x,-0.65) {$T_{\i}$};
  }
}

\begin{scope}[xshift = -3.5cm]
  \node at (-2,3.4) {Protocol $\mathcal{P}$:};
  \tree{-0.6}
  \foreach \i in {1,2,3}
    \draw [->] (u) -- (v\i);
\end{scope}

\node at (0,0.5*\h) {$\Longrightarrow$};

\newcommand{\shiftup}[1]{\underset{\phantom{\hat{\tilde{M}}}}{#1}}

\begin{scope}[xshift = 3.5cm]
  \node at (-2,3.4) {Protocol $\mathcal{P_{\varepsilon}}$:};
  \tree{1}
  \foreach \i in {1,2,3} {
    \pgfmathparse{\i-2};
    \let\x = \pgfmathresult;
    \node (w\i) at (\w*\x,\h) [circ, label = left:$\shiftup{\tilde{v}_{\i}}$] {};
    \draw [->] (u) -- (w\i);
  }

  \draw [->] (w1) -- (v1);
  \foreach \i in {2,3}
    \foreach \j in {1,2,3}
      \draw [->] (w\i) -- (v\j);
\end{scope}

\end{tikzpicture}

\caption{The protocol tree before (left) and after (right) splitting the measurement at node $u$ into two steps. (The graph on the right has been condensed for clarity, but it can be expanded into a tree by making a new copy of subtree $T_i$ for each incoming arc in $v_i$.) The amount of information learned in the first step is controlled by diluting the measurement operators, and the purpose of the second step is to complete the original measurement. The dotted line corresponds to the end of stage~I (see Definition \ref{def:stage1}).}
\label{fig:Interpolation}
\end{figure}

The first measurement interpolates between a completely uninformative trivial measurement and the original measurement at $u$. The interpolation parameters are chosen so that $p_{\max}(\tilde{v}_i) = \frac{1}{n} + \ve$ for all $i$ that satisfy Equation~(\ref{eq:Jump}). The second measurement depends on the outcome of the first measurement. It produces the same set of post-measurement states as the original measurement at $u$. Moreover, the total probability of obtaining each state is the same as in the case of the original measurement. After this we proceed according to the original protocol.

Protocol $\mc{P}_{\ve}$ is obtained from $\mc{P}$ by considering all branches of $\mc{P}$ and performing the above procedure at the closest node to the root that has a child satisfying Equation~(\ref{eq:Jump}). For more details see~\cite{KKB}.
\end{proof}

In the context of state discrimination, the possibility of interpolating a protocol to obtain some but not too much information is what distinguishes LOCC measurements from separable ones. In particular, a separable measurement for a set of states that cannot be distinguished by asymptotic LOCC cannot be divided into two steps, with the first yielding information precisely $\ve$ and the second completing the measurement (further details will be provided in a manuscript currently in preparation).

\subsection{Stopping condition} \label{sec:Stopping}

To control how much information the protocol has learned, we fix some $\ve > 0$ and stop the execution of $\mc{P}_{\ve}$ when we reach a node $m$ that satisfies the conditions in Equation~(\ref{eq:Stage1}).

\begin{definition}
We say that \emph{stage~I} of the protocol $\mc{P}_{\ve}$ is complete at the earliest point when Equation~(\ref{eq:Stage1}) is satisfied. \label{def:stage1}
\end{definition}

We choose $\ve < \frac{1}{n(n-1)}$ in our analysis.  Operationally, this means that none of the $n$ states has been eliminated at the end of stage~I, since 
\begin{equation}
  \min_{k \in [n]} p(\psi_k|m)
    \geq 1 - (n-1) p_{\max}(m)
    \geq \frac{1}{n} - (n-1) \ve > 0.
  \label{eq:pmin}
\end{equation}
This allows us to use Helstrom's bound to lower bound the probability of error (see Section~\ref{sec:LowerBd}).  It also ensures that the disturbance measure $\delta_{S}(a\x b)$ introduced in Section~\ref{sec:Delta} is well defined at $m$. All constraints imposed on the distribution $p(\psi_k|m)$ are summarized in Figure~\ref{fig:p}.

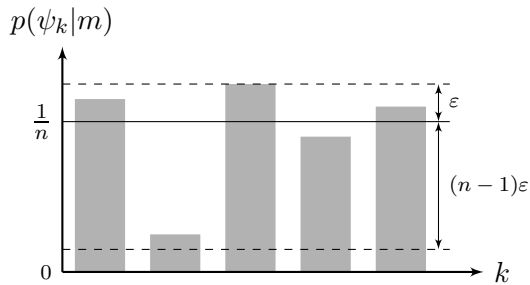
\begin{figure}[!ht]
\centering


\begin{tikzpicture}[> = latex', thick]

  \def\w{5}; 
  \def\h{3}; 
  \def\m{2}; 

  \def\e{0.5};
  \def\E{1.7};


  \draw[ycomb, color = gray!60, line width = 19] plot coordinates
  {(0.5, 2.3)
   (1.5, 0.5)
   (2.5, \m+\e)
   (3.5, 1.8)
   (4.5, 2.2)};


  \def\d{0.15} 

  \draw [        thin] (0,\m   ) to (\w+\d,\m);
  \draw [dashed, thin] (0,\m+\e) to (\w+\d,\m+\e);
  \draw [dashed, thin] (0,\m-\E) to (\w+\d,\m-\E);


  \path (0,\m) node [anchor = east] {$\frac{1}{n}$};
  \path (0, 0) node [anchor = east] {\scriptsize{$0$}};

  \def\x{\w};

  \draw [<->, thin] (\x,\m) -- (\x,\m+\e);
  \draw (\x,\m+0.5*\e) node [anchor = west] {\scriptsize{$\varepsilon$}};

  \draw [<->, thin] (\x,\m) -- (\x,\m-\E);
  \draw (\x,\m-0.5*\E) node [anchor = west] {\scriptsize{$(n-1)\varepsilon$}};


  \draw [->] (0, 0) -- (\w+4*\d,0) node [anchor = west ] {$k$};
  \draw [->] (0, 0) -- (0,\h) node [anchor = south] {$p(\psi_k|m)$};

\end{tikzpicture}

\caption{Probability distribution $p(\psi_k|m)$ at the end of stage~I. For all $k$ we have $\frac{1}{n} + \ve \geq p(\psi_k|m) \geq \frac{1}{n} - (n-1) \ve > 0$ where the first inequality is tight for some $k$.}
\label{fig:p}
\end{figure}

Since the error probability of the protocol $\mc{P}_{\ve}$ is a weighted average of error probabilities of individual branches, it suffices to lower bound these individual error probabilities.  For any branch that terminates without a node satisfying 
\begin{equation}
  p_{\max}(m) = \frac{1}{n} + \ve \,, 
  \label{eq:pmax}
\end{equation}
we can put a large lower bound on the error probability.  In particular, for the optimal choice $\ve = \frac{2}{3} \frac{1}{n(n-1)}$ of Theorem~\ref{thm:eta} with $n \geq 2$,
\begin{equation}
  p_{\error}(m)
  \geq 1 - p_{\max}(m)
  > 1 - \biggl( \frac{1}{n} + \ve \biggr)
  = 1 - \frac{1}{n} - \frac{2}{3} \frac{1}{n(n-1)}
  \geq \frac{1}{6}, 
\end{equation}
which is much higher than the lower bound we obtain for other branches.
We now consider the remaining case where stage~I ends with a node 
satisfying Equation~(\ref{eq:pmax}).

\subsection{Measure of disturbance} \label{sec:Delta}

Now we show that at least two possible post-measurement states $(A \x B)\ket{\psi_i}$ and $(A \x B) \ket{\psi_j}$ are nonorthogonal at the end of stage~I, and lower bound their overlap quantitatively. Assuming that the initial state was $\ket{\psi_i} \in S$, the normalized post-measurement state at the node with corresponding measurement operator $A \x B$ is
\begin{equation}
  \ket{\phi_i} \defeq \frac{\bigl( A \x B \bigr)\ket{\psi_i}}
                       {\sqrt{\bra{\psi_i} (a \x b) \ket{\psi_i}}}
\end{equation}
where $a \defeq A\ct A$ and $b \defeq B\ct B$. Note that  
$\bra{\psi_i} (a \x b) \ket{\psi_i} > 0$ for all $i \in [n]$ because, 
from Equations~(\ref{eq:pmin}) and~(\ref{eq:p}), 
$0 < \min_{k \in [n]} p(\psi_k|m)
   = \min_{k \in [n]}
     \frac{             \bra{\psi_k} (a \x b) \ket{\psi_k}}
          {\sum_{j=1}^n \bra{\psi_j} (a \x b) \ket{\psi_j}}$.

\begin{definition}
The \emph{disturbance} caused by the operator $a \x b$ on the set of states $S$ is defined as 
\label{def:delta}
\begin{equation}
  \delta_S(a \x b)
  \defeq \max_{i \neq j} \abs{\braket{\phi_i}{\phi_j}}
   = \max_{i \neq j}
     \frac{\abs {\bra{\psi_i} (a \x b) \ket{\psi_j}}}
          {\sqrt{\bra{\psi_i} (a \x b) \ket{\psi_i}
                 \bra{\psi_j} (a \x b) \ket{\psi_j}}}.
  \label{eq:delta}
\end{equation}
\end{definition}

Note that $\delta_S(a \x b)$ measures the nonorthogonality of the post-measurement states. If the initial states $\ket{\psi_i}$ were orthogonal then $\delta_S(a \x b)$ indeed characterizes the disturbance caused by $a \x b$.

Since $\braket{\phi_i}{\phi_j}$ can be expressed in terms of the operators $a = A^\dag A$ and $b = B^\dag B$, from now on we no longer explicitly use the measurement operators $A$ and $B$.

\subsection{Disturbance/information gain trade-off} \label{sec:Epsilon delta}

Now we define the {\em nonlocality constant} and show that it relates $\delta$ (the disturbance caused at the end of stage~I) to $\ve$ (the amount of information learned).

\begin{definition}
The \emph{nonlocality constant} of $S$ is the supremum over all $\eta$ such that
for all $a \in \Pos(\C^{d_A}), b \in \Pos(\C^{d_B})$ and for all $i$ satisfying $\bra{\psi_i} (a \x b) \ket{\psi_i} \neq 0$, 
\begin{equation}
  \eta \cdot \biggl(
    \frac{      \max_{k \in [n]} \bra{\psi_k} (a \x b) \ket{\psi_k}}
         {\;\;\;\sum_{j \in [n]} \bra{\psi_j} (a \x b) \ket{\psi_j}} - \frac{1}{n}
  \biggr)
  \leq \delta_S(a \x b) \,.
  \label{eq:eta}
\end{equation}
\end{definition}

\noindent Equivalently, if $G_{ij} \defeq \bra{\psi_i} (a \x b) \ket{\psi_j}$ for $i,j \in [n]$ then 
\begin{equation}
  \eta \defeq \inf_{a, b}
\left\{
\frac{
\max_{i \neq j} \dfrac{\abs{G_{ij}}}{\sqrt{G_{ii} G_{jj}}}
}{
\dfrac{\max_k G_{kk}}{\sum_{j=1}^n G_{jj}} - \dfrac{1}{n}
}
\right\}
\end{equation}
where the infimum is over all $a \in \Pos(\C^{d_A})$ and $b \in \Pos(\C^{d_B})$ such that $G_{ii} \neq 0$ for all $i \in [n]$.
\label{def:eta}

Recall from Section~\ref{sec:Stopping} that we stop the LOCC protocol at the end of stage~I in a node $m$ where the condition in Equation~(\ref{eq:pmax}) is satisfied for some $\ve \in \bigl( 0, \frac{1}{n(n-1)} \bigr)$. Let $a \x b$ be the operator corresponding to node $m$ and let $\delta \defeq \delta_S(a \x b)$ be the disturbance caused.

\begin{lemma}[Disturbance/information gain trade-off]
\label{lem:Epsilon delta}
The amount of information $\ve$ learned at the end of stage I lower bounds the disturbance $\delta$ as
\begin{equation}
  \eta \, \ve \leq \delta
  \label{eq:Epsilon delta}
\end{equation}
where $\eta$ is the nonlocality constant of $S$ (see Definition~\ref{def:eta}).
\end{lemma}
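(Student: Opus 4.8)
The statement is essentially a restatement of Definition~\ref{def:eta} for the particular operator reached at the end of stage~I, so the plan is simply to unpack the definitions. Recall from Section~\ref{sec:Stopping} that in the case under consideration stage~I ends at a node $m$ whose corresponding operator $a \x b$ satisfies $p_{\max}(m) = \frac1n + \ve$ for some $\ve \in \bigl(0,\tfrac{1}{n(n-1)}\bigr)$. Setting $G_{ij} \defeq \bra{\psi_i}(a \x b)\ket{\psi_j}$ as in Definition~\ref{def:eta}, Equation~(\ref{eq:p}) gives
\begin{equation}
  p_{\max}(m) = \max_{k \in [n]} p(\psi_k|m) = \frac{\max_k G_{kk}}{\sum_{j=1}^n G_{jj}},
\end{equation}
so that $\ve = \dfrac{\max_k G_{kk}}{\sum_{j} G_{jj}} - \dfrac1n$, while by Definition~\ref{def:delta} the disturbance caused at $m$ is $\delta = \delta_S(a \x b) = \max_{i \ne j} \dfrac{\abs{G_{ij}}}{\sqrt{G_{ii}G_{jj}}}$.

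First I would verify that the pair $(a,b)$ is admissible for the infimum defining $\eta$, i.e.\ that $G_{ii} \ne 0$ for every $i \in [n]$. This was already noted in Section~\ref{sec:Delta}: since $\ve < \tfrac{1}{n(n-1)}$, Equation~(\ref{eq:pmin}) gives $\min_k p(\psi_k|m) \ge \frac1n - (n-1)\ve > 0$, and combined with Equation~(\ref{eq:p}) this forces $G_{kk} > 0$ for all $k$. In particular $\delta$ is well defined and $(a,b)$ lies in the feasible set of Definition~\ref{def:eta}.

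It then remains only to invoke Definition~\ref{def:eta} with this admissible $(a,b)$: the quotient
\[
  \frac{\delta_S(a \x b)}{\dfrac{\max_k G_{kk}}{\sum_j G_{jj}} - \dfrac1n} = \frac{\delta}{\ve}
\]
belongs to the set whose infimum defines $\eta$, hence $\eta \le \delta/\ve$, and since $\ve > 0$ this rearranges to $\eta\,\ve \le \delta$, which is Equation~(\ref{eq:Epsilon delta}). (Equivalently, one may apply the first form of the definition, Equation~(\ref{eq:eta}), directly to this operator and to any index $i$ with $G_{ii} \ne 0$.)

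I do not expect a genuine obstacle in this step: all of the difficulty has been front-loaded, both into the construction of the interpolated protocol $\mc{P}_\ve$ (via the result of \cite{KKB}) and into the definition of the nonlocality constant $\eta$ itself. The only points needing care are bookkeeping ones: confirming that we are in the branch where stage~I terminates with $p_{\max}(m)$ equal to exactly $\frac1n + \ve$ — the complementary case having been disposed of in Section~\ref{sec:Stopping} — and confirming the admissibility condition $G_{ii} \ne 0$, which is precisely what makes both $\delta_S(a \x b)$ and the quotient in Definition~\ref{def:eta} meaningful.
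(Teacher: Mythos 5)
Your proposal is correct and follows essentially the same route as the paper: the paper's proof likewise just combines Equations~(\ref{eq:pmax}), (\ref{eq:p}), and~(\ref{eq:eta}) to conclude $\eta\,\ve \leq \delta$ immediately from the definitions. Your extra check that $G_{ii} \neq 0$ (so that $(a,b)$ is admissible and $\delta$ is well defined) is exactly the point the paper delegates to Equations~(\ref{eq:pmin}) and~(\ref{eq:p}) in Sections~\ref{sec:Stopping} and~\ref{sec:Delta}, so nothing is missing.
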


\begin{proof}
This immediately follows from the definitions of $\ve$ and $\eta$:
\begin{align}
  \eta \, \ve
  = \eta \biggl( \max_{k \in [n]} p(\psi_k|m) - \frac{1}{n} \biggr)
  = \eta \biggl(
           \frac{\max_{k \in [n]} \bra{\psi_k} (a \x b) \ket{\psi_k}}
                {\sum_{j=1}^n \bra{\psi_j} (a \x b) \ket{\psi_j}} - \frac{1}{n}
         \biggr)
  \leq \delta
\end{align}
where we have used Equations~(\ref{eq:pmax}), (\ref{eq:p}), and~(\ref{eq:eta}).
\end{proof}

\subsection{Lower bounding the error probability} \label{sec:LowerBd}

In this section we use Lemma~\ref{lem:Epsilon delta} to lower bound the error probability of any LOCC measurement for discriminating states from the set $S$.

Note that Equation~(\ref{eq:Epsilon delta}) together with the definition of $\delta$ implies that at the end of stage~I there are two distinct post-measurement states $\ket{\phi_i}$ and $\ket{\phi_j}$ such that 
\begin{equation}
  \abs{\braket{\phi_i}{\phi_j}} = \delta \geq \eta \, \ve.
\end{equation}
As discussed in Section~\ref{sec:Stopping}, our choice of $\ve$ guarantees that $p(\psi_i|m)$ and $p(\psi_j|m)$ are both strictly positive. Thus we can use the following result to lower bound the error probability:

\begin{fact*}[Helstrom bound~{\cite[pp.113]{Helstrom}}]
Suppose we are given state $\ket{\Phi_0}$ with probability $q_0$ and state $\ket{\Phi_1}$ with probability $q_1 = 1 - q_0$. Any measurement trying to discriminate the two cases errs with probability at least
\begin{equation}
  Q(q_0, q_1, \delta)
  \defeq \frac{1}{2} \bigl( 1 - \sqrt{1 - 4 q_0 q_1 \delta^2} \bigr)
  \geq q_0 q_1 \delta^2, \label{eq:Q}
\end{equation}
where $\delta = \abs{\braket{\Phi_0}{\Phi_1}}$ is the overlap between the two states, and the inequality follows from $1-\sqrt{1-x^2} \geq \frac{1}{2} x^2$ for $x \in [0,1]$.
\end{fact*}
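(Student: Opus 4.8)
The plan is to obtain the exact bound $Q(q_0,q_1,\delta)$ as the special case of the Holevo--Helstrom theorem for two pure states, and then verify the stated elementary inequality.

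\emph{Step 1: reduce to a trace-norm computation.} For a binary discrimination problem with priors $q_0,q_1$ and states $\rho_0=\ketbras{\Phi_0}$, $\rho_1=\ketbras{\Phi_1}$, a two-outcome POVM $\set{E_0,E_1}$ errs with probability $q_0\tr(E_1\rho_0)+q_1\tr(E_0\rho_1)$. Using $E_0+E_1=\id$ this equals $q_0+\tr(E_0\Delta)$ with $\Delta\defeq q_1\rho_1-q_0\rho_0$. Since $0\le E_0\le\id$, the minimum of $\tr(E_0\Delta)$ over valid $E_0$ is attained by the projector onto the negative eigenspace of $\Delta$ and equals minus the sum of absolute values of the negative eigenvalues of $\Delta$; together with $\tr\Delta=q_1-q_0$ this gives a minimum error probability of exactly $\tfrac12\bigl(1-\norm{\Delta}_1\bigr)$, where $\norm{\cdot}_1$ denotes the trace norm.

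\emph{Step 2: evaluate $\norm{\Delta}_1$.} The operator $\Delta$ is supported on $\spn\set{\ket{\Phi_0},\ket{\Phi_1}}$, so it is Hermitian of rank at most two. Writing $\ket{\Phi_1}=\delta\ket{\Phi_0}+\sqrt{1-\delta^2}\,\ket{\Phi_0^\perp}$ (absorbing phases) and expressing $\Delta$ as a $2\times 2$ matrix in the orthonormal basis $\set{\ket{\Phi_0},\ket{\Phi_0^\perp}}$, one computes $\tr\Delta=q_1-q_0$ and $\det\Delta=-q_0q_1(1-\delta^2)<0$. Since the determinant is negative the eigenvalues have opposite signs, hence $\norm{\Delta}_1=\abs{\lambda_+-\lambda_-}=\sqrt{(\tr\Delta)^2-4\det\Delta}=\sqrt{(q_1-q_0)^2+4q_0q_1(1-\delta^2)}$, which simplifies to $\sqrt{1-4q_0q_1\delta^2}$ using $(q_0+q_1)^2=1$. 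Substituting into Step 1 yields the claimed formula for $Q(q_0,q_1,\delta)$.

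\emph{Step 3: the elementary bound.} Put $x\defeq 2\sqrt{q_0q_1}\,\delta$; then $x\in[0,1]$ because $q_0q_1\le\tfrac14$ and $\delta\le 1$, so $Q(q_0,q_1,\delta)=\tfrac12\bigl(1-\sqrt{1-x^2}\bigr)\ge\tfrac14 x^2=q_0q_1\delta^2$, where the step $1-\sqrt{1-x^2}\ge\tfrac12 x^2$ is equivalent to $\sqrt{1-x^2}\le 1-\tfrac12 x^2$, which holds after squaring both nonnegative sides. There is no genuine obstacle here: this is a classical fact and every step is routine. The only points needing a word of care are that the optimal $E_0$ in Step 1 is a legitimate POVM element (it is a projector), and that $\det\Delta<0$, which is what makes $\abs{\lambda_+}+\abs{\lambda_-}$ coincide with $\abs{\lambda_+-\lambda_-}$. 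Alternatively, one could simply cite Helstrom for the exact expression and prove only the final inequality.
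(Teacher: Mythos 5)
Your derivation is correct. The paper itself does not prove this statement at all: it is stated as a Fact and attributed to Helstrom's book, with the only in-text justification being the elementary inequality $1-\sqrt{1-x^2}\geq\frac{1}{2}x^2$ for $x\in[0,1]$ that yields the final lower bound $q_0 q_1 \delta^2$ — exactly the step you verify in your Step 3 (and your last sentence, ``cite Helstrom for the exact expression and prove only the final inequality,'' is precisely what the paper does). What you add is a self-contained derivation of the exact expression $Q(q_0,q_1,\delta)$ via the Holevo--Helstrom trace-norm formula: the reduction of the minimum error to $\frac{1}{2}\bigl(1-\norm{q_1\rho_1-q_0\rho_0}_1\bigr)$, and the rank-two computation giving $\norm{\Delta}_1=\sqrt{(\tr\Delta)^2-4\det\Delta}=\sqrt{1-4q_0q_1\delta^2}$, are both standard and carried out correctly. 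One pedantic remark: $\det\Delta=-q_0q_1(1-\delta^2)$ is strictly negative only when $q_0q_1>0$ and $\delta<1$; at the boundary ($\delta=1$ or a zero prior) one eigenvalue vanishes and the identity $\norm{\Delta}_1=\sqrt{(\tr\Delta)^2-4\det\Delta}$ still holds, so the stated formula is unaffected — worth a half-sentence, but not a gap. Your argument buys self-containedness (no reliance on the cited textbook) at the cost of a short linear-algebra computation; the paper's choice to cite the result keeps the exposition focused on its new material.
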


As $\ve$ increases, the disturbance (thus the overlap between some $\ket{\phi_i}$ and $\ket{\phi_j}$) increases, but the lower bound on the probabilities $p(\psi_i|m)$ and $p(\psi_j|m)$ decreases.  The choice $\ve = \frac{2}{3}\frac{1}{n(n-1)}$ gives a lower bound on the error probability as follows.

\begin{theorem}
\label{thm:eta}
Let $S$ be a set of quantum states in $\bip{d_A}{d_B}$ of size $n \geq 2$. Any LOCC measurement for discriminating states drawn uniformly from $S$ errs with probability
\begin{equation}
  p_{\error} \geq \frac{2}{27} \, \frac{\eta^2}{n^5}
  \label{eq:perror}
\end{equation}
where $\eta$ is the nonlocality constant of $S$ (see Definition~\ref{def:eta}).
\end{theorem}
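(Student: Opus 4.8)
The plan is to combine the disturbance/information gain trade-off (Lemma~\ref{lem:Epsilon delta}) with the Helstrom bound (the Fact above), optimizing over the free parameter $\ve$. First I would fix $\ve \in \bigl(0, \frac{1}{n(n-1)}\bigr)$ and run the interpolated protocol $\mc{P}_{\ve}$, stopping each branch at the end of stage~I. As noted in Section~\ref{sec:Stopping}, branches that terminate \emph{without} hitting a node satisfying Equation~(\ref{eq:pmax}) have $p_{\error}(m) \geq 1 - p_{\max}(m) > 1 - \frac1n - \ve$, which for the eventual choice of $\ve$ is at least $\tfrac16$ — far larger than the bound we are aiming for — so these branches are harmless. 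For the remaining branches, stage~I ends at a node $m$ with operator $a \x b$ and $p_{\max}(m) = \frac1n + \ve$. By Lemma~\ref{lem:Epsilon delta} there exist $i \neq j$ with $\abs{\braket{\phi_i}{\phi_j}} = \delta_S(a\x b) \geq \eta\,\ve$, and by Equation~(\ref{eq:pmin}) both $q_i \defeq p(\psi_i|m)$ and $q_j \defeq p(\psi_j|m)$ are at least $\frac1n - (n-1)\ve > 0$.

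Next I would argue that once the protocol reaches node $m$, it can still only err on a conditional state-discrimination problem that includes distinguishing $\ket{\phi_i}$ from $\ket{\phi_j}$. Concretely, conditioned on reaching $m$, the posterior is $p(\psi_k|m)$, and the error probability of $\mc{P}_{\ve}$ restricted to this branch is at least the error of the best measurement distinguishing the two-hypothesis sub-ensemble $\{(\ket{\phi_i}, q_i'), (\ket{\phi_j}, q_j')\}$ obtained by conditioning on the index being $i$ or $j$; here $q_i' = q_i/(q_i+q_j)$ and $q_j' = q_j/(q_i+q_j)$, and the overall contribution picks up a factor $q_i + q_j$. Applying the Helstrom bound gives, for this branch,
\begin{equation}
  p_{\error}(m) \geq (q_i + q_j)\, Q(q_i', q_j', \delta) \geq (q_i+q_j)\, q_i' q_j' \delta^2 = \frac{q_i q_j}{q_i+q_j}\,\delta^2 \geq \frac{q_i q_j}{2}\,\delta^2,
\end{equation}
using $q_i + q_j \leq 2 p_{\max}(m) \leq 2(\frac1n + \ve) < 1$ for $n \geq 2$ once $\ve$ is small. (Even more simply one can just lower bound $p_{\error}(m) \geq q_i q_j \delta^2 / 2$ by noting the two states appear with the stated minimum probabilities among all $n$ hypotheses.) Plugging in $q_i, q_j \geq \frac1n - (n-1)\ve$ and $\delta \geq \eta\ve$ yields $p_{\error}(m) \geq \frac12 \bigl(\frac1n - (n-1)\ve\bigr)^2 \eta^2 \ve^2$ on every branch that ends via Equation~(\ref{eq:pmax}), and hence the same bound holds for $p_{\error}$ of $\mc{P}_{\ve}$, which equals $p_{\error}$ of $\mc{P}$.

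Finally I would optimize the resulting lower bound $f(\ve) = \frac{\eta^2}{2}\ve^2 \bigl(\frac1n - (n-1)\ve\bigr)^2$ over $\ve \in \bigl(0, \frac{1}{n(n-1)}\bigr)$. Writing $\ve = \frac{t}{n(n-1)}$ with $t \in (0,1)$, the bracket becomes $\frac1n(1-t)$, so $f = \frac{\eta^2}{2}\cdot\frac{t^2}{n^2(n-1)^2}\cdot\frac{(1-t)^2}{n^2}$, which is maximized at $t = \frac12$, i.e. $\ve = \frac{1}{2n(n-1)}$, giving $f = \frac{\eta^2}{2}\cdot\frac{1}{16 n^4 (n-1)^2} = \frac{\eta^2}{32 n^4 (n-1)^2}$. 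Since $(n-1)^2 \leq n^2$ this is $\geq \frac{\eta^2}{32 n^6}$ — slightly weaker than the claimed bound, which suggests the paper uses the sharper $Q(q_0,q_1,\delta) = \frac12(1-\sqrt{1-4q_0q_1\delta^2})$ rather than the crude $q_0 q_1 \delta^2$, or avoids the factor $\frac12$ above by a tighter branch argument; chasing the exact constants $\frac{2}{27}$ and the power $n^5$ (rather than $n^6$) is the main bookkeeping obstacle, and the stated $\ve = \frac23\frac{1}{n(n-1)}$ in Section~\ref{sec:Stopping} indicates the true optimization balances $1 - p_{\max}$ against the Helstrom term differently (likely optimizing $\min\{1 - \frac1n - \ve,\ \text{Helstrom bound}\}$ across the two branch types, not each separately). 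I would therefore redo the optimization treating both branch types simultaneously and track constants carefully; everything else is a direct assembly of the lemmas already proved.
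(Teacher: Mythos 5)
Your overall strategy is exactly the paper's (stop the interpolated protocol at the end of stage~I, handle the prematurely terminating branches separately, apply Helstrom to the two nonorthogonal post-measurement states, then optimize over $\ve$), but as written the argument does not reach the stated bound, and the gap is a genuine one rather than mere bookkeeping. The lossy step is the relaxation $\frac{q_i q_j}{q_i+q_j} \geq \frac{q_i q_j}{2}$: this bounds the denominator $q_i+q_j$ by a constant, when in fact $q_i + q_j \leq 2\bigl(\frac1n+\ve\bigr)$ is itself of order $\frac{1}{n}$. Since the quantity $\frac{p_0 p_1}{p_0+p_1}$ is increasing in each argument, its minimum over $p_0,p_1 \in \bigl[\frac1n-(n-1)\ve,\ \frac1n+\ve\bigr]$ is attained at $p_0=p_1=\frac1n-(n-1)\ve$, where it equals $\frac12\bigl(\frac1n-(n-1)\ve\bigr)$ --- \emph{one} factor of the small probability, not two. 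The branch error is therefore at least $\frac12\bigl(\frac1n-(n-1)\ve\bigr)(\eta\ve)^2$, and maximizing this over $\ve$ gives the optimum at $\ve=\frac23\frac{1}{n(n-1)}$ with value $\frac{2}{27}\frac{\eta^2}{n^3(n-1)^2} \geq \frac{2}{27}\frac{\eta^2}{n^5}$, which is precisely the claimed bound and explains the value of $\ve$ quoted in Section~\ref{sec:Stopping}. Your version, which squares the small probability, tops out at $\frac{\eta^2}{32\,n^4(n-1)^2}\geq\frac{\eta^2}{32\,n^6}$, weaker by a factor of order $n$.

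Your closing diagnosis of where the improvement comes from is also off: the paper uses the same crude relaxation $Q(q_0,q_1,\delta)\geq q_0q_1\delta^2$ of the Helstrom bound that you use, and it treats the two branch types separately exactly as you do (the leaf branches with $p_{\max}(m)<\frac1n+\ve$ contribute error at least $\frac16$ and are discarded); no joint balancing of the branch types is needed. The only change required in your proof is to keep the denominator $q_i+q_j$ (equivalently, retain the prefactor $(q_i+q_j)$ from conditioning and minimize $\frac{q_iq_j}{q_i+q_j}$ over the allowed interval rather than bounding $q_i+q_j$ by a constant), and then redo the single-variable optimization in $\ve$. The rest of your argument --- the reduction to the two-hypothesis sub-ensemble, the use of Lemma~\ref{lem:Epsilon delta} and Equation~(\ref{eq:pmin}) --- matches the paper's proof.
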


\begin{proof}
At the end of stage~I there are two post-measurement states $\ket{\Phi_0}$ and $\ket{\Phi_1}$ with overlap $\delta$. Let $p_0$ and $p_1$ be the posterior probabilities of these states. To lower bound the error probability of $\mc{P}_{\ve}$ (thus that of $\mc{P}$), we give Alice and Bob extra power at this point:
\begin{itemize}
  \item if the actual input state does not lead to $\ket{\Phi_0}$ or $\ket{\Phi_1}$, we assume that Alice and Bob succeed with certainty;
  \item otherwise Alice and Bob are allowed to perform the best joint measurement to discriminate the states $\ket{\Phi_0}$ and $\ket{\Phi_1}$.
\end{itemize}
For fixed $\ve$ and probabilities $p_0$ and $p_1$, we can lower bound the error probability by the following expression:
\begin{equation}
  P(p_0, p_1, \ve)
  \defeq (p_0 + p_1) \cdot Q
     \Bigl(
       \tfrac{p_0}{p_0 + p_1},
       \tfrac{p_1}{p_0 + p_1},
       \delta
     \Bigr).
\end{equation}
Using Equation~(\ref{eq:Q}) and the inequality $\delta \geq \eta \, \ve$ from Lemma~\ref{lem:Epsilon delta}, we get that
\begin{equation}
  P(p_0, p_1, \ve)
  \geq \frac{p_0 p_1}{p_0 + p_1} (\eta \, \ve)^2.
  \label{eq:p0 and p1}
\end{equation}

Recall that we stop the protocol at a point where we are guaranteed that $0 < \ve < \frac{1}{n(n-1)}$ and, by Equations (\ref{eq:pmin}) and (\ref{eq:pmax}),
\begin{equation}
  \frac{1}{n} - (n-1)\ve \leq p_i \leq \frac{1}{n} + \ve
\end{equation}
for all $i$. Given these constraints on $p_0$ and $p_1$, we can choose the $\ve$ that maximizes $P(p_0, p_1, \ve)$ and guarantee that the error probability in the branch of the LOCC protocol being considered satisfies
\begin{equation}
  p_{\error} \geq 
  \max_{\ve \in \bigl( 0, \frac{1}{n(n-1)} \bigr)} \;\;
  \min_{p_0, p_1 \in \bigl[ \frac{1}{n} - (n-1)\ve, \frac{1}{n} + \ve \bigr]} \;\;
  P(p_0, p_1, \ve).
\end{equation}
From Equation~(\ref{eq:p0 and p1}) we get
\begin{equation}
  p_{\error}  \geq 
  \max_{\ve \in \bigl( 0, \frac{1}{n(n-1)} \bigr)} \;\;
  \min_{p_0, p_1 \in \bigl[ \frac{1}{n} - (n-1)\ve, \frac{1}{n} + \ve \bigr]} \;\;
  \frac{p_0 p_1}{p_0 + p_1} (\eta \, \ve)^2.
\end{equation}
The minimum is attained when $p_0 = p_1 = \frac{1}{n} - (n-1) \ve$ (i.e., the probabilities are equal and as small as possible), so the problem simplifies to
\begin{equation}
  p_{\error} \geq
  \max_{\ve \in \bigl( 0, \frac{1}{n(n-1)} \bigr)}
    \frac{1}{2} \biggl( \frac{1}{n} - (n-1)\ve \biggr)
    (\eta \, \ve)^2 
  \geq \frac{2}{27} \frac{\eta^2}{n^3(n-1)^2}
  \geq \frac{2}{27} \frac{\eta^2}{n^5}
\end{equation}
where the value
\begin{equation}
  \ve = \frac{2}{3} \frac{1}{n(n-1)}
  \label{eq:epsilon}
\end{equation}
achieves the maximum.
\end{proof}

Theorem \ref{thm:eta} shows that any LOCC protocol for discriminating states from $S$ errs with probability proportional to $\eta^2$, justifying the name ``nonlocality constant.''

\section{Bounding the nonlocality constant} \label{sec:Tilings}

The framework described in Section~\ref{sec:Framework} reduces the problem of bounding the error probability for discriminating bipartite states by LOCC to the one of bounding the nonlocality constant $\eta$ (see Theorem~\ref{thm:eta}). This reduction holds for any set of pure states $S$. In this section we assume that $S$ is an \emph{orthonormal basis} of $\bip{d_A}{d_B}$ and provide tools for bounding the nonlocality constant. In particular, we bound $\eta$ in terms of another quantity that we call ``rigidity''.

For the remainder of the paper we represent pure states from $\bip{d_A}{d_B}$ using ``tiles'' in a $d_A \times d_B$ grid. We first introduce some notations related to tilings in Section~\ref{sec:Definitions}. Then we define rigidity and relate it to the nonlocality constant $\eta$ in Section~\ref{sec:Box}.  Section~\ref{sec:Pair of tiles} provides a tool, the ``pair of tiles'' lemma, that we use to bound rigidity for specific sets of states in Section~\ref{sec:Dominoes}.

\subsection{Definitions} \label{sec:Definitions}

Given a fixed orthonormal basis $\set{\ket{i} \colon i \in [d]}$, define the \emph{support} of a pure state $\ket{\psi} \in \C^d$ as 
\begin{equation}
  \supp \ket{\psi} \defeq \set{i \in [d] \colon \braket{i}{\psi} \neq 0}.
\end{equation}
If $\ket{\psi} \in \bip{d_A}{d_B}$ then $\supp \ket{\psi} \subseteq [d_A] \times [d_B]$. Consider $[d_A] \times [d_B]$ as a rectangular grid of size $d_A \times d_B$. Any region that corresponds to a submatrix of this grid is called a tile. More formally, a \emph{tile} is a subset $T \subseteq [d_A] \times [d_B]$ such that $T = R \times C$ for some $R \subseteq [d_A]$ and $C \subseteq [d_B]$. (Note that a tile is not necessarily a contiguous region of the grid.) We use $\rows(T) = R$ and $\cols(T) = C$ to denote the \emph{rows} and \emph{columns} of this tile, respectively, and we use $\abs{T}$ to denote the \emph{size} or the \emph{area} of $T$. If $\ket{\psi} = \ket{\alpha} \ket{\beta}$ is a product state, then $\supp \ket{\psi} = \supp \ket{\alpha} \times \supp \ket{\beta}$ and thus $\supp \ket{\psi}$ is a tile, which we call the \emph{tile induced by} $\ket{\psi}$.

We say that an orthonormal set of product states $S \subset \bip{d_A}{d_B}$ \emph{induces a tiling} of a $d_A \times d_B$ grid if the tiles induced by the states in $S$ are either disjoint or identical. Note that if $S$ is an orthonormal basis of $\bip{d_A}{d_B}$, then a tile of area $L$ is induced by $L$ states that form a \emph{basis} of that tile. In a \emph{domino-type} tiling, every tile has area $1$ or $2$.

For a given tiling $T$ of a $d_A \times d_B$ grid let us define the corresponding \emph{row graph} as follows: its vertex set is $[d_A]$ with two vertices $i$ and $j$ adjacent if and only if there exists a column $c$ such that $(i,c)$ and $(j,c)$ belong to the same tile. The \emph{column graph} of a tiling is defined similarly. We say that a tiling is \emph{irreducible} if its row graph and its column graph are both connected. The \emph{diameter} of the tiling $T$ is the maximum of the diameters of its row and column graphs.  See Figure \ref{fig:Domino-type} for an example.

\begin{figure}[!ht]
\centering


\def\step{20pt} 

\begin{tikzpicture}[
  domino/.style = {rectangle, rounded corners = 0.2*\step, draw = black!95, fill = black!10},
  circ/.style = {circle, draw = black, fill = black, inner sep = 0mm, minimum size = 0.13*\step},
  gridlines/.style = {gray, semithick}
]

  \newcommand{\domino}[4]{
    \node[semithick, domino, minimum height = #3*\step + 0.8*\step, minimum width = #4*\step + 0.8*\step]
          at (#1*\step, #2*\step) {};
    \draw[semithick]
         (#1*\step - #4*\step/2, #2*\step - #3*\step/2) --
         (#1*\step + #4*\step/2, #2*\step + #3*\step/2);
  }

  \newcommand{\hdomino}[2]{
    \pgfmathparse{#1+1.0};
    \let\x = \pgfmathresult;
    \pgfmathparse{#2+0.5};
    \let\y = \pgfmathresult;
    \domino{\x}{\y}{0}{1}
  }

  \newcommand{\vdomino}[2]{
    \pgfmathparse{#1+0.5};
    \let\x = \pgfmathresult;
    \pgfmathparse{#2+1.0};
    \let\y = \pgfmathresult;
    \domino{\x}{\y}{1}{0}
  }

  \begin{scope}[fill = white]

    \fill[clip] (0,0) rectangle (4*\step,4*\step);

    \draw[step = \step, gridlines] (0,0) grid (4*\step, 4*\step);

    \hdomino{0}{0}
    \hdomino{2}{2}
    \hdomino{1}{3}

    \hdomino{3}{1}
    \hdomino{-1}{1}

    \vdomino{2}{0}
    \vdomino{1}{1}
    \vdomino{0}{2}

    \vdomino{3}{-1}
    \vdomino{3}{3}

  \end{scope}

  \draw[gridlines] (0,0) -- (4*\step, 0) -- (4*\step,4*\step) -- (0, 4*\step) -- cycle;

  \foreach \i in {0,1,2,3}{
    \node (r\i) at (-\step/2,\i*\step+\step/2) [circ] {};
    \node (c\i) at (\i*\step+\step/2,4*\step+\step/2) [circ] {};
    \foreach \j in {0,1,2,3}{
      \node at (\i*\step+\step/2,\j*\step+\step/2) [circ] {};
    }
  };

  \def\r{0.6*\step} 

  \draw[semithick] (r0) -- (r1) -- (r2) -- (r3) .. controls +(200:\r) and +(160:\r) .. (r0);
  \draw[semithick] (c0) -- (c1) -- (c2) -- (c3) .. controls +(110:\r) and +( 70:\r) .. (c0);
\end{tikzpicture}

\caption{A domino-type tiling and the corresponding row and column graphs. This tiling is irreducible and has diameter two.}
\label{fig:Domino-type}
\end{figure}

Without loss of generality we consider only irreducible tilings.  Reducible tilings can be broken down into several smaller components without disturbing the underlying states. To do this, both parties simply perform a projective measurement with respect to the subspaces corresponding to the different components of the row and column graphs.

Note that in general, a tiling is not invariant under local unitaries.  In particular, the irreducibility of the tiling induced by a given set of states is a basis-dependent property. The most extreme example of this phenomenon is the case of the standard basis. It induces a completely reducible tiling that consists only of $1 \times 1$ tiles. However, if both parties apply a generic local unitary transformation, the resulting tiling consists only of a single tile of maximal size.

\subsection{Lower bounding the nonlocality constant using rigidity} \label{sec:Box}

In this section we assume that $S$ is an orthonormal basis of $\bip{d_A}{d_B}$ (so in particular, $n = d_A d_B$) and discuss a particular strategy for lower bounding $\eta$ for such $S$. We apply this strategy to several sets of orthonormal product bases in Section~\ref{sec:Dominoes}.

We bound $\eta$ (quantifying a disturbance/strength tradeoff) by considering a quantitative property of the set $S$ called {\em rigidity}.  Intuitively, we call a measurement operator strong if it is far from being proportional to the identity matrix; a set of states $S$ is rigid if there exists a strong measurement that leaves the set undisturbed.  We formalize this as follows (recall that $\norm{\cdot}_{\max}$ denotes the largest entry of a matrix in absolute value):

\begin{definition}
For an orthonormal basis $S$, if there is a constant $c$ such that for all $a \in \Pos(\C^{d_A})$, $b \in \Pos(\C^{d_B})$ and for all $i$ such that $\bra{\psi_i} (a \x b) \ket{\psi_i} \neq 0$, 
\begin{equation}
  \norm{\frac{a \x b}{\tr(a \x b)} - \frac{\id}{n}}_{\max} \!\! \leq \, c \cdot \delta_S(a \x b),
  \label{eq:Rigidity}
\end{equation}
we say $S$ is \emph{$c$-rigid}, or $c$ is an upper bound on the rigidity of $S$.  
\label{def:Rigidity}
\end{definition}

When $S$ is rigid, the states can remain unchanged despite application of a strong measurement.  For example, a tensor product basis is not $c$-rigid for any finite $c$ (i.e., such a basis is arbitrarily rigid).  In contrast, if $c$ is small, then \emph{any} strong measurement disturbs the set $S$, and Equation~(\ref{eq:Rigidity}) quantifies how weak a measurement operator $a \x b$ must be for the disturbance $\delta_S(a \x b)$ to be small.  

We now relate upper bounds on the rigidity of $S$ to lower bounds on its nonlocality constant:

\begin{lemma}\label{lem:ceta}
Let $S$ be an orthonormal basis of $\bip{d_A}{d_B}$. If $S$ is $c$-rigid then 
\begin{equation}
  \eta \geq \frac{1}{cL}.
  \label{eq:eta bound}
\end{equation}
where $L$ is the size of the largest tile corresponding to states in $S$.
\end{lemma}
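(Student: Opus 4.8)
The plan is to verify directly that the value $\eta = \tfrac{1}{cL}$ is admissible in Definition~\ref{def:eta}; since $\eta$ is the supremum of all admissible values, this gives the bound. Fix $a \in \Pos(\C^{d_A})$, $b \in \Pos(\C^{d_B})$ with $G_{ii} = \bra{\psi_i}(a\x b)\ket{\psi_i} \neq 0$ for all $i$, and write $\delta \defeq \delta_S(a \x b)$. Because $S$ is an orthonormal basis, $\sum_j \ketbras{\psi_j} = \id$, so $\sum_j G_{jj} = \tr(a \x b)$. Put $N \defeq \frac{a\x b}{\tr(a\x b)} - \frac{\id}{n}$. Since each $\ket{\psi_k}$ is a unit vector, for every $k$
\begin{equation}
  \frac{G_{kk}}{\sum_j G_{jj}} - \frac1n
  = \bra{\psi_k}\Bigl(\tfrac{a\x b}{\tr(a\x b)} - \tfrac{\id}{n}\Bigr)\ket{\psi_k}
  = \bra{\psi_k} N \ket{\psi_k}.
\end{equation}

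Next I would bound $\bra{\psi_k} N \ket{\psi_k}$ by $\norm{N}_{\max}$ using that $\ket{\psi_k}$ is a product state supported on the tile $T_k$ it induces, with $\abs{T_k} \le L$. Writing $\ket{\psi_k} = \sum_{p \in T_k} \psi_k(p)\ket{p}$ over grid positions $p$, with $\sum_p \abs{\psi_k(p)}^2 = 1$, the triangle inequality and Cauchy--Schwarz give
\begin{equation}
  \Abs{\bra{\psi_k} N \ket{\psi_k}}
  \le \norm{N}_{\max} \Bigl(\sum_{p \in T_k}\abs{\psi_k(p)}\Bigr)^{\!2}
  \le \norm{N}_{\max}\,\abs{T_k}
  \le L\,\norm{N}_{\max}.
\end{equation}
Combining this with the previous display at a maximizing index $k$, and then invoking $c$-rigidity (Definition~\ref{def:Rigidity}), namely $\norm{N}_{\max} \le c\,\delta$, yields
\begin{equation}
  \frac{\max_k G_{kk}}{\sum_j G_{jj}} - \frac1n \;\le\; L\,\norm{N}_{\max} \;\le\; cL\,\delta_S(a\x b).
\end{equation}

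This is exactly the inequality in Equation~(\ref{eq:eta}) with $\eta$ replaced by $\tfrac{1}{cL}$ (note the left-hand side is nonnegative since a maximum dominates the average, and the inequality is vacuous when it is zero). As the argument is valid for every admissible pair $a,b$, the constant $\tfrac{1}{cL}$ is admissible, so $\eta \ge \tfrac{1}{cL}$. I do not anticipate a real obstacle: the only substantive step is the Cauchy--Schwarz estimate $\Abs{\bra{\psi_k}N\ket{\psi_k}} \le \abs{T_k}\norm{N}_{\max}$, together with the observation $\sum_j G_{jj} = \tr(a\x b)$ coming from $S$ being a basis. The one point to state carefully is that each $\ket{\psi_k}$ is a product state supported on a tile of area at most $L$, which is what makes $L$ meaningful and lets it dominate every $\abs{T_k}$.
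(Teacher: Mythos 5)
Your proposal is correct and follows essentially the same route as the paper: both rewrite the information-gain term as $\bra{\psi_k}\bigl(\tfrac{a\x b}{\tr(a\x b)}-\tfrac{\id}{n}\bigr)\ket{\psi_k}$ using $\sum_j G_{jj}=\tr(a\x b)$ for an orthonormal basis, bound this by $L$ times the max-norm via the tile support of $\ket{\psi_k}$, and then invoke $c$-rigidity. The only difference is cosmetic: the paper writes the deviation as $cM\,\delta_S(a\x b)$ with $\norm{M}_{\max}\leq 1$ and asserts $\bra{\psi_k}M\ket{\psi_k}\leq L$, whereas you spell out that estimate with the Cauchy--Schwarz step.
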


\begin{proof}
If $S$ is $c$-rigid, then for any $a \in \Pos(\C^{d_A})$ and $b \in \Pos(\C^{d_B})$ (such that $\bra{\psi_k} (a \x b) \ket{\psi_k} \neq 0$ for all $k \in [n]$), we have
\begin{equation}
  \frac{a \x b}{\tr(a \x b)} - \frac{\id}{n} = c M \cdot \delta_S(a \x b)
\end{equation}
for some Hermitian matrix $M \in \Lin(\bip{d_A}{d_B})$ with $\norm{M}_{\max} \leq 1$. From this we get
\begin{align}
  \max_{k \in [n]} \bra{\psi_k}  \frac{a \x b}{\tr (a \x b)} \ket{\psi_k}
  - \frac{1}{n}
  &= c \max_{k \in [n]} \bra{\psi_k} M \ket{\psi_k} \cdot \delta_S(a \x b) \\
  &\leq cL \cdot \delta_S(a \x b).
\end{align}
By the definition of $\eta$ (Equation~(\ref{eq:eta})) and the fact that $\tr (a \x b) = \sum_{j \in [n]} \bra{\psi_j} (a \x b) \ket{\psi_j}$ for any orthonormal basis $S$, we get the desired inequality.
\end{proof}

Putting Lemma~\ref{lem:ceta} and Theorem~\ref{thm:eta} together gives the following:
 
\begin{theorem}
\label{thm:cL}
Let $S$ be an orthonormal basis of $\bip{d_A}{d_B}$. If $S$ is $c$-rigid then any LOCC measurement for discriminating states from $S$ errs with probability
\begin{equation}
  p_{\error} \geq \frac{2}{27} \, \frac{1}{(cL)^2 n^5}
\end{equation}
where $L$ is the size of the largest tile of $S$.
\end{theorem}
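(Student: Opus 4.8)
The plan is to simply chain together the two results already proved in the excerpt, namely Lemma~\ref{lem:ceta} and Theorem~\ref{thm:eta}. Since the statement of Theorem~\ref{thm:cL} assumes exactly the hypotheses of Lemma~\ref{lem:ceta} ($S$ an orthonormal basis of $\bip{d_A}{d_B}$ that is $c$-rigid), the first step is to invoke Lemma~\ref{lem:ceta} to conclude that the nonlocality constant satisfies $\eta \geq 1/(cL)$, where $L$ is the size of the largest tile corresponding to states in $S$. Note in particular that $cL > 0$, so this is a genuine positive lower bound on $\eta$.

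Next I would apply Theorem~\ref{thm:eta}. Since $S$ is an orthonormal basis of $\bip{d_A}{d_B}$, we have $n = d_A d_B \geq 2$ (here one should note that the degenerate case $d_A = d_B = 1$ gives $n = 1$, in which there is a single state and discrimination is trivial; the interesting case is $n \geq 2$, which is implicitly assumed). Theorem~\ref{thm:eta} then gives that any LOCC measurement for discriminating states drawn uniformly from $S$ errs with probability $p_{\error} \geq \frac{2}{27} \frac{\eta^2}{n^5}$. Substituting the bound $\eta \geq 1/(cL)$ and using that $x \mapsto x^2$ is increasing on $[0,\infty)$, we obtain
\begin{equation}
  p_{\error} \geq \frac{2}{27} \, \frac{\eta^2}{n^5} \geq \frac{2}{27} \, \frac{1}{(cL)^2 n^5},
\end{equation}
which is precisely the claimed inequality.

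There is really no obstacle here: the theorem is a direct corollary, and the proof is a two-line composition of earlier results, as the surrounding text ("Putting Lemma~\ref{lem:ceta} and Theorem~\ref{thm:eta} together gives the following") already signals. The only minor point to be careful about is the monotonicity step — one must have $\eta \geq 0$ (equivalently that $1/(cL)$ is a valid, nonnegative lower bound) to pass from $\eta \geq 1/(cL)$ to $\eta^2 \geq 1/(cL)^2$, and this is immediate since $c$ and $L$ are positive. One could also remark that if $\eta = \infty$ (i.e.\ $S$ is not $c$-rigid for any finite $c$, as for a tensor product basis) the bound is vacuous, which is consistent with the fact that such bases can be discriminated perfectly by LOCC.
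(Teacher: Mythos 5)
Your proposal is correct and is exactly how the paper obtains Theorem~\ref{thm:cL}: it is stated as a direct consequence of combining Lemma~\ref{lem:ceta} (giving $\eta \geq 1/(cL)$) with Theorem~\ref{thm:eta}, with $n = d_A d_B$. Your added remarks on monotonicity and the vacuous case of arbitrarily rigid bases are fine but not needed.
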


\subsection{The ``pair of tiles'' lemma} \label{sec:Pair of tiles}

In this section we present a lemma that serves as our main tool for bounding rigidity.

\begin{lemma}
\label{claim:UV}
Let $U \in \U(m), V \in \U(n)$, and define $\ket{\varphi_i} \defeq U \ket{i}$ for $i \in [m]$ and $\ket{\psi_j} \defeq V \ket{j}$ for $j \in [n]$. Then for any $M \in \Lin(\C^n, \C^m)$ we have
\begin{equation}
  \sqrt{mn} \cdot
  \max_{i,j}
  \abs{\bra{\varphi_i} M \ket{\psi_j}} \geq
  \max_{k,l}
  \abs{M_{kl}}.
\end{equation}
\end{lemma}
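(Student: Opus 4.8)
The plan is to observe that the inequality is really a statement about change of orthonormal bases. Let me set $\ket{\varphi_i} = U\ket{i}$ and $\ket{\psi_j} = V\ket{j}$; since $U, V$ are unitary, the families $\{\ket{\varphi_i}\}_{i\in[m]}$ and $\{\ket{\psi_j}\}_{j\in[n]}$ are orthonormal bases of $\C^m$ and $\C^n$ respectively. The quantity $\bra{\varphi_i} M \ket{\psi_j}$ is the $(i,j)$ entry of $M$ expressed in these rotated bases, i.e.\ of $U\ct M V$, while $M_{kl} = \bra{k}M\ket{l}$. So the claim is: the max-norm of a matrix can only shrink by at most a factor $\sqrt{mn}$ when we conjugate by unitaries on the left and right.

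The key step is to expand a fixed standard matrix entry $M_{kl}$ in terms of the rotated entries. Writing $\ket{k} = \sum_i \braket{\varphi_i}{k}\ket{\varphi_i}$ and $\ket{l} = \sum_j \braket{\psi_j}{l}\ket{\psi_j}$, we get
\begin{equation}
  M_{kl} = \bra{k} M \ket{l} = \sum_{i,j} \braket{k}{\varphi_i}\bra{\varphi_i}M\ket{\psi_j}\braket{\psi_j}{l}.
\end{equation}
Now apply the triangle inequality and bound each $\abs{\bra{\varphi_i}M\ket{\psi_j}} \leq \max_{i,j}\abs{\bra{\varphi_i}M\ket{\psi_j}} =: \mu$. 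This yields
\begin{equation}
  \abs{M_{kl}} \leq \mu \sum_{i,j} \abs{\braket{k}{\varphi_i}}\,\abs{\braket{\psi_j}{l}}
  = \mu \Bigl(\sum_{i}\abs{\braket{k}{\varphi_i}}\Bigr)\Bigl(\sum_{j}\abs{\braket{\psi_j}{l}}\Bigr).
\end{equation}
To finish, I bound each of the two sums by Cauchy--Schwarz: $\sum_i \abs{\braket{k}{\varphi_i}} \leq \sqrt{m}\,\bigl(\sum_i \abs{\braket{k}{\varphi_i}}^2\bigr)^{1/2} = \sqrt{m}$, using that $\{\ket{\varphi_i}\}$ is an orthonormal basis so $\sum_i\abs{\braket{k}{\varphi_i}}^2 = \norm{\ket{k}}^2 = 1$; similarly $\sum_j\abs{\braket{\psi_j}{l}} \leq \sqrt{n}$. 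Hence $\abs{M_{kl}} \leq \sqrt{mn}\,\mu$ for every $k,l$, and taking the max over $k,l$ gives exactly the claimed inequality $\sqrt{mn}\cdot\max_{i,j}\abs{\bra{\varphi_i}M\ket{\psi_j}} \geq \max_{k,l}\abs{M_{kl}}$.

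There is no real obstacle here — the argument is a routine combination of basis expansion, the triangle inequality, and Cauchy--Schwarz. The only point requiring a moment's care is making sure the $\ell_1$-to-$\ell_2$ bound is applied to the correct vector of coefficients (the row $(\braket{k}{\varphi_i})_i$ has unit $\ell_2$ norm because $U$ is unitary, equivalently because $\{\ket{\varphi_i}\}$ is a complete orthonormal system), and keeping track of which dimension, $m$ or $n$, each $\sqrt{\cdot}$ factor comes from. Note also that the hypothesis that $U$ and $V$ are genuinely unitary (as opposed to merely isometric) is used only through completeness of the two bases; in fact the same proof works verbatim as long as $\{\ket{\varphi_i}\}$ and $\{\ket{\psi_j}\}$ are orthonormal bases of their respective spaces.
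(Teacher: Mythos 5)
Your proof is correct: the expansion $M_{kl}=\sum_{i,j}\braket{k}{\varphi_i}\bra{\varphi_i}M\ket{\psi_j}\braket{\psi_j}{l}$ is valid because the rotated families are orthonormal bases, the triangle inequality and the two Cauchy--Schwarz (i.e.\ $\ell_1$-versus-$\ell_2$) bounds on the unit-norm coefficient vectors $(\braket{k}{\varphi_i})_i$ and $(\braket{\psi_j}{l})_j$ are applied properly, and the factors $\sqrt{m}$ and $\sqrt{n}$ combine to give exactly the claimed $\sqrt{mn}$. The route is genuinely different from the paper's: the paper vectorizes $M$ via the $\vec$ map, uses $\vec(AXB)=(A\x B\tp)\vec(X)$, and runs a norm chain $\norm{\cdot}_\infty \geq \frac{1}{\sqrt{mn}}\norm{\cdot}_2$ together with unitary invariance of the $2$-norm of $\vec(M)$ (the Frobenius norm), followed by $\norm{\cdot}_2 \geq \norm{\cdot}_\infty$; yours fixes a single standard entry and bounds it directly by basis expansion. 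The paper's argument implicitly yields the slightly stronger intermediate bound $\max_{i,j}\abs{\bra{\varphi_i}M\ket{\psi_j}} \geq \frac{1}{\sqrt{mn}}\bigl(\sum_{k,l}\abs{M_{kl}}^2\bigr)^{1/2}$, while yours is more elementary (no vectorization machinery), makes transparent how the factor splits as $\sqrt{m}\cdot\sqrt{n}$ between the two sides, and, as you note, works verbatim whenever the two families are orthonormal bases; both arguments are equally adequate for how the lemma is used later (via Lemma~\ref{claim:Regions}).
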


The main idea of the proof is that a unitary change of basis can only increase the largest entry of a vector by a multiplicative factor depending on the dimension of the vector.

\begin{proof}
Let us define a mapping $\vec \colon \Lin(\C^n, \C^m) \to \bip{n}{m}$ as 
\begin{equation}
  \vec \colon \ket{i} \bra{j} \mapsto \ket{i} \ket{j} 
\end{equation}
for $i \in [m]$ and $j \in [n]$ and extend it by linearity over $\C$. One can check that $\vec(A X B) = (A \x B\tp) \vec(X)$. Using this and basic inequalities between the $2$-norm and the $\infty$-norm, we get
\begin{align}
  \max_{i,j} \abs{\bra{\varphi_i} M \ket{\psi_j}}
  &=    \biggl\| \vec \Bigl( \sum_{i,j} \bra{\varphi_i} M \ket{\psi_j} \ket{i} \bra{j} \Bigr) \biggr\|_{\infty} \\
  &=    \biggl\| \vec \Bigl( \sum_{i,j} \bra{i} U\ct M V \ket{j} \ket{i} \bra{j} \Bigr) \biggr\|_{\infty} \\
  &=    \bigl\| \vec (U\ct M V) \bigr\|_{\infty} \\
  &=    \bigl\| (U\ct \x V\tp) \vec(M) \bigr\|_{\infty} \\
  &\geq \frac{1}{\sqrt{mn}} \bigl\| (U\ct \x V\tp) \vec(M) \bigr\|_2 \\
  &=    \frac{1}{\sqrt{mn}} \bigl\| \vec(M) \bigr\|_2 \\
  &\geq \frac{1}{\sqrt{mn}} \bigl\| \vec(M) \bigr\|_{\infty} \\
  &=    \frac{1}{\sqrt{mn}} \max_{k,l} \abs{M_{kl}},
\end{align}
as desired.
\end{proof}

Let us restate Lemma~\ref{claim:UV} using the language of tilings:

\begin{lemma}\label{claim:Regions}
Let $R_1, R_2 \subseteq [d_A] \times [d_B]$ be two arbitrary regions of a $d_A \times d_B$ grid, and $\set{\ket{\varphi_i}}_{i=1}^{\abs{R_1}}$ and $\set{\ket{\psi_j}}_{j=1}^{\abs{R_2}} \subset \bip{d_A}{d_B}$ be their bases (here $\ket{\varphi_i}$ and $\ket{\psi_j}$ need not be product states). Then for any matrices $a \in \Lin(C^{d_A})$ and $b \in \Lin(C^{d_B})$ we have
\begin{equation}
  \sqrt{\abs{R_1} \cdot \abs{R_2}} \;
  \max_{i,j} \abs{\bra{\varphi_i} (a \x b) \ket{\psi_j}}
  \; \geq \max_{\substack{(r_1,c_1) \in R_1\\(r_2,c_2)\in R_2}}
  \abs{a_{r_1 r_2}} \cdot \abs{b_{c_1 c_2}}.
\end{equation}
\end{lemma}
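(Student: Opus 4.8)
The plan is to deduce Lemma~\ref{claim:Regions} directly from Lemma~\ref{claim:UV} by a suitable embedding. The key observation is that a ``region'' $R \subseteq [d_A]\times[d_B]$ together with a basis $\set{\ket{\varphi_i}}$ of the corresponding subspace is exactly the kind of data to which Lemma~\ref{claim:UV} applies, once we extend each basis to a full orthonormal basis of $\bip{d_A}{d_B}$ and package the vectors as columns of a unitary. So first I would set $N \defeq d_A d_B$ and extend $\set{\ket{\varphi_i}}_{i=1}^{\abs{R_1}}$ to an orthonormal basis $\set{\ket{\varphi_i}}_{i=1}^{N}$ of $\bip{d_A}{d_B}$, and similarly extend $\set{\ket{\psi_j}}_{j=1}^{\abs{R_2}}$ to a basis $\set{\ket{\psi_j}}_{j=1}^{N}$; let $U, V \in \U(N)$ be the unitaries whose columns are these vectors (in the standard basis of $\C^{d_A}\x\C^{d_B} \cong \C^N$), so that $\ket{\varphi_i} = U\ket{i}$ and $\ket{\psi_j} = V\ket{j}$ under the natural identification.

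Next I would apply Lemma~\ref{claim:UV} with $M \defeq a \x b \in \Lin(\C^N)$ (here $m = n = N$), which gives
\begin{equation}
  N \cdot \max_{i,j \in [N]} \abs{\bra{\varphi_i}(a\x b)\ket{\psi_j}}
  \;\geq\; \max_{k,l \in [N]} \abs{(a \x b)_{kl}}.
\end{equation}
The left-hand side is an upper bound when the maximum ranges over all of $[N]\times[N]$; restricting to $i \le \abs{R_1}$ and $j \le \abs{R_2}$ only makes it smaller, and the prefactor $N = d_A d_B \ge \sqrt{\abs{R_1}\,\abs{R_2}}$, so we get
\begin{equation}
  \sqrt{\abs{R_1}\,\abs{R_2}}\;\max_{i,j}\abs{\bra{\varphi_i}(a\x b)\ket{\psi_j}}
  \;\leq\; d_A d_B\,\max_{i,j}\abs{\bra{\varphi_i}(a\x b)\ket{\psi_j}},
\end{equation}
which is going in the wrong direction. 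So the naive embedding loses the dimension-matching that makes the bound tight; I would instead keep the index sets honest and apply Lemma~\ref{claim:UV} not to $M = a\x b$ but to a restriction of it — or rather, reindex so that the rows of $U$ and $V$ actually correspond only to the coordinates appearing in $R_1$ and $R_2$. Concretely: the entry $(a\x b)_{(r_1,c_1),(r_2,c_2)}$ equals $a_{r_1 r_2} b_{c_1 c_2}$, so the right-hand side of the desired inequality is $\max_{(r_1,c_1)\in R_1,(r_2,c_2)\in R_2}\abs{(a\x b)_{(r_1,c_1),(r_2,c_2)}}$, a maximum of $\abs{M_{kl}}$ over $k$ ranging in a set of size $\abs{R_1}$ and $l$ in a set of size $\abs{R_2}$.

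The clean way to make Lemma~\ref{claim:UV} produce $\sqrt{\abs{R_1}\,\abs{R_2}}$ rather than $\sqrt{N\cdot N}$ is to observe that the argument of Lemma~\ref{claim:UV} is really about \emph{partial isometries}: only the isometric property $U\ct U = I$ on the relevant block is used, via $\norm{(U\ct\x V\tp)\vec(M)}_2 = \norm{\vec(M)}_2$. So I would revisit the proof of Lemma~\ref{claim:UV} with $U$ an $N \times \abs{R_1}$ isometry whose columns span the subspace associated to $R_1$ (i.e.\ $\ket{\varphi_i} = U\ket{i}$ for $i \in [\abs{R_1}]$), and similarly $V$ an $N\times\abs{R_2}$ isometry; then $U\ct M V \in \Lin(\C^{\abs{R_2}},\C^{\abs{R_1}})$ has entries $\bra{\varphi_i}Mfrac{\ket{\psi_j}}$, and the same chain of inequalities (now with $\vec(U\ct M V)$ a vector in $\bip{\abs{R_2}}{\abs{R_1}}$, so the $2$-norm/$\infty$-norm gap is $\sqrt{\abs{R_1}\abs{R_2}}$) yields
\begin{equation}
  \sqrt{\abs{R_1}\,\abs{R_2}}\;\max_{i,j}\abs{\bra{\varphi_i} U\ct M V \ket{j}\text{-type terms}}
  \;\geq\; \norm{U\ct M V}_{\infty},
\end{equation}
but we need the reverse role of $U$: we want a \emph{lower} bound on $\max_{i,j}\abs{\bra{\varphi_i}M\ket{\psi_j}}$ in terms of entries of $M$ itself, so $U, V$ should be the \emph{full} unitaries and we extract the submatrix of $M$ indexed by $R_1, R_2$ only at the end. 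The correct route, then, is: apply Lemma~\ref{claim:UV} with $U, V$ the two full $N\times N$ unitaries, obtaining $N\max_{i,j}\abs{\bra{\varphi_i}M\ket{\psi_j}} \ge \abs{M_{kl}}$ for every $k,l$; but to get the sharper prefactor, note we only care about $i \in [\abs{R_1}], j\in[\abs{R_2}]$, and here I would instead choose $M$ to be supported only on the $R_1\times R_2$ block — replace $M = a\x b$ by $M' = P_1 (a\x b) P_2$ where $P_1, P_2$ project onto the coordinate subspaces of $R_1, R_2$ — so that $\vec(U\ct M' V)$ effectively lives in a space of dimension $\abs{R_1}\abs{R_2}$ and $\max_{k,l}\abs{M'_{kl}} = \max_{(r_1,c_1)\in R_1,(r_2,c_2)\in R_2}\abs{a_{r_1 r_2} b_{c_1 c_2}}$, while $\bra{\varphi_i}M'\ket{\psi_j} = \bra{\varphi_i}(a\x b)\ket{\psi_j}$ for $i\in[\abs{R_1}],j\in[\abs{R_2}]$ since $P_1\ket{\varphi_i}=\ket{\varphi_i}$ and $P_2\ket{\psi_j}=\ket{\psi_j}$ for those indices.

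\textbf{Main obstacle.} The only real subtlety — and what I would be most careful about — is this bookkeeping of dimensions: one must arrange the application of Lemma~\ref{claim:UV} so that the $\sqrt{mn}$ factor it produces is $\sqrt{\abs{R_1}\,\abs{R_2}}$ and not $\sqrt{d_A d_B}\cdot\sqrt{d_A d_B}$. This is achieved by treating $\set{\ket{\varphi_i}}$ and $\set{\ket{\psi_j}}$ as (columns of) isometries from $\C^{\abs{R_1}}$ and $\C^{\abs{R_2}}$ into $\bip{d_A}{d_B}$ and re-running the $\vec$/norm argument of Lemma~\ref{claim:UV} verbatim with these rectangular isometries in place of the square unitaries $U, V$ — the proof only uses $U\ct U = I_{\abs{R_1}}$ and $V\tp \overline{V} = I_{\abs{R_2}}$, so it goes through unchanged, and $\vec(U\ct M V)$ is then a vector in a space of dimension $\abs{R_1}\abs{R_2}$, giving exactly the claimed constant. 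Everything else (the identity $\vec(AXB) = (A\x B\tp)\vec(X)$, the $2$-norm invariance under the isometry, the $\norm{\cdot}_2 \ge \norm{\cdot}_\infty$ step) is routine and identical to the proof already given.
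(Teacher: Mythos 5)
Your proposal, in its self-corrected final form, is correct and is essentially the paper's own argument: the paper implements your ``keep the index sets honest'' step by introducing restriction operators $\Pi_t$ onto $\C^{\abs{R_t}}$, so that the restricted states $\Pi_1\ket{\varphi_i}$, $\Pi_2\ket{\psi_j}$ are orthonormal bases and Lemma~\ref{claim:UV} applies verbatim with $m=\abs{R_1}$, $n=\abs{R_2}$ to the submatrix $M=\Pi_1(a\x b)\Pi_2\ct$, which is exactly your $M'=P_1(a\x b)P_2$ (equivalently, your rectangular isometries are $\Pi_t\ct$). The one point your sketch should make explicit is that the $2$-norm equality for rectangular isometries is legitimate precisely because $M'=UU\ct(a\x b)VV\ct$, i.e.\ $\norm{U\ct M' V}_2$ (as a Frobenius norm) equals $\norm{M'}_2$ only after projecting onto the $R_1\times R_2$ block, which your definition of $M'$ guarantees.
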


This follows from Lemma~\ref{claim:UV} by restricting $\ket{\varphi_i}$ and $\ket{\psi_j}$ to regions $R_1$ and $R_2$, respectively, and choosing $M$ to be a submatrix of $a \x b$ with rows determined by $R_1$ and columns by $R_2$.

\begin{proof}
For $t \in \set{1,2}$ let us enumerate the cells of region $R_t$ by integers from $\set{1, \dotsc, \abs{R_t}}$ arbitrarily, and let $(r_t(i), c_t(i))$ be the coordinates of the $i$th cell of region $R_t$. Let
\begin{equation}
  \Pi_t \defeq \sum_{i=1}^{\abs{R_t}} \ketbra{i}{r_t(i), c_t(i)}
\end{equation}
be a linear operator that restricts the space $\bip{d_A}{d_B}$ to region $R_t$. Then $\ket{\vp'_i} \defeq \Pi_1 \ket{\vp_i}$ is the restriction of $\ket{\vp_i}$ to region $R_1$ and $\ket{\psi'_i} \defeq \Pi_2 \ket{\psi_i}$ is the restriction of $\ket{\psi_i}$ to $R_2$. Also, let $M \defeq \Pi_1 (a\x b) \Pi_2\ct$. 

Note that for all $i \in \set{1, \dotsc, \abs{R_1}}$ we have $\Pi_1\ct \Pi_1 \ket{\vp_i} = \ket{\vp_i}$ since the support of $\ket{\vp_i}$ lies entirely within region $R_1$ and $\Pi_1\ct\Pi_1$ is the projection onto $R_1$. Similarly, $\Pi_2\ct\Pi_2\ket{\psi_j}=\ket{\psi_j}$ for all $j\in\set{1,\dotsc,\abs{R_2}}$. Hence
\begin{equation}
  \bra{\vp_i}(a \x b)\ket{\psi_j} = 
  \bra{\vp_i}\Pi\ct_1\Pi_1(a \x b)\Pi_2 \ct\Pi_2\ket{\psi_j} = 
  \bra{\vp'_i}M\ket{\psi'_j}
\end{equation}
for all $i$ and $j$. Finally, we apply Lemma~\ref{claim:UV} to $\{\ket{\vp'_i}\}_{i=1}^{\abs{R_1}}$, $\{\ket{\psi'_j}\}_{j=1}^{\abs{R_2}}$, and $M$:
\begin{align*}
  \sqrt{\abs{R_1} \cdot \abs{R_2}} \;
    \max_{i,j} \abs{\bra{\varphi_i} (a \x b) \ket{\psi_j}} 
  &=\sqrt{\abs{R_1} \cdot \abs{R_2}} \;
    \max_{i,j} \abs{\bra{\vp'_i}M\ket{\psi'_j}}\\
  &\geq \max_{k,l} \abs{M_{kl}} \\ 
  &= \max_{k,l} \abs{\bra{k}\Pi_1 (a\x b) \Pi_2\ct\ket{l}}\\
  &= \max_{k,l} \big|\bra{r_1(k)} \, a \, \ket {r_2(l)} \big| \cdot
                \big|\bra{c_1(k)} \, b \, \ket {c_2(l)} \big|\\
  &= \max_{\substack{(r_1,c_1)\in R_1\\(r_2,c_2)\in R_2}}
  \abs{a_{r_1 r_2}} \cdot \abs{b_{c_1 c_2}}
\end{align*}
and the result follows.
\end{proof}

When regions $R_1$ and $R_2$ are two distinct tiles from the tiling induced by $S$, we can use Lemma~\ref{claim:Regions} to get the following result:

\begin{lemma}[``Pair of tiles" Lemma]
\label{lem:PairOfTiles}
Let $T_1$ and $T_2$ be two distinct tiles in the tiling induced by $S$, and let
$a \in \Pos(\C^{d_A})$ and $b \in \Pos(\C^{d_B})$. Then
\begin{equation}
  \sqrt{\abs{T_1} \cdot \abs{T_2}} \;
  \delta_S(a \x b) \tr(a \x b)
  \geq \abs{a_{r_1 r_2}} \cdot \abs{b_{c_1 c_2}}
  \label{eq:Tiles}
\end{equation}
for any $r_t \in \rows(T_t)$ and $c_t \in \cols(T_t)$ where $t \in \set{1,2}$.
\end{lemma}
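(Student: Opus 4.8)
The plan is to derive the ``pair of tiles'' lemma as a direct corollary of Lemma~\ref{claim:Regions}, applied to the two tiles $T_1$ and $T_2$ viewed as regions of the grid. Since $S$ is an orthonormal basis and each tile $T_t$ has area $\abs{T_t}$ and is filled by exactly $\abs{T_t}$ states from $S$ that form a basis of the subspace supported on $T_t$, I can take $\set{\ket{\varphi_i}}_{i=1}^{\abs{T_1}}$ and $\set{\ket{\psi_j}}_{j=1}^{\abs{T_2}}$ to be the subsets of $S$ inducing $T_1$ and $T_2$ respectively. Lemma~\ref{claim:Regions} then gives, for any $a \in \Lin(\C^{d_A})$ and $b \in \Lin(\C^{d_B})$,
\begin{equation}
  \sqrt{\abs{T_1}\cdot\abs{T_2}}\;\max_{i,j}\abs{\bra{\varphi_i}(a\x b)\ket{\psi_j}}
  \;\geq\; \abs{a_{r_1 r_2}}\cdot\abs{b_{c_1 c_2}}
\end{equation}
for every $r_t \in \rows(T_t)$, $c_t \in \cols(T_t)$, since $(r_t,c_t)\in T_t$.

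Next I would bound the left-hand side in terms of $\delta_S$. Because $T_1$ and $T_2$ are \emph{distinct} tiles of the tiling induced by $S$, every $\ket{\varphi_i}$ and every $\ket{\psi_j}$ is a state of $S$, and $\ket{\varphi_i}\neq\ket{\psi_j}$ for all $i,j$ (distinct tiles contribute disjoint sets of basis states). Hence each pair $(\ket{\varphi_i},\ket{\psi_j})$ is an admissible pair for the maximum defining $\delta_S(a\x b)$ in Equation~(\ref{eq:delta}). Writing $G_{kk}=\bra{\psi_k}(a\x b)\ket{\psi_k}$ and recalling that for an orthonormal basis $\tr(a\x b)=\sum_{k}G_{kk}$, the definition of $\delta_S$ gives $\abs{\bra{\varphi_i}(a\x b)\ket{\psi_j}} \leq \delta_S(a\x b)\sqrt{G_{\varphi_i}G_{\psi_j}}$ where $G_{\varphi_i},G_{\psi_j}$ denote the corresponding diagonal entries. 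Since $G_{\varphi_i},G_{\psi_j} \leq \tr(a\x b)$ (each diagonal entry is at most the sum of all the nonnegative diagonal entries, as $a\x b\succeq 0$), we obtain $\abs{\bra{\varphi_i}(a\x b)\ket{\psi_j}} \leq \delta_S(a\x b)\tr(a\x b)$ for all $i,j$, so $\max_{i,j}\abs{\bra{\varphi_i}(a\x b)\ket{\psi_j}} \leq \delta_S(a\x b)\tr(a\x b)$. Combining with the displayed inequality yields Equation~(\ref{eq:Tiles}).

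One subtlety to handle carefully is the well-definedness of $\delta_S(a\x b)$: the definition in Section~\ref{sec:Delta} implicitly requires $G_{kk}=\bra{\psi_k}(a\x b)\ket{\psi_k}>0$ for the relevant $k$. The hypothesis of the lemma only assumes $\bra{\psi_i}(a\x b)\ket{\psi_i}\neq 0$ for the states appearing, but in fact I would note that if some $G_{\varphi_i}$ or $G_{\psi_j}$ vanishes then the corresponding numerator $\abs{\bra{\varphi_i}(a\x b)\ket{\psi_j}}$ also vanishes (by Cauchy--Schwarz with respect to the PSD form $a\x b$), so those pairs contribute $0$ and can simply be omitted from the maximum; if \emph{all} such diagonal entries vanish the inequality is trivial as both sides are $0$. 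The other minor point is that Lemma~\ref{claim:Regions} is stated for arbitrary $a,b \in \Lin$, so no positivity is needed there; positivity of $a$ and $b$ is used only in the step bounding each diagonal entry by $\tr(a\x b)$ and in the Cauchy--Schwarz argument. I do not anticipate a genuine obstacle here — the lemma is essentially a repackaging of Lemma~\ref{claim:Regions} together with the trivial bound $G_{kk}\leq\tr(a\x b)$; the only care required is the bookkeeping around vanishing diagonal entries so that $\delta_S$ is legitimately invoked.
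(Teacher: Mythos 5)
Your proposal is correct and follows essentially the same route as the paper: apply Lemma~\ref{claim:Regions} to the two distinct tiles with their induced basis states from $S$, then relax the definition of $\delta_S(a \x b)$ by bounding the denominators $\sqrt{G_{ii}G_{jj}}$ by $\tr(a \x b)$ (the paper passes through $\norm{a \x b}_{\infty}$, you bound each diagonal entry directly, which is the same relaxation). Your extra bookkeeping about vanishing diagonal entries is a harmless refinement the paper leaves implicit.
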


\begin{proof}
We relax the inequality in Lemma~\ref{claim:Regions} by observing that
\begin{align}
  \delta_S(a \x b)
  &\geq \frac{\max_{i \neq j} \abs{\bra{\psi_i} (a \x b) \ket{\psi_j}}}{\norm{a \x b}_{\infty}}
   \geq \frac{\max_{i \neq j} \abs{\bra{\psi_i} (a \x b) \ket{\psi_j}}}{\tr (a \x b)}
\end{align}
which easily follows from the definition of $\delta_S(a \x b)$ in Equation~(\ref{eq:delta}).
\end{proof}

Note that the tiles $T_1$ and $T_2$ in Lemma~\ref{lem:PairOfTiles} have to be distinct since the maximization in the definition of $\delta_S(a \x b)$ is performed only over pairs of distinct states. This lemma will be used later to bound the off-diagonal entries of $a \x b$ (see Figure~\ref{fig:Pair}).

\begin{figure}[!ht]
\centering


\def\step{20pt} 

\begin{tikzpicture}[
  domino/.style = {rectangle, rounded corners = 0.2*\step, draw = black!95, fill = black!10},
  circ/.style = {circle, draw = black, fill = black, inner sep = 0mm, minimum size = 0.13*\step},
  gridlines/.style = {gray, semithick}
]

  \newcommand{\domino}[4]{
    \node[semithick, domino, minimum height = #3*\step + 0.8*\step, minimum width = #4*\step + 0.8*\step]
          at (#1*\step, #2*\step) {};
  }

  \newcommand{\hdomino}[2]{
    \pgfmathparse{#1+0.5};
    \let\x = \pgfmathresult;
    \pgfmathparse{#2};
    \let\y = \pgfmathresult;
    \domino{\x}{\y}{0}{1}
  }

  \newcommand{\vdomino}[2]{
    \pgfmathparse{#1};
    \let\x = \pgfmathresult;
    \pgfmathparse{#2+0.5};
    \let\y = \pgfmathresult;
    \domino{\x}{\y}{1}{0}
  }

  \def\dx{0.7}

  \newcommand{\Hline}[4]{
    \pgfmathparse{#3};
    \let\y = \pgfmathresult;
    \pgfmathparse{#1-\dx};
    \let\xa = \pgfmathresult;
    \pgfmathparse{#2+\dx};
    \let\xb = \pgfmathresult;
    \draw (\xa*\step, \y*\step) -- (\xb*\step, \y*\step);
    \node at (\xa*\step - 0.4*\step, \y*\step) {#4};
  }

  \newcommand{\Vline}[4]{
    \pgfmathparse{#3};
    \let\x = \pgfmathresult;
    \pgfmathparse{#1+\dx};
    \let\ya = \pgfmathresult;
    \pgfmathparse{#2-\dx};
    \let\yb = \pgfmathresult;
    \draw (\x*\step, \ya*\step) -- (\x*\step, \yb*\step);
    \node at (\x*\step, \ya*\step + 0.3*\step) {#4};
  }


  \hdomino{2}{2}
  \vdomino{0}{2}

  \def\ra{3}
  \def\rb{2}

  \def\ca{0}
  \def\cb{3}

  \begin{scope}[densely dashed, thin]
    \Vline{\ra}{\rb}{\ca}{$c_1$}
    \Vline{\ra}{\rb}{\cb}{$c_2$}
    \Hline{\ca}{\cb}{\ra}{$r_1$}
    \Hline{\ca}{\cb}{\rb}{$r_2$}
  \end{scope}

  \node [circ] at (\ca*\step, \ra*\step) {};
  \node [circ] at (\cb*\step, \rb*\step) {};


  \vdomino{10}{2}

  \def\ra{3}
  \def\rb{2}

  \def\ca{10}
  \def\cb{10}

  \begin{scope}[densely dashed, thin]
    \Vline{\ra}{\rb}{\ca}{$c_1 = c_2$}
    \Hline{\ca}{\cb}{\ra}{$r_1$}
    \Hline{\ca}{\cb}{\rb}{$r_2$}
  \end{scope}

  \node [circ] at (\ca*\step, \ra*\step) {};
  \node [circ] at (\cb*\step, \rb*\step) {};

\end{tikzpicture}

\caption{Whenever $(r_1,c_1)$ and $(r_2,c_2)$ belong to different tiles (left), Lemma~\ref{lem:PairOfTiles} can be used to upper bound the off-diagonal entry $a_{r_1 r_2} \cdot b_{c_1 c_2}$ of $a \x b$. When both coordinates correspond to the same tile (right), this result cannot be applied directly.}
\label{fig:Pair}
\end{figure}

\section{Domino states} \label{sec:Dominoes}

In this section we use the framework introduced earlier to give a lower bound on the error probability of any LOCC measurement for discriminating states from certain bipartite orthonormal product bases known as domino states. This provides an alternative proof of the quantitative separation between LOCC and separable measurements first given in~\cite{IBM} as well as generalizations to states corresponding to other domino-type tilings and a rotated version of the original domino states.

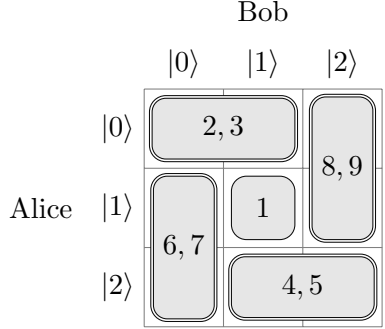
\begin{figure}[!ht]
\centering


\begin{tikzpicture}[
  node distance = 3pt,
  domino/.style = {rectangle, rounded corners = 2mm, draw = black!95, fill = black!10}]

  \def\step{30pt};   
  \def\dshort{24pt}; 
  \def\dlong{55pt};  

  \draw[step = \step, gray, thin] (0,0) grid (3*\step, 3*\step);

  \foreach \i/\j/\h in {1/2.5/3, 2/0.5/0}
    \node[domino, double, minimum height = \dshort, minimum width = \dlong]  at (\i*\step, \j*\step) {};

  \foreach \i/\j in {2.5/2, 0.5/1}
    \node[domino, double, minimum height = \dlong,  minimum width = \dshort] at (\i*\step, \j*\step) {};

  \node[domino, minimum height = \dshort, minimum width = \dshort] at (1.5*\step, 1.5*\step) {};

  \foreach \i in {0,1,2}{
    \node (a\i) at (\i*\step+0.5*\step, 3.5)   {$|\i\rangle$};
    \node (b\i) at (-0.35,-\i*\step+2.5*\step) {$|\i\rangle$};
  };

  \node [above = of a1] {Bob};
  \node [left  = of b1] {Alice};

  \node at (1.5*\step, 1.5*\step) {$1$};
  \node at (1.0*\step, 2.5*\step) {$2,3$};
  \node at (2.0*\step, 0.5*\step) {$4,5$};
  \node at (0.5*\step, 1.0*\step) {$6,7$};
  \node at (2.5*\step, 2.0*\step) {$8,9$};

\end{tikzpicture}

\caption{The tiling induced by states from Equations~(\ref{eq:psi1}--\ref{eq:psi89}).}
\label{fig:Dominoes}
\end{figure}

\subsection{Definition}

\newcommand{\thetas}{\theta_1,\theta_2,\theta_3,\theta_4}

The following orthonormal product basis is known as the \emph{domino states}:
\begin{alignat}{3}
  && \ket{\psi_1} = \ket{1} \ket{1}, \label{eq:psi1} \\
  \ket{\psi_{2}} &= \ket{0} \ket{0 + 1}, &&
      &\ket{\psi_{3}} &= \ket{0} \ket{0 - 1}, \\
  \ket{\psi_{4}} &= \ket{2} \ket{1 + 2}, &&
      &\ket{\psi_{5}} &= \ket{2} \ket{1 - 2}, \\
  \ket{\psi_{6}} &= \ket{1 + 2} \ket{0}, &&
      &\ket{\psi_{7}} &= \ket{1 - 2} \ket{0}, \\
  \ket{\psi_{8}} &= \ket{0 + 1} \ket{2}, &&
      &\ket{\psi_{9}} &= \ket{0 - 1} \ket{2}, \label{eq:psi89}
\end{alignat}
where $\ket{i \pm j} \defeq (\ket{i} \pm \ket{j})/\sqrt{2}$.  In \cite{IBM} it was shown that any LOCC protocol for discriminating these states has information deficit at least $5.31 \times 10^{-6}$ (out of $\log_2 9 \approx 3.17$) bits.

In~\cite{IBM} the authors also consider a family of orthonormal product bases, the so-called \emph{rotated domino states}, which are parametrized by four angles $0 \leq \thetas \leq \pi/4$ and are defined as follows:
\begin{alignat}{3}
  && \ket{\psi_1} = \ket{1} \ket{1}, \\
   \ket{\psi_2} &= \ket{0} (\cos \theta_1 \ket{0} + \sin \theta_1 \ket{1}), &&
  &\ket{\psi_3} &= \ket{0} (  - \sin \theta_1 \ket{0} +     \cos \theta_1 \ket{1}), \label{eq:psi23} \\
   \ket{\psi_4} &= \ket{2} (\cos \theta_2 \ket{1} + \sin \theta_2 \ket{2}), &&
  &\ket{\psi_5} &= \ket{2} (  - \sin \theta_2 \ket{1} +     \cos \theta_2 \ket{2}), \\
   \ket{\psi_6} &=         (\cos \theta_3 \ket{1} + \sin \theta_3 \ket{2}) \ket{0}, &&
  &\ket{\psi_7} &=         (  - \sin \theta_3 \ket{1} +     \cos \theta_3 \ket{2}) \ket{0}, \\
   \ket{\psi_8} &=         (\cos \theta_4 \ket{0} + \sin \theta_4 \ket{1}) \ket{2}, &&
  &\ket{\psi_9} &=         (  - \sin \theta_4 \ket{0} +     \cos \theta_4 \ket{1}) \ket{2}.
\end{alignat}
Let $S_3(\thetas)$ denote the rotated domino basis parametrized as above. Then the original domino basis is $S_3 \defeq S_3(\pi/4,\pi/4,\pi/4,\pi/4)$.

Reference~\cite{IBM} shows that states from the domino basis $S_3$ cannot be perfectly discriminated by asymptotic LOCC and conjectures that the same holds for the rotated domino basis $S_3(\thetas)$ for any $0 < \thetas \leq \pi/4$. In the next section we give an alternative proof that quantifies the nonlocality of the original domino states $S_3$ and then adapt the argument to the rotated domino states, thus resolving the conjecture.

\subsection{Nonlocality of the domino states}

To lower bound the nonlocality constant of the domino states $S_3$, we put an upper bound on their rigidity. In other words, we show that measurement operators that only slightly disturb these states are weak (approximately proportional to the identity operator).  The key ingredient of the proof is Lemma~\ref{lem:PairOfTiles} from Section~\ref{sec:Tilings}.

\begin{lemma}
\label{lem:Box}
The domino state basis $S_3$ is $4$-rigid.
\end{lemma}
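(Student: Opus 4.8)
The plan is to show that if a product measurement operator $a \x b$ only slightly disturbs the domino basis $S_3$, then $a \x b$ must be close to a multiple of the identity, with constant $c = 4$ in Definition~\ref{def:Rigidity}. Write $\delta \defeq \delta_{S_3}(a \x b)$ and normalize by considering $N \defeq \tr(a \x b)$, so the goal is to bound $\norm{(a \x b)/N - \id/9}_{\max}$ by $4\delta$. There are two kinds of entries of $a \x b$ to control: the off-diagonal entries $a_{r_1 r_2} b_{c_1 c_2}$ with $(r_1,c_1) \neq (r_2,c_2)$, and the deviation of the diagonal entries $a_{rr} b_{cc}$ from $N/9$. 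The ``pair of tiles'' lemma (Lemma~\ref{lem:PairOfTiles}) handles the first kind directly whenever the two cells lie in different tiles of the domino tiling (Figure~\ref{fig:Dominoes}): it gives $\abs{a_{r_1 r_2}}\abs{b_{c_1 c_2}} \le \sqrt{\abs{T_1}\abs{T_2}}\,\delta N \le \sqrt{2 \cdot 2}\,\delta N = 2\delta N$, and for the many pairs where one tile has size $1$ the bound is even better ($\sqrt{2}\,\delta N$ or $\delta N$). So the first main step is to check, tile by tile, that every off-diagonal entry $(r_1,c_1),(r_2,c_2)$ can be ``routed'' through a pair of distinct tiles — i.e.\ that one can choose representative cells in two distinct tiles whose rows are $r_1,r_2$ and whose columns are $c_1,c_2$. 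Because the domino tiling is irreducible and fairly symmetric, this should always be possible; the worst case is precisely the situation in the right panel of Figure~\ref{fig:Pair}, where $(r_1,c_1)$ and $(r_2,c_2)$ share a tile, and one must find a \emph{third} tile through which to route one of the coordinates.

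The second main step is to bound the diagonal deviations. Here I would use the off-diagonal bounds already obtained, together with the fact that all the states lie in a $3\times 3$ grid, to relate different diagonal entries of $a$ (and of $b$) to one another. The key observation is that for two cells in the \emph{same} tile of size $2$ — say a horizontal domino at rows $\{r\}$, columns $\{c_1,c_2\}$ — the states supported there are $\ket{r}\ket{c_1+c_2}$ and $\ket{r}\ket{c_1-c_2}$, and computing $\delta$ for this pair forces $\abs{a_{rr}(b_{c_1 c_1} - b_{c_2 c_2})}$ and $\abs{a_{rr} b_{c_1 c_2}}$ to be small relative to $a_{rr}(b_{c_1 c_1}+b_{c_2 c_2})$; combined with the fact that $a,b \succeq 0$ (so $\abs{a_{ij}} \le \sqrt{a_{ii}a_{jj}}$ and the diagonal entries are nonnegative), one propagates near-equality of the relevant $a_{rr}$'s and $b_{cc}$'s around the grid. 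Since the row graph and column graph of the domino tiling are connected, chaining these comparisons shows all $a_{rr}$ are within $O(\delta)$-factors of each other and likewise for $b$, which pins each diagonal entry $a_{rr}b_{cc}$ near $N/9$. Keeping careful track of the constants through this chaining is what has to come out to $4$.

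The step I expect to be the main obstacle is the constant-tracking in the second step: getting a clean bound of the form $\abs{a_{rr}b_{cc}/N - 1/9} \le 4\delta$ requires being somewhat clever about \emph{which} chain of tile comparisons to use for each cell, and about combining the multiplicative "ratios are close to $1$" estimates additively. A secondary subtlety is that $\delta$ appears in denominators implicitly — e.g.\ we only know $\bra{\psi_i}(a\x b)\ket{\psi_i} \neq 0$, not that it is bounded below — so one must argue that $N = \sum_i \bra{\psi_i}(a\x b)\ket{\psi_i}$ and each diagonal block are simultaneously nonzero before dividing. I would handle this by first treating the degenerate possibility $\delta \ge 1/4$ trivially (the max-norm of $(a\x b)/N - \id/9$ is at most $1$), and then assuming $\delta < 1/4$, where the near-equality propagation keeps every diagonal entry strictly positive. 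With those caveats, the domino tiling is symmetric enough that a uniform bound holds with $c = 4$.
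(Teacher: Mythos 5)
Your overall skeleton matches the paper's proof (pair-of-tiles bounds for the off-diagonal entries of $a \x b$; within-domino state pairs plus triangle-inequality chaining for the diagonal entries), but there is a genuine gap in your handling of the hardest off-diagonal entries. For the four entries whose two grid cells lie inside a single domino --- $\abs{a_{00}b_{01}}$, $\abs{a_{01}b_{22}}$, $\abs{a_{22}b_{12}}$, $\abs{a_{12}b_{00}}$ --- your proposed ``routing'' (choosing representative cells in two distinct tiles whose rows are $r_1,r_2$ and whose columns are $c_1,c_2$) is impossible: a cell is determined by its row and column, so for, say, $a_{00}b_{01}$ the only admissible cells are $(0,0)$ and $(0,1)$, and both lie in the tile $\{0\}\times\{0,1\}$ induced by $\ket{\psi_{2,3}}$. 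Hence Lemma~\ref{lem:PairOfTiles} cannot be applied to that entry at all, no matter which ``third tile'' you invoke. What is actually needed (and what the paper does) is to bound a \emph{different} entry, $\abs{a_{11}}\cdot\abs{b_{01}}\leq\sqrt{2}\,\delta\tr(a\x b)$, via the centre tile at $(1,1)$ and the cell $(1,0)$ of the tile $\{1,2\}\times\{0\}$, and then transfer from $a_{11}$ to $a_{00}$ using the diagonal estimate $\abs{a_{00}-a_{11}}\leq\delta\tr(a)$ together with $\abs{b_{01}}\leq\tr(b)$; this costs one extra additive term $\delta\tr(a\x b)$ and yields the constant $1+\sqrt{2}$ for these entries. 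You do have the diagonal-comparison ingredient in your second step, but you never connect it to these off-diagonal cases, so your first step fails exactly at the point you flagged as the worst case.

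A secondary issue is that the constant $4$ is never actually established, and your multiplicative ``ratios close to $1$'' formulation is the wrong vehicle for it, since it degrades when diagonal entries are small. The paper's additive, trace-normalized version is what makes the bookkeeping come out cleanly: each domino gives $\abs{a_{11}-a_{ii}}\leq\delta\tr(a)$ and $\abs{b_{11}-b_{jj}}\leq\delta\tr(b)$ for $i,j\in\{0,2\}$, chaining gives $\abs{a_{00}-a_{22}}\leq 2\delta\tr(a)$ (similarly for $b$), so any two diagonal entries of $a\x b$ differ by at most $4\delta\tr(a\x b)$, and averaging over all nine gives $\Abs{a_{ii}b_{jj}-\tfrac{1}{9}\tr(a\x b)}\leq 4\delta\tr(a\x b)$ with no case split on the size of $\delta$. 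In particular your fallback case $\delta\geq 1/4$ and the worry about small denominators are unnecessary: everything is normalized by $\tr(a\x b)$, and the only nonvanishing one needs is $\bra{\psi_i}(a\x b)\ket{\psi_i}\neq 0$, which is already part of the hypothesis in the definition of rigidity.
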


\begin{proof}
The claimed result can be restated as follows (see Definition~\ref{def:Rigidity}):
\begin{align}
  \Abs{a_{ii} b_{jj} - \frac{1}{9} \tr(a \x b)} &\leq 4 \delta \tr(a \x b),
  \label{eq:Diag} \\
  \abs{a_{ij} b_{kt}} &\leq 4 \delta \tr(a \x b),
  \label{eq:Offdiag}
\end{align}
where $i,j,k,t \in \set{0,1,2}$ and $i \neq j$ or $k \neq t$ in the second equation. First we prove the bound for the diagonal elements and then we proceed to bound the off-diagonal ones.

\partitle{Bounding the diagonal elements:}

We start by bounding the differences of the diagonal elements of matrices $a$ and $b$ separately. Let us rewrite the definition of $\delta$ from Equation~(\ref{eq:delta}) in the case of product states $\ket{\psi_i} = \ket{\alpha_i} \ket{\beta_i}$:
\begin{equation}
  \delta
  = \max_{i \neq j}
     \frac{\abs {\bra{\alpha_i} a \ket{\alpha_j}}}
          {\sqrt{\bra{\alpha_i} a \ket{\alpha_i}
                 \bra{\alpha_j} a \ket{\alpha_j}}} \cdot
     \frac{\abs {\bra{\beta _i} b \ket{\beta _j}}}
          {\sqrt{\bra{\beta _i} b \ket{\beta _i}
                 \bra{\beta _j} b \ket{\beta _j}}}.
  \label{eq:delta ab}
\end{equation}
If we consider the pair of states $\ket{\psi_{2,3}} = \ket{0} \ket{0 \pm 1}$, we get
\begin{align}
  \delta
 &\geq
    \frac{\abs{a_{00}}}{\abs{a_{00}}} \cdot
    \frac{\abs{b_{00}-b_{01}+b_{10}-b_{11}}}
       {\sqrt{(b_{00}+b_{01}+b_{10}+b_{11})
              (b_{00}-b_{01}-b_{10}+b_{11})}} \label{eq:b-diag1} \\
 &=
    \frac{\abs{b_{00}-b_{11} + 2i\Im b_{10}}}
       {\sqrt{(b_{00}+b_{11})^2 - (b_{01}+b_{10})^2}} \\
 &\geq 
    \frac{\abs{b_{00}-b_{11}}}{\abs{b_{00}+b_{11}}} \\
 &\geq 
    \frac{\abs{b_{00}-b_{11}}}{\tr(b)}. \label{eq:b-diag3} 
\end{align}
Note that the cancellation of $\abs{a_{00}}$ is valid since $a_{00} \neq 0$ by the definition of stage~I. Applying a similar argument to the pairs of states from the other three tiles of size $2$, we get that for any $i \in \set{0,2}$,
\begin{align}
  \delta\tr(a) &\geq \abs{a_{11}-a_{ii}} 
  & \text{and} &&
  \delta\tr(b) &\geq \abs{b_{11}-b_{ii}}.
  \label{eq:diag1}
\end{align}
Using these bounds and the triangle inequality, we can bound the difference between the first and last diagonal elements:
\begin{equation}
  \abs{a_{00}-a_{22}} \leq \abs{a_{00}-a_{11}} + \abs{a_{11}-a_{22}}
  \leq 2\delta\tr(a)
  \label{eq:Triangle inequality}
\end{equation}
and similarly $\abs{b_{00}-b_{22}} \leq2 \delta \tr(b)$.

Next, we use the bounds on the differences of the diagonal elements of $a$ and $b$ to bound the differences of the diagonal elements of $a \x b$. For all $i,j,k,t \in \set{0,1,2}$ we have
\begin{align}
  \abs{a_{ii}b_{jj}-a_{kk}b_{tt}}&\leq \label{eq:ab-diag1}
  \abs{a_{ii}b_{jj}-a_{kk}b_{jj}}+\abs{a_{kk}b_{jj}-a_{kk}b_{tt}}\\
  & = \abs{b_{jj}} \cdot \abs{a_{ii}-a_{kk}}+
      \abs{a_{kk}} \cdot \abs{b_{jj}-b_{tt}}\\
  & \leq \abs{b_{jj}} \cdot 2\delta \tr(a)+
         \abs{a_{kk}} \cdot 2\delta \tr(b)\\
  & \leq 4 \delta \tr(a \x b). \label{eq:ab-diag4}
\end{align}

Using this inequality we can obtain the desired bound~(\ref{eq:Diag}) for the diagonal elements: for all $i, j \in \set{0,1,2}$ we have
\begin{align}
  \Abs{a_{ii}b_{jj}-\frac{1}{9}\tr(a \x b)}&= \label{eq:ab-trace1}
  \Abs{a_{ii}b_{jj}-\frac{1}{9}\sum_{k,t\in\set{0,1,2}}a_{kk}b_{tt}}\\
  &\leq \frac{1}{9}\sum_{k,t\in\set{0,1,2}}\abs{a_{ii}b_{jj}-a_{kk}b_{tt}}\\
  &\leq 4 \delta \tr(a \x b). \label{eq:ab-trace3}
\end{align}

\partitle{Bounding the off-diagonal elements:}

From Lemma~\ref{lem:PairOfTiles} we know that $\sqrt{\abs{T_1} \cdot \abs{T_2}} \; \delta \tr(a \x b) \geq \abs{a_{r_1 r_2}} \cdot \abs{b_{c_1 c_2}}$, where $T_1$ and $T_2$ are two distinct tiles and $\abs{T_t}$ is the area of the tile containing $(r_t,c_t)$. For $(r_1, c_1) = (1,1)$ and any $(r_2, c_2) \neq (1,1)$ we get
\begin{equation}
  \sqrt{2} \delta \tr(a \x b)
  \geq \abs{a_{r_1 r_2}} \cdot \abs{b_{c_1 c_2}}.
  \label{eq:Offdiag1}
\end{equation}
Similarly, for any $(r_1, c_1)$ and $(r_2, c_2)$ that belong to distinct tiles of size two we get
\begin{equation}
  2 \delta \tr(a \x b)
  \geq \abs{a_{r_1 r_2}} \cdot \abs{b_{c_1 c_2}}.
  \label{eq:Offdiag2}
\end{equation}
Now it only remains to bound the following four off-diagonal elements (each of which corresponds to one of the four tiles of size $2$):
\begin{equation}
  \abs{a_{00}} \cdot \abs{b_{01}}, \quad
  \abs{a_{01}} \cdot \abs{b_{22}}, \quad
  \abs{a_{22}} \cdot \abs{b_{12}}, \quad
  \abs{a_{12}} \cdot \abs{b_{00}}.
  \label{eq:Hard cases}
\end{equation}

To bound $\abs{a_{00}} \cdot \abs{b_{01}}$, first choose $(r_2,c_2) = (1,0)$ and use Equation~(\ref{eq:Offdiag1}):
\begin{equation}
  \sqrt{2} \delta \tr(a \x b)
  \geq \abs{a_{11}} \cdot \abs{b_{10}}
     = \abs{a_{11}} \cdot \abs{b_{01}}.
\end{equation}
Now it only remains to replace $a_{11}$ by $a_{00}$. Notice from Equation~(\ref{eq:diag1}) that $\delta \tr(a) \geq \abs{a_{11} - a_{00}} \geq \abs{a_{00}} - \abs{a_{11}}$, so
\begin{align}
  \sqrt{2} \delta \tr(a \x b) &\geq 
  \abs{a_{11}} \cdot \abs{b_{01}} \geq
  \bigl(\abs{a_{00}}-\delta\tr(a)\bigr) \cdot \abs{b_{01}} \label{eq:hard1} \\
  & \geq \abs{a_{00}} \cdot \abs{b_{01}} - \delta \tr(a \x b) \label{eq:hard2}
\end{align}
where the last inequality holds since $\abs{b_{01}} \leq \max \set{b_{00}, b_{11}} \leq \tr(b)$ as $b$ is positive semidefinite. After rearranging the previous expression we obtain
\begin{equation}
  (1+\sqrt{2}) \delta \tr(a \x b) \geq \abs{a_{00}} \cdot \abs{b_{01}}.
  \label{eq:Offdiag3}
\end{equation}
By appropriately choosing the value of $(r_2,c_2)$ and using a similar argument, we get the same upper bound for the remaining three off-diagonal elements listed in Equation~(\ref{eq:Hard cases}). Since the constants obtained in bounds~(\ref{eq:Offdiag1}), (\ref{eq:Offdiag2}), and~(\ref{eq:Offdiag3}) satisfy $\max \set{\sqrt{2}, 2, 1+\sqrt{2}} \leq 4$, we have shown that Equation~(\ref{eq:Offdiag}) holds for all off-diagonal elements of $a \x b$.
\end{proof}

Together with Equation~(\ref{eq:eta bound}) this implies that the nonlocality constant for the domino states is $\eta \geq 1/8$. To get an explicit value for the lower bound on the error probability, we use Theorem~\ref{thm:cL} with $n = 9$, $L = 2$, and $c = 4$.

\begin{corollary}
Any LOCC measurement for discriminating the domino states $S_3$ errs with probability
\begin{equation}
  p_{\error} \geq 1.9 \times 10^{-8}.
\end{equation}
\end{corollary}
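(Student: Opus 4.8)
The plan is to read this off as an immediate consequence of the machinery already assembled; no new idea is needed, and the only substantive input is Lemma~\ref{lem:Box}, which states that the domino basis $S_3$ is $4$-rigid. Everything else is choosing parameters and doing arithmetic.

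First I would fix the parameters. Since $S_3$ is an orthonormal product basis of $\bip{3}{3}$, we have $n = d_A d_B = 9$. Reading off the tiling induced by $S_3$ (Figure~\ref{fig:Dominoes}) — one $1 \times 1$ tile at the center and four dominoes of area $2$ — the size of the largest tile is $L = 2$. Lemma~\ref{lem:Box} supplies $c = 4$. One may optionally record the intermediate fact that, by Lemma~\ref{lem:ceta}, the nonlocality constant of the domino states satisfies $\eta \geq 1/(cL) = 1/8$.

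Next I would substitute into Theorem~\ref{thm:cL}, which asserts $p_{\error} \geq \frac{2}{27}\,\frac{1}{(cL)^2 n^5}$ for any LOCC measurement discriminating the states of a $c$-rigid orthonormal basis. With $c = 4$, $L = 2$, $n = 9$ we get $cL = 8$, $(cL)^2 = 64$, and $n^5 = 59049$, hence
\begin{equation*}
  p_{\error} \;\geq\; \frac{2}{27}\cdot\frac{1}{64 \cdot 59049}
  \;=\; \frac{2}{102036672}
  \;\approx\; 1.96 \times 10^{-8}
  \;>\; 1.9 \times 10^{-8}.
\end{equation*}
Equivalently one can route through Theorem~\ref{thm:eta} directly, using $\eta \geq 1/8$ and $\frac{2}{27}\frac{\eta^2}{n^5} = \frac{2}{27}\frac{1/64}{9^5}$; the two computations coincide.

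The ``hard part'' is therefore entirely upstream and already done: it is the rigidity estimate of Lemma~\ref{lem:Box}, whose proof splits into bounding the diagonal and the off-diagonal entries of $a \x b$ via the pair-of-tiles lemma (Lemma~\ref{lem:PairOfTiles}) together with triangle-inequality arguments across the four dominoes. The corollary itself poses no obstacle beyond performing the multiplication carefully and verifying that rounding the bound down to $1.9 \times 10^{-8}$ is conservative.
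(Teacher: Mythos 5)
Your proposal is correct and matches the paper's own derivation exactly: it invokes Lemma~\ref{lem:Box} for $4$-rigidity and applies Theorem~\ref{thm:cL} with $n = 9$, $L = 2$, $c = 4$, and your arithmetic $\frac{2}{27}\cdot\frac{1}{64\cdot 59049} \approx 1.96\times 10^{-8} > 1.9\times 10^{-8}$ is right. Nothing further is needed.
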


\subsection{Nonlocality of irreducible domino-type tilings}

Lemma~\ref{lem:Box} can be easily generalized to product bases that are similar to domino states on larger quantum systems.

\begin{restatable}{lemma}{DIMBOX}\label{lem:DimBox}
Let $d_A, d_B \geq 3$ and let $S$ be an orthonormal product basis of $\bip{d_A}{d_B}$. If $S$ induces an irreducible domino-type tiling of diameter $D$ then $S$ is $2D$-rigid (see Section~\ref{sec:Definitions} for terminology).
\end{restatable}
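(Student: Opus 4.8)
The plan is to mimic the structure of the proof of Lemma~\ref{lem:Box}, but to replace the ad hoc ``triangle inequality through $a_{11}$'' arguments with systematic path arguments in the row and column graphs of the tiling. By Definition~\ref{def:Rigidity}, I need to bound $\abs{a_{r_1 r_1} b_{c_1 c_1} - \frac{1}{n}\tr(a\x b)}$ and $\abs{a_{r_1 r_2} b_{c_1 c_2}}$ (for $(r_1,c_1)\neq(r_2,c_2)$) by $2D\,\delta\tr(a\x b)$, where $\delta \defeq \delta_S(a\x b)$ and $n = d_A d_B$.

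\textbf{Step 1 (local estimates from pairs of tiles).} Every domino tile $T = \{(r,c),(r',c')\}$ of area two contributes, via Lemma~\ref{lem:PairOfTiles} applied to $T$ together with any other tile, and via the internal argument of Equations~(\ref{eq:b-diag1}--\ref{eq:b-diag3}), bounds of the form $\abs{a_{rr}-a_{r'r'}} \le \delta\tr(a)$ when $r\neq r'$, $c=c'$ (a ``vertical'' domino), and symmetrically $\abs{b_{cc}-b_{c'c'}}\le\delta\tr(b)$ for ``horizontal'' dominoes. The key observation: $i\sim j$ in the row graph exactly when some tile pairs row $i$ with row $j$ in a common column, so each row-graph edge gives $\abs{a_{ii}-a_{jj}}\le\delta\tr(a)$; likewise each column-graph edge gives $\abs{b_{cc}-b_{c'c'}}\le\delta\tr(b)$.

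\textbf{Step 2 (diagonal bound via diameter).} Fix any two rows $r_1,r_2$. Since the row graph is connected with diameter $\le D$, take a shortest path and telescope: $\abs{a_{r_1 r_1}-a_{r_2 r_2}} \le D\,\delta\tr(a)$. Similarly $\abs{b_{c_1 c_1}-b_{c_2 c_2}}\le D\,\delta\tr(b)$. Then, exactly as in Equations~(\ref{eq:ab-diag1}--\ref{eq:ab-diag4}), $\abs{a_{r_1r_1}b_{c_1c_1} - a_{r_2r_2}b_{c_2c_2}} \le \abs{b_{c_1c_1}}D\delta\tr(a) + \abs{a_{r_2r_2}}D\delta\tr(b) \le 2D\,\delta\tr(a\x b)$ since $b_{c_1c_1}\le\tr(b)$, $a_{r_2r_2}\le\tr(a)$. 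Averaging over all $(r_2,c_2)$ as in Equations~(\ref{eq:ab-trace1}--\ref{eq:ab-trace3}) and using $\tr(a\x b) = \sum_{r,c} a_{rr}b_{cc}$ gives the diagonal bound~(\ref{eq:Diag}) with constant $2D \le 2D$.

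\textbf{Step 3 (off-diagonal bound).} For an off-diagonal entry $a_{r_1r_2}b_{c_1c_2}$ with $(r_1,c_1)$ in tile $T_1$ and $(r_2,c_2)$ in tile $T_2$: if $T_1\neq T_2$, Lemma~\ref{lem:PairOfTiles} with $\abs{T_1},\abs{T_2}\le 2$ gives $\abs{a_{r_1r_2}b_{c_1c_2}} \le 2\delta\tr(a\x b)$ directly. The only hard case is when $(r_1,c_1)$ and $(r_2,c_2)$ lie in the \emph{same} tile of area two — say a vertical domino, so $c_1 = c_2$ and $r_1\neq r_2$, and I must bound $\abs{a_{r_1r_2}}\abs{b_{c_1c_1}}$. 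Here I replace one index using the graph: pick a neighbor $r_1'$ of $r_1$ in the row graph with $r_1'\notin\{r_1,r_2\}$ if possible (irreducibility plus $d_A\ge 3$ should guarantee enough room, though the degenerate small cases need checking), so that $(r_1',c)$ and $(r_1,c)$ share a tile for some column $c$, making $(r_1,c_1)$ and $(r_1',?)$ land in different tiles — apply Lemma~\ref{lem:PairOfTiles} to get $\abs{a_{r_1'r_2}}\abs{b_{c_1c_2}}\le 2\delta\tr(a\x b)$, then convert $a_{r_1'r_2}$ back to $a_{r_1r_2}$ via $\abs{a_{r_1r_2}}\le\abs{a_{r_1'r_2}} + \abs{a_{r_1r_2}-a_{r_1'r_2}}$ and a Cauchy--Schwarz-type estimate $\abs{a_{r_1r_2}-a_{r_1'r_2}}^2 \lesssim (a_{r_1r_1}+a_{r_1'r_1'})\cdot(\text{small})$ — this is where the $\delta\tr(a)\ge\abs{a_{r_1r_1}-a_{r_1'r_1'}}$ bound from Step~1 feeds in, analogously to Equations~(\ref{eq:hard1}--\ref{eq:Offdiag3}). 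Collecting constants, each bound is at most $(1+\sqrt2)\cdot(\text{something}\le D)$, and one checks this stays $\le 2D$.

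\textbf{Main obstacle.} The delicate part is Step~3's same-tile case: unlike the domino $S_3$ proof where one could always hop to row $1$, here I must argue that the row/column graph always provides a usable detour vertex, and I must carefully track that the resulting constant does not exceed $2D$ — the bookkeeping of combining a path of length up to $D$ (from the diagonal estimates) with the $(1+\sqrt2)$ factor from the index-replacement trick is where an honest constant chase is unavoidable. I would also need to handle irreducible tilings that are too small for a ``fresh'' detour vertex to exist as genuine edge cases.
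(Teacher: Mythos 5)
Your Steps 1 and 2 (one estimate per row/column-graph edge, then telescoping along a shortest path of length at most $D$, then the product and averaging arguments of Equations~(\ref{eq:ab-diag1}--\ref{eq:ab-trace3})) and the distinct-tile part of Step 3 coincide with the paper's proof. The genuine gap is in your same-tile case. For a vertical tile $\set{(r_1,c),(r_2,c)}$ the entry to control is $\abs{a_{r_1 r_2}}\cdot\abs{b_{cc}}$, and you propose to first bound $\abs{a_{r_1' r_2}}\cdot\abs{b_{cc}}$ for an auxiliary row $r_1'$ (that part is fine: any $r_1'\notin\set{r_1,r_2}$ puts $(r_1',c)$ and $(r_2,c)$ in different tiles, so Lemma~\ref{lem:PairOfTiles} applies) and then pass from $a_{r_1'r_2}$ back to $a_{r_1r_2}$ via an estimate of the form $\abs{a_{r_1r_2}-a_{r_1'r_2}}^2\lesssim(a_{r_1r_1}+a_{r_1'r_1'})\cdot(\text{small})$, with the ``small'' factor supposedly fed by the diagonal bound $\abs{a_{r_1r_1}-a_{r_1'r_1'}}\le\delta\tr(a)$. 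No such estimate exists: closeness of diagonal entries gives no control on differences of off-diagonal entries. For instance, $a=(\ket{r_1}+\ket{r_2})(\bra{r_1}+\bra{r_2})+\ketbras{r_1'}$ is positive semidefinite with all three relevant diagonal entries equal (so everything your Step 1 provides is zero), yet $a_{r_1r_2}=1$ while $a_{r_1'r_2}=0$. In the corresponding step of the domino proof (Equations~(\ref{eq:hard1})--(\ref{eq:Offdiag3})) it is always the \emph{diagonal} factor that gets swapped, never the off-diagonal one, and that is precisely what makes it work.

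The correct route, which is the paper's, keeps the off-diagonal factor fixed and swaps the diagonal one. By irreducibility there is a column $c'$ such that $(r_1,c')$ and $(r_2,c')$ lie in different tiles: otherwise $\set{r_1,r_2}\times\set{c'}$ would be a tile for every $c'$, making $\set{r_1,r_2}$ a connected component of the row graph, contradicting connectedness since $d_A\ge3$. Lemma~\ref{lem:PairOfTiles} applied to those two cells gives $2\delta\tr(a\x b)\ge\abs{a_{r_1r_2}}\cdot\abs{b_{c'c'}}$, and then your own Step 2 bound $\abs{b_{cc}}-\abs{b_{c'c'}}\le D\delta\tr(b)$ together with $\abs{a_{r_1r_2}}\le\tr(a)$ (positive semidefiniteness) yields $\abs{a_{r_1r_2}}\cdot\abs{b_{cc}}\le(D+2)\,\delta\tr(a\x b)$, which is at most $2D\,\delta\tr(a\x b)$ because $D\ge2$ here (for $d_A,d_B\ge3$ a domino-type tiling cannot make both the row and column graphs complete, since that would require $d_A(d_A-1)+d_B(d_B-1)\le d_Ad_B$ cells). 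The horizontal-tile case is symmetric. This also dissolves the ``main obstacle'' you flag: the auxiliary row/column need not be a graph neighbour nor avoid any particular vertices — it only has to separate the two cells of the tile, and its existence follows from connectedness alone; no $(1+\sqrt{2})$-type factor or extra constant chase arises, the constants being simply $2$, $D+2$, and $2D$.
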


The proof is similar to that of Lemma~\ref{lem:Box} and appears in Appendix~\ref{apx:DimBox}.

To bound the error probability, we use Theorem~\ref{thm:cL} with $n = d_A d_B$, $L = 2$, and $c = 2D$.
\begin{corollary}
Any LOCC measurement for discriminating states from an orthonormal product basis of $\bip{d_A}{d_B}$ that induces an irreducible domino-type tiling of diameter $D$ errs with probability
\begin{equation}
  p_{\error} \geq \frac{1}{216 D^2 (d_A d_B)^5}.
\end{equation}
\end{corollary}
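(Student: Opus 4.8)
The corollary is immediate from Lemma~\ref{lem:DimBox} and Theorem~\ref{thm:cL}, so the only real work is Lemma~\ref{lem:DimBox}, and I would spend all the effort there. For the reduction: a domino-type tiling has every tile of area at most $2$, so the largest tile of $S$ has size $L=2$; since $S$ is an orthonormal basis of $\bip{d_A}{d_B}$ we have $n=d_A d_B$; and Lemma~\ref{lem:DimBox} gives that $S$ is $c$-rigid with $c=2D$. Feeding $n=d_A d_B$, $L=2$, $c=2D$ into Theorem~\ref{thm:cL} gives $p_{\error}\geq\frac{2}{27}\cdot\frac{1}{(cL)^2 n^5}=\frac{2}{27}\cdot\frac{1}{(4D)^2(d_A d_B)^5}=\frac{1}{216\,D^2(d_A d_B)^5}$, as claimed. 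To prove Lemma~\ref{lem:DimBox} I would transcribe the proof of Lemma~\ref{lem:Box}, letting the diameter $D$ play the role of the constant $2$ special to the $3\times 3$ domino tiling. Fix $a\in\Pos(\C^{d_A})$ and $b\in\Pos(\C^{d_B})$ with $\bra{\psi_i}(a\x b)\ket{\psi_i}\neq 0$ for all $i$, write $\delta\defeq\delta_S(a\x b)$ and note $\tr(a\x b)=\tr(a)\tr(b)$; we must bound $\bigl|a_{ij}b_{kt}-\tfrac1n\tr(a\x b)\bigr|$ on the diagonal (where $i=j$ and $k=t$) and $\abs{a_{ij}b_{kt}}$ off it, each by $2D\,\delta\tr(a\x b)$.

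\textbf{Diagonal entries.} The two basis states on a given size-$2$ tile (for domino-type states, balanced superpositions), plugged into the definition of $\delta$ exactly as in Equations~(\ref{eq:b-diag1})--(\ref{eq:b-diag3}), show that a vertical domino on rows $i,j$ forces $\abs{a_{ii}-a_{jj}}\leq\delta\tr(a)$ and a horizontal domino on columns $k,t$ forces $\abs{b_{kk}-b_{tt}}\leq\delta\tr(b)$ (the cancellations there are legitimate because a horizontal domino in row $r$ has Alice part $\ket{r}$, hence $a_{rr}>0$, and dually for columns and $b$). Two rows are adjacent in the row graph exactly when they form such a vertical domino, so summing these bounds along a shortest path -- of length at most $D$ -- gives $\abs{a_{ii}-a_{jj}}\leq D\,\delta\tr(a)$ for \emph{all} $i,j$, and likewise $\abs{b_{kk}-b_{tt}}\leq D\,\delta\tr(b)$ for all $k,t$. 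Then, as in Equations~(\ref{eq:ab-diag1})--(\ref{eq:ab-trace3}), $\abs{a_{ii}b_{kk}-a_{jj}b_{tt}}\leq\tr(b)\abs{a_{ii}-a_{jj}}+\tr(a)\abs{b_{kk}-b_{tt}}\leq 2D\,\delta\tr(a\x b)$, and averaging over $(j,t)$ gives $\bigl|a_{ii}b_{kk}-\tfrac1n\tr(a\x b)\bigr|\leq 2D\,\delta\tr(a\x b)$.

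\textbf{Off-diagonal entries.} For an off-diagonal entry $a_{ij}b_{kt}$ the grid cells $(i,k)$ and $(j,t)$ are distinct. If they lie in distinct tiles $T_1,T_2$, the ``pair of tiles'' Lemma~\ref{lem:PairOfTiles} gives $\abs{a_{ij}b_{kt}}\leq\sqrt{\abs{T_1}\abs{T_2}}\,\delta\tr(a\x b)\leq 2\,\delta\tr(a\x b)\leq 2D\,\delta\tr(a\x b)$, since every tile has area at most $2$. The only remaining case is when $(i,k)$ and $(j,t)$ share a tile; being distinct, that tile has area $2$, so -- say -- it is a vertical domino $\{i,j\}\times\{c\}$ with $k=t=c$ (the horizontal case is symmetric), and we must bound $\abs{a_{ij}b_{cc}}$, which Lemma~\ref{lem:PairOfTiles} does not reach directly. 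Here irreducibility enters: if, for every column $c'$, the cells $(i,c')$ and $(j,c')$ lay in a common tile, that tile would be the vertical domino $\{i,j\}\times\{c'\}$, so $\{i,j\}$ would be an isolated component of the row graph, contradicting its connectedness when $d_A\geq 3$. Hence some column $c'$ has $(i,c')$ and $(j,c')$ in distinct tiles; applying Lemma~\ref{lem:PairOfTiles} to this pair of cells bounds $\abs{a_{ij}}\,\abs{b_{c'c'}}\leq 2\,\delta\tr(a\x b)$, and transporting the diagonal factor back to column $c$ via $\abs{b_{cc}}\leq\abs{b_{c'c'}}+\abs{b_{cc}-b_{c'c'}}$ and $\abs{a_{ij}}\leq\tr(a)$ yields
\begin{equation*}
  \abs{a_{ij}b_{cc}}\leq\abs{a_{ij}}\,\abs{b_{c'c'}}+\tr(a)\cdot D\,\delta\tr(b)\leq(D+2)\,\delta\tr(a\x b)\leq 2D\,\delta\tr(a\x b),
\end{equation*}
the last step using $D\geq 2$. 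Finally, $D\geq 2$ for every irreducible domino-type tiling with $d_A,d_B\geq 3$, because the row and column graphs cannot both be complete -- that would require $d_A(d_A-1)+d_B(d_B-1)$ pairwise disjoint grid cells, more than the $d_A d_B$ available.

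\textbf{The hard part.} The reduction and the diagonal estimate are routine. The genuine obstacle is the off-diagonal entry that is ``internal'' to a size-$2$ tile, where the ``pair of tiles'' lemma does not apply: one must use irreducibility to find a nearby cell in a different tile, transport the bound back at a cost governed by the diameter, and separately check $D\geq 2$ so that $D+2\leq 2D$. Verifying this for both orientations of the size-$2$ tile and for every such internal entry -- and confirming that no step pushes the constant past $2D$ -- is the bookkeeping carried out in full in Appendix~\ref{apx:DimBox}.
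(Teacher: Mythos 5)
Your proposal is correct and follows essentially the same route as the paper: the corollary is obtained by feeding $n=d_Ad_B$, $L=2$, $c=2D$ into Theorem~\ref{thm:cL}, and your proof of the $2D$-rigidity (diagonal differences via size-two tiles plus a triangle inequality along a path of length at most $D$, pair-of-tiles for cross-tile entries, and irreducibility to handle entries internal to a size-two tile at cost $(D+2)\leq 2D$) is the argument of Appendix~\ref{apx:DimBox}. Your explicit verification that $D\geq 2$ for irreducible domino-type tilings with $d_A,d_B\geq 3$ is a detail the paper leaves implicit, and is a welcome addition.
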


\subsection{Nonlocality of the rotated domino states} \label{sect:Rotated}

The following is an analog of Lemma~\ref{lem:Box} for rotated domino states.

\begin{restatable}{lemma}{ROTBOX}\label{lem:RotBox}
The rotated domino basis $S_3(\thetas)$ is $\frac{C}{\sin 2 \theta}$-rigid where
\begin{equation}
  C \defeq 6 \Bigl( 1 + 6 \sqrt{2} + 2 \sqrt{3 (6 + \sqrt{2})} \Bigr) \leq 114
  \label{eq:C}
\end{equation}
and $\theta \defeq \min \set{\theta_1, \theta_2, \theta_3, \theta_4}$.
\end{restatable}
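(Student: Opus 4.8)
### Proof proposal for Lemma~\ref{lem:RotBox}

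The plan is to mimic the structure of the proof of Lemma~\ref{lem:Box} for the original domino basis, tracking how the rotation angles enter. As in that proof, the goal is to show that for all $a \in \Pos(\C^3)$, $b \in \Pos(\C^3)$ with all relevant diagonal entries nonzero, every entry of $(a \x b)/\tr(a\x b) - \id/9$ is at most $\frac{C}{\sin 2\theta}\,\delta_S(a\x b)$ in absolute value, where $\delta = \delta_{S_3(\thetas)}(a\x b)$. We again split into a diagonal part and an off-diagonal part, but now the product form $\delta = \delta_A \cdot \delta_B$ from Equation~(\ref{eq:delta ab}) must be evaluated on the rotated pairs $\ket{\psi_{2,3}}, \dotsc, \ket{\psi_{8,9}}$ rather than the $\ket{i\pm j}$ pairs. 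First I would compute, for the pair $\ket{\psi_{2}} = \ket{0}(\cos\theta_1\ket{0}+\sin\theta_1\ket{1})$ and $\ket{\psi_3} = \ket{0}(-\sin\theta_1\ket{0}+\cos\theta_1\ket{1})$, the Bob-side factor: the numerator is $\abs{\bra{\beta_2}b\ket{\beta_3}}$, which expands (writing $c=\cos\theta_1$, $s=\sin\theta_1$) to $\abs{cs(b_{11}-b_{00}) + (c^2 b_{10} - s^2 b_{01})}$ — compare this with the unrotated case where the factor was essentially $\abs{b_{00}-b_{11}}/\tr b$. Here the analog of Equation~(\ref{eq:diag1}) becomes $\delta\tr(b) \gtrsim \sin 2\theta_1\,\abs{b_{00}-b_{11}}$ (up to absorbing the imaginary part of $b_{10}$ and bounding the positive-semidefinite cross terms as in Equations~(\ref{eq:b-diag1}--\ref{eq:b-diag3})), so a factor of $\sin 2\theta_1 \geq \sin 2\theta$ appears in every diagonal estimate. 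The same happens on the Alice side using the pairs $\ket{\psi_{6,7}}$ and $\ket{\psi_{8,9}}$. Propagating these through the triangle-inequality chain that yielded Equations~(\ref{eq:Triangle inequality}) and~(\ref{eq:ab-diag1}--\ref{eq:ab-diag4}) gives a bound of the form $\abs{a_{ii}b_{jj} - \frac19\tr(a\x b)} \leq (\text{const}/\sin 2\theta)\,\delta\tr(a\x b)$.

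For the off-diagonal elements, the tiling is exactly the domino tiling (rotation does not change the supports), so Lemma~\ref{lem:PairOfTiles} applies verbatim and directly bounds $\abs{a_{r_1 r_2}}\abs{b_{c_1 c_2}}$ by $\sqrt{\abs{T_1}\abs{T_2}}\,\delta\tr(a\x b)$ whenever $(r_1,c_1)$ and $(r_2,c_2)$ lie in distinct tiles — and crucially these bounds carry \emph{no} $1/\sin 2\theta$ factor. The only problematic entries are, as before, the four "same-tile" off-diagonal products listed in Equation~(\ref{eq:Hard cases}), e.g.\ $\abs{a_{00}}\abs{b_{01}}$ for the tile $\set{(0,0),(0,1)}$. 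Here I would repeat the trick of Equations~(\ref{eq:hard1}--\ref{eq:Offdiag3}): pick a coordinate in a \emph{different} tile (e.g.\ $(1,0)$), apply Lemma~\ref{lem:PairOfTiles} to get $\sqrt2\,\delta\tr(a\x b) \geq \abs{a_{11}}\abs{b_{01}}$, and then convert $\abs{a_{11}}$ into $\abs{a_{00}}$ using the diagonal bound $\delta\tr(a) \gtrsim \sin 2\theta\,\abs{a_{00}-a_{11}}$ — and it is exactly this replacement step that reintroduces a $1/\sin 2\theta$ factor into the off-diagonal estimate. Collecting the worst constant over all cases, and optimizing the constants carefully (this is where the somewhat ornate expression $C = 6(1 + 6\sqrt2 + 2\sqrt{3(6+\sqrt2)})$ comes from — presumably from balancing a quadratic that arises when one is forced to chain the $\abs{a_{00}}-\abs{a_{11}}$ substitution against a cross-term bound), yields the claimed $\frac{C}{\sin 2\theta}$-rigidity.

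The main obstacle I anticipate is the bookkeeping in the "hard" off-diagonal cases. In the unrotated proof the substitution $\abs{a_{11}} \geq \abs{a_{00}} - \delta\tr(a)$ was essentially free because $\delta\tr(a) \geq \abs{a_{00}-a_{11}}$ with constant $1$; now the corresponding inequality only gives $\delta\tr(a) \geq (\sin 2\theta)\abs{a_{00}-a_{11}}$, so one loses a $1/\sin 2\theta$, but one must also be careful that the \emph{denominators} appearing when one lower-bounds $\delta$ via Equation~(\ref{eq:delta ab}) (the quantities $\bra{\beta_i}b\ket{\beta_i}$ etc.)\ do not themselves degrade with $\theta$ — they should stay comparable to $\tr b$ as long as one is not at a degenerate angle, but verifying this cleanly (e.g.\ that the cross-terms $c^2 b_{10} - s^2 b_{01}$ can always be absorbed or bounded using positive-semidefiniteness $\abs{b_{01}}^2 \leq b_{00}b_{11}$) requires some care. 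A secondary subtlety is that, unlike the symmetric domino case, the four tiles now carry four different angles $\theta_1,\dotsc,\theta_4$; one must check that taking $\theta = \min_i\theta_i$ genuinely suffices in every place a specific $\theta_i$ appeared, which it does since $\sin$ is increasing on $[0,\pi/4]$ and every estimate only ever needs a \emph{lower} bound on some $\sin 2\theta_i$. I would defer the full constant-chasing to the appendix (as the paper does) and in the main text only record the structure above together with the final value of $C$.
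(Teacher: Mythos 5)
There is a genuine gap, and it sits exactly at the step you describe as ``absorbing the cross terms.'' For the unrotated dominoes the pair $\ket{0}\ket{0\pm1}$ gives $\abs{b_{00}-b_{11}}\leq\delta\tr(b)$ cleanly because the real cross term $\Re b_{01}\cos2\theta$ vanishes at $\theta=\pi/4$. For $\theta<\pi/4$ it does not vanish: the pair $\ket{\psi_2},\ket{\psi_3}$ only yields $\abs{\bra{\beta_2}b\ket{\beta_3}}\geq\Abs{\tfrac{b_{11}-b_{00}}{2}\sin2\theta+\Re b_{01}\cos2\theta}$, and the two terms can cancel (e.g.\ $b_{00}=1$, $b_{11}=\tfrac12$, $\Re b_{01}=\tfrac{\sin2\theta}{4\cos2\theta}$, which is perfectly consistent with $\abs{b_{01}}^2\leq b_{00}b_{11}$). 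So your claimed analog of Equation~(\ref{eq:diag1}), $\delta\tr(b)\gtrsim\sin2\theta\,\abs{b_{00}-b_{11}}$, does not follow from this pair, and positive semidefiniteness alone cannot rescue it. This is why the paper's Lemma~\ref{claim:a and b diagonal} keeps the extra term $\abs{\Re b_{j1}}$ on the right-hand side, and then must bound $\abs{b_{01}}$ \emph{separately} via the pair-of-tiles lemma. That bound (Lemma~\ref{claim:a and b bounds}) only gives $\abs{a_{11}}\abs{b_{01}}\leq\sqrt2\,\delta\norm{a\x b}_\infty$, so extracting $\abs{b_{01}}$ requires a lower bound $a_{11}\geq\tfrac1s\norm{a}_\infty$ (and symmetrically $b_{11}\geq\tfrac1s\norm{b}_\infty$) --- and establishing \emph{those} is the real work: the paper bootstraps them in Lemma~\ref{claim:Small delta} under the hypothesis $\delta/\sin2\theta\leq1/r(s)$, and handles the complementary regime by the trivial observation that every entry of $\tfrac{a\x b}{\tr(a\x b)}$ is at most $1$ (Lemma~\ref{claim:Small and big delta}). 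Your proposal contains none of this conditional/dichotomy structure, and without it the argument does not close.

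A secondary symptom of the missing structure is your guess about the constant: $C$ does not arise from chaining the $\abs{a_{00}}$-for-$\abs{a_{11}}$ substitution, but from optimizing the threshold parameter $s$ in the dichotomy, $C=\min_{s\geq3}\max\set{8(1+\sqrt2 s),\,r(s)}=\min_{s\geq3}2(1+\sqrt2 s)\tfrac{3s}{s-3}$, attained at $s=3+\sqrt{9+3/\sqrt2}$. Also note that the ``hard'' off-diagonal entries of Equation~(\ref{eq:Hard cases}) are handled in the rotated case not by the replacement trick of Equations~(\ref{eq:hard1}--\ref{eq:Offdiag3}) but directly from the bound $\abs{b_{01}}\leq\sqrt2 s\,\delta\norm{b}_\infty$, again presupposing the diagonal lower bounds. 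Your observations that the tiling (hence Lemma~\ref{lem:PairOfTiles}) is unchanged and that $\theta=\min_i\theta_i$ suffices are correct, but the core of the proof --- controlling $\Re b_{01}$, which forces the conditional bounds on $a_{11},b_{11}$ and the large-$\delta$/small-$\delta$ case split --- is absent.
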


The proof appears in Appendix~\ref{apx:RotBox}.

Again, we use Theorem~\ref{thm:cL} to lower bound the error probability. Here the parameters are $n = 9$, $L = 2$, and $c = 114/\sin(2 \theta)$.

\begin{corollary}
\label{cor:RotBox}
Any LOCC measurement for discriminating $S_3(\thetas)$, the set of rotated domino states, errs with probability
\begin{equation}
  p_{\error}
  \geq 2.4 \times 10^{-11} \sin^2 (2 \theta),
\end{equation}
where $\theta \defeq \min \set{\theta_1, \theta_2, \theta_3, \theta_4}$.
\end{corollary}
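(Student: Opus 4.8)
The plan is to obtain Corollary~\ref{cor:RotBox} as an immediate consequence of Lemma~\ref{lem:RotBox} together with Theorem~\ref{thm:cL}: once the rigidity of $S_3(\thetas)$ has been controlled, all that remains is to substitute the relevant parameters and carry out an arithmetic estimate.

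First I would record the three inputs to Theorem~\ref{thm:cL}. The set $S_3(\thetas)$ is an orthonormal product basis of $\bip{3}{3}$, so it consists of $n = d_A d_B = 9$ states; its induced tiling is domino-type (one $1 \times 1$ tile and four dominoes), so the largest tile has area $L = 2$; and by Lemma~\ref{lem:RotBox} the basis is $c$-rigid with $c = C/\sin 2\theta$, where $C = 6\bigl(1 + 6\sqrt2 + 2\sqrt{3(6+\sqrt2)}\bigr) \leq 114$ (Equation~(\ref{eq:C})) and $\theta = \min\set{\theta_1,\theta_2,\theta_3,\theta_4}$.

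Next I would feed these into Theorem~\ref{thm:cL}, which for a $c$-rigid orthonormal basis of $\bip{d_A}{d_B}$ gives $p_{\error} \geq \frac{2}{27}\frac{1}{(cL)^2 n^5}$. Using $cL = 2C/\sin 2\theta$ and $n = 9$, this becomes
\[
  p_{\error} \;\geq\; \frac{2}{27}\cdot\frac{\sin^2 2\theta}{4C^2\cdot 9^5} \;=\; \frac{\sin^2 2\theta}{54\,C^2\,9^5}.
\]
Substituting the bound $C \leq 114$ and $9^5 = 59049$ makes the denominator $54\cdot 114^2\cdot 59049 < 4.2\times 10^{10}$, hence $p_{\error} \geq 2.4\times 10^{-11}\sin^2 2\theta$, as claimed. (Equivalently, one may route through Lemma~\ref{lem:ceta}, obtaining $\eta \geq 1/(cL) = \sin 2\theta/(2C) \geq (\sin 2\theta)/228$, and then apply Theorem~\ref{thm:eta} with $n = 9$; since $228^2 = 4C^2$, this reproduces the same numerical constant.)

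There is essentially no obstacle internal to this corollary; the substance has been packed into Lemma~\ref{lem:RotBox}, proved in Appendix~\ref{apx:RotBox}. The hard part there is to adapt the pair-of-tiles argument of Lemma~\ref{lem:Box} to the rotated case, where the two states supported on a domino are no longer the symmetric combinations $\ket{i \pm j}$ but the $\theta$-dependent pairs (see Equation~(\ref{eq:psi23})). Tracking these angles through the diagonal estimates — where the denominators analogous to Equation~(\ref{eq:b-diag1}) no longer conveniently cancel — and through the four delicate off-diagonal entries is what produces the factor $1/\sin 2\theta$ in the rigidity constant; this is precisely the dependence that makes the error bound degenerate as $\theta \to 0$, consistent with the fact that at $\theta = 0$ the basis is a tensor product basis and is arbitrarily rigid.
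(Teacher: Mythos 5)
Your proposal is correct and takes essentially the same route as the paper, which obtains the corollary by substituting $n = 9$, $L = 2$, and $c = 114/\sin 2\theta$ (from Lemma~\ref{lem:RotBox}) into Theorem~\ref{thm:cL}. One small presentational point: bounding the denominator by $4.2\times 10^{10}$ would only give about $2.38\times 10^{-11}$, so you should instead use the exact value $54\cdot 114^2\cdot 9^5 \approx 4.14\times 10^{10}$, which yields $\approx 2.41\times 10^{-11} \geq 2.4\times 10^{-11}$ as claimed.
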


Note that as $\theta$ approaches zero, the rigidity bound tends to infinity and the bound on the error probability goes to zero. As the original domino basis is transformed continuously to the standard basis, the nonlocality decreases to zero. Moreover, since any orthonormal product basis of $\bips{3}$ is equivalent to $S_3(\thetas)$ (up to local unitary transformations) for some angles $\theta_i$ \cite{Feng3}, Corollary~\ref{cor:RotBox} effectively covers all product bases of $\bips{3}$.

\section{Limitations of the framework} \label{sec:Limitations}

\subsection{Dependence of the nonlocality constant on \texorpdfstring{$n$}{n}} \label{subsec:SmallBound}

Recall that in Theorem~\ref{thm:eta} we established the lower bound $p_{\error} \geq \frac{2}{27} \frac{\eta^2}{n^5}$ on the error probability, where $\eta$ is the nonlocality constant and $n$ is the number of states. Intuitively it seems that it should be possible to prove a stronger lower bound on $p_{\error}$ as $n$ increases. However, to lower bound $p_{\error}$ by a fixed constant in \emph{any} dimension using our framework, one would have to prove a lower bound on $\eta$ that increases with $n$. 

Let us consider the problem of discriminating orthonormal product states. In the next lemma we show that it is not possible to obtain such strong error bounds using our framework in its present form. We do this by proving a fixed upper bound on the nonlocality constant in any dimension.

\begin{lemma}
Let $S$ be a set of orthonormal product states in $\bip{d_A}{d_B}$. The nonlocality constant of $S$ satisfies $\eta \leq 2$.
\end{lemma}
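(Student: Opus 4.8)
The plan is to establish the stronger bound $\eta \le \frac{n}{n-1}$ (which is at most $2$ since $n \ge 2$) by producing, for every $\delta > 0$, a single admissible pair of operators $a \x b$ whose associated ratio in Definition~\ref{def:eta} is at most $\frac{n(1+n\delta)}{n-1}$, and then letting $\delta \to 0^+$. Recall that $\eta$ is the \emph{infimum} of that ratio over all valid $a,b$, so one good family of witnesses suffices.

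Concretely, I would fix one state $\ket{\psi_1} = \ket{\alpha_1}\ket{\beta_1} \in S$ and take the positive definite operators
\[
  a \defeq \ketbras{\alpha_1} + \delta I_{d_A}, \qquad b \defeq \ketbras{\beta_1} + \delta I_{d_B}.
\]
Writing $G_{ij} \defeq \bra{\psi_i}(a \x b)\ket{\psi_j}$, the product structure of $S$ gives the factorization $G_{ij} = \bra{\alpha_i}a\ket{\alpha_j}\,\bra{\beta_i}b\ket{\beta_j}$, so in particular $G_{ii} \ge \delta^2 > 0$ and $(a,b)$ is admissible. The key step is the diagonal estimate: $G_{11} = (1+\delta)^2$, whereas for $i \neq 1$ orthonormality of $S$ gives $\braket{\alpha_1}{\alpha_i}\braket{\beta_1}{\beta_i} = \braket{\psi_1}{\psi_i} = 0$, hence at least one of these two overlaps vanishes; since $\bra{\alpha_i}a\ket{\alpha_i} = \abs{\braket{\alpha_1}{\alpha_i}}^2 + \delta$ (and similarly for $b$), this forces $G_{ii} \le \delta(1+\delta)$. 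Therefore $\max_k G_{kk} = G_{11} = (1+\delta)^2$ and $\sum_j G_{jj} \le (1+\delta)^2 + (n-1)\delta(1+\delta) = (1+\delta)(1+n\delta)$, so the denominator appearing in Definition~\ref{def:eta} satisfies
\[
  \frac{\max_k G_{kk}}{\sum_j G_{jj}} - \frac1n \;\ge\; \frac{1+\delta}{1+n\delta} - \frac1n \;=\; \frac{n-1}{n(1+n\delta)} \;>\; 0 .
\]

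For the numerator I would invoke only the crude bound $\delta_S(a \x b) = \max_{i \ne j}\abs{\braket{\phi_i}{\phi_j}} \le 1$ from Definition~\ref{def:delta} (a maximum of overlaps of unit vectors, or equivalently Cauchy--Schwarz for the positive semidefinite form $\bra{\cdot}(a\x b)\ket{\cdot}$). Dividing, the ratio for this $(a,b)$ is at most $\frac{n(1+n\delta)}{n-1}$; since this holds for every $\delta > 0$, we conclude $\eta \le \inf_{\delta > 0}\frac{n(1+n\delta)}{n-1} = \frac{n}{n-1} \le 2$. I expect the only delicate point to be the diagonal estimate for $i \ne 1$: one must use orthogonality to eliminate the $\Theta(1)$ contribution to $G_{ii}$ while keeping $G_{ii}$ strictly positive (this is exactly why the regularizer $\delta I$ is needed), and this is precisely where both the product structure and the orthonormality hypotheses of the lemma enter — for a generic pair of states one would have $G_{ii} = \Theta(1)$ for all $i$ and the argument would collapse.
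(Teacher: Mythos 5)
Your proposal is correct and follows essentially the same route as the paper: the same regularized witness $a = \ketbras{\alpha_1} + \delta I_{d_A}$, $b = \ketbras{\beta_1} + \delta I_{d_B}$, the trivial bound $\delta_S(a \x b) \leq 1$, and the limit $\delta \to 0^+$ yielding $\eta \leq \frac{n}{n-1} \leq 2$. The only (immaterial) difference is that you bound $\sum_j G_{jj}$ term-by-term via orthogonality, whereas the paper simply bounds it by $\tr(a \x b) = (1+\delta d_A)(1+\delta d_B)$, which also tends to $1$.
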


\begin{proof}
Let $n = \abs{S}$ and $\ket{\psi_i} = \ket{\alpha_i} \ket{\beta_i}$. Fix some small $\epsilon > 0$, choose any $i \in [n]$, and define
\begin{align}
  a &= \ketbra{\alpha_i}{\alpha_i} + \epsilon I_{d_A}, &
  b &= \ketbra{\beta _i}{\beta _i} + \epsilon I_{d_B}.
\end{align}
Note that $a$ and $b$ have full rank and are positive semidefinite. We can easily check that
\begin{align}
  \tr(a) &= 1 + \epsilon d_A, &
  \tr(b) &= 1 + \epsilon d_B, &
  \max_{k \in [n]} \bra{\psi_k} (a \x b) \ket{\psi_k} &= (1 + \epsilon)^2.
\end{align}
Using these observations together with the definition of $\eta$ in Equation~(\ref{eq:eta}), we get
\begin{align}
  \eta \biggl( \frac{(1+\epsilon)^2}{(1 + \epsilon d_A)(1 + \epsilon d_B)} - \frac{1}{n} \biggr)
  & \leq \eta \biggl(
         \frac{      \max_{k \in [n]} \bra{\psi_k} (a \x b) \ket{\psi_k}}
              {\;\;\;\sum_{j \in [n]} \bra{\psi_j} (a \x b) \ket{\psi_j}} - \frac{1}{n}
       \biggr)\\
  & =\delta_S(a \x b) \\
  &\leq 1,
\end{align}
where the last inequality follows directly from Definition~\ref{def:delta}. As $\epsilon \rightarrow 0$, the left-hand side goes to $\eta(1-\frac{1}{n})$. We can choose $\epsilon$ arbitrarily small, so $\eta(1 - \frac{1}{n}) \leq 1$ and thus $\eta \leq \frac{n}{n-1} = 1 + \frac{1}{n-1} \leq 2$ since $n \geq 2$.
\end{proof}

\subsection{Comparison to the result of Kleinmann, Kampermann, and Bru\ss} \label{sec:KKB}

The main application of the framework introduced in this paper is to show the impossibility of asymptotically discriminating a set of states $S$ with LOCC. We do this by showing that the nonlocality constant of $S$ is strictly positive. In other words, the nonlocality constant being zero is a necessary condition for the sates in $S$ to be asymptotically distinguishable with LOCC. Another necessary condition is presented in recent work of Kleinmann, Kampermann, and Bru\ss:

\begin{theorem*}[\cite{KKB}]
Let $S=\set{\rho_1,\dotsc,\rho_n}$ be a set of states such that $\bigcap_i \ker \rho_i$ does not contain any nonzero product vector. 
Then $S$ can be asymptotically discriminated with LOCC only if for all $\chi$ with $1/n\leq \chi \leq 1$ there exists a positive semidefinite product operator $E$ satisfying all of the following:
\begin{enumerate}
\item $\sum_i\tr(E\rho_i)=1$, 
\item $\max_i\tr(E\rho_i)=\chi$,
\item $\tr(E\rho_i E \rho_j)=0$ for any $i\neq j$.
\end{enumerate}
\end{theorem*}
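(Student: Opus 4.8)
The idea is to obtain $E$ as a limit point of suitably normalized ``stopped'' measurement operators extracted from a sequence of finite LOCC protocols whose error tends to $0$. Fix $\chi$ with $1/n < \chi < 1$ and set $\ve := \chi - 1/n > 0$; the boundary values $\chi = 1/n$ and $\chi = 1$ will be recovered at the end by letting $\chi$ approach the boundary. Since $S$ is asymptotically distinguishable by LOCC there are finite-round LOCC protocols $\mc{P}_1, \mc{P}_2, \dotsc$ with $p_{\error}(\mc{P}_k) \to 0$. Apply the interpolation theorem of \cite{KKB} (stated in Section~\ref{sec:Interpolation}, used here with the posterior $p(\psi_i|m) \propto \tr(E_m \rho_i)$ for a uniform prior) to each $\mc{P}_k$, obtaining a protocol $\mc{P}_k^{(\ve)}$ of the same success probability in which every branch either passes through a node at ``level'' exactly $\chi$ (meaning $p_{\max}(m) = 1/n + \ve = \chi$) or ends at a leaf with $p_{\max} < \chi$. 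A leaf of the latter kind errs with probability at least $1 - \chi$, so the total probability of such leaves is at most $p_{\error}(\mc{P}_k)/(1-\chi) \to 0$; thus, up to vanishing weight, every run of $\mc{P}_k^{(\ve)}$ visits a node at level $\chi$.

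Recall that the operator implemented at a node of an LOCC tree factorizes as a product $A \x B$, so each level-$\chi$ node $m$ carries a positive semidefinite product operator $E_m := A\ct A \x B\ct B$. Since $p_{\error}(\mc{P}_k) \geq \sum_m w_m\, e_m$, where the sum is over level-$\chi$ nodes $m$, $w_m$ is the probability of reaching $m$, and $e_m$ is the error of the subprotocol rooted at $m$ for discriminating the post-measurement states (with the induced posteriors), and since $\sum_m w_m \to 1$, some level-$\chi$ node $m_k$ satisfies $e_k := e_{m_k} \leq p_{\error}(\mc{P}_k)/\sum_m w_m \to 0$. Put $E_k := E_{m_k} / \sum_i \tr(E_{m_k}\rho_i)$; the denominator is positive because a nonzero positive semidefinite product operator has a product vector in its support, which cannot lie in $\bigcap_i \ker\rho_i$, so $\tr(E_{m_k}\rho_i) > 0$ for at least one $i$ — here the hypothesis on $\bigcap_i \ker\rho_i$ enters. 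Then $\sum_i \tr(E_k\rho_i) = 1$ and $\max_i \tr(E_k\rho_i) = p_{\max}(m_k) = \chi$, so conditions 1 and 2 hold for every $k$. For condition 3 one computes the identity $\tr(E_k \rho_i E_k \rho_j) = \tr(E_k\rho_i)\,\tr(E_k\rho_j)\,\tr(\tilde\rho_i^{(k)} \tilde\rho_j^{(k)})$, where $\tilde\rho_i^{(k)}$ is the unit-trace post-measurement state at $m_k$; in particular the left side vanishes automatically whenever some $\rho_i$ has been eliminated ($\tr(E_k\rho_i) = 0$, i.e.\ $E_k\rho_i = 0$). For the other pairs, pass to a subsequence along which each of the finitely many scalars $\tr(E_k\rho_i)$ converges: if one converges to $0$ then $\tr(E_k \rho_i E_k \rho_j) \leq \norm{E_k}_{\infty}\, \tr(E_k\rho_i) \to 0$ (using that $\norm{E_k}_\infty$ is bounded, established next); if both limits are positive, then $e_k \to 0$ together with Helstrom's bound forces $\tr(\tilde\rho_i^{(k)} \tilde\rho_j^{(k)}) \to 0$. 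Either way $\tr(E_k \rho_i E_k \rho_j) \to 0$ for all $i \neq j$.

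It remains to pass to the limit. Write $E_k = a_k \x b_k$ with $\tr(a_k) = 1$, so $a_k$ lies in the compact set of density operators on $\C^{d_A}$; moreover $\tr(b_k) = \tr(E_k)$ is bounded, for otherwise a subsequence of $E_k/\tr(E_k)$ would converge to a nonzero product operator $E_\infty$ with $\tr(E_\infty \rho_i) = \lim \tr(E_k\rho_i)/\tr(E_k) = 0$ for every $i$, whose support would be a nonzero product region contained in $\bigcap_i \ker\rho_i$ — impossible by hypothesis. In particular $\norm{E_k}_\infty \leq \tr(E_k)$ is bounded, which closes the gap left above. Hence along a further subsequence $a_k \to a \in \Pos(\C^{d_A})$ and $b_k \to b \in \Pos(\C^{d_B})$, and $E := a \x b$ is the required operator: conditions 1 and 2 are continuous in the operator and pass to the limit, while condition 3 follows from the previous paragraph. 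For $\chi = 1/n$ (respectively $\chi = 1$), run the construction for $\chi' \in (1/n, 1)$ and let $\chi' \to (1/n)^+$ (respectively $\chi' \to 1^-$); every operator produced lies in the single compact family identified here, so a convergent subsequence yields an $E$ satisfying conditions 1--3 at the endpoint.

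The crux of the argument is the compactness step: the hypothesis that $\bigcap_i \ker\rho_i$ contains no nonzero product vector is exactly what prevents the operators $E_k$ from escaping to infinity (and what legitimizes their normalization in the first place), so without it the limit need neither exist nor remain a product operator. A secondary, more bookkeeping-heavy point is isolating a \emph{single} level-$\chi$ node whose residual discrimination task is nearly perfect, and then disposing of the pairs $(i,j)$ for which a state has already been eliminated at that node. Everything else is routine: the interpolation theorem does the work of placing an honest stopping point exactly at level $\chi$, and Helstrom's bound converts ``nearly distinguishable'' into ``nearly orthogonal.''
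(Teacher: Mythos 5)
This statement is not proved in the paper at all: it is quoted verbatim from Kleinmann, Kampermann, and Bru\ss{} \cite{KKB} as an external theorem, and the paper only uses it (in Section~\ref{sec:KKB}) to compare necessary conditions, so there is no in-paper proof to compare against. Your reconstruction follows the natural route that the paper's own framework is modeled on --- interpolate each low-error finite protocol so that every significant branch is stopped at a node with $p_{\max}=\chi$, extract the product operator at such a node, normalize by $\sum_i\tr(E\rho_i)$, and use the no-product-vector-in-$\bigcap_i\ker\rho_i$ hypothesis twice (to legitimize the normalization and to keep $\tr(E_k)$ bounded so a convergent subsequence of product operators exists) --- and the argument is sound in outline: the identity $\tr(E_k\rho_iE_k\rho_j)=\tr(E_k\rho_i)\tr(E_k\rho_j)\tr(\tilde\rho_i\tilde\rho_j)$ is correct, the averaging step isolating a single level-$\chi$ node with vanishing residual error is valid, the bound $\tr(E_k\rho_iE_k\rho_j)\leq\norm{E_k}_\infty\tr(E_k\rho_i)$ disposes of nearly-eliminated states, and the endpoint cases $\chi\in\set{1/n,1}$ follow by the same compactness. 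The one soft spot is your appeal to ``Helstrom's bound'': the version stated in the paper is for pure states with overlap $\delta$, whereas the post-measurement states $\tilde\rho_i^{(k)}$ here are mixed, so the step ``residual error $e_k\to 0$ with both posteriors bounded away from zero forces $\tr(\tilde\rho_i^{(k)}\tilde\rho_j^{(k)})\to 0$'' needs the mixed-state machinery: the Holevo--Helstrom error formula $\frac{1}{2}\bigl(q_i+q_j-\norm{q_i\tilde\rho_i-q_j\tilde\rho_j}_1\bigr)$ together with a Fuchs--van de Graaf-type bound $\norm{q_i\rho-q_j\sigma}_1\leq\sqrt{(q_i+q_j)^2-4q_iq_jF(\rho,\sigma)^2}$ and $\tr(\rho\sigma)\leq F(\rho,\sigma)^2$. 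This is a standard fill-in rather than a gap in the idea, but as written the pure-state fact you invoke does not literally apply.
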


One should note however that in contrast to the above \emph{qualitative} result, our framework can be applied to any set of orthogonal pure states (with no restriction on $\bigcap_i \ker \rho_i$) and can be used to obtain explicit lower bounds on the error probability. It is an open question whether our necessary condition (``the nonlocality constant of $S$ is zero'') or that of the above theorem is also sufficient. The lemma below shows that if our necessary condition is also sufficient then so is that of \cite{KKB}.  

\begin{lemma}
Let $S=\set{\ket{\psi_i}}_{i\in[n]}$ be a set of orthogonal pure states such that $\bigcap_i \ker (\ketbras{\psi_i})$ does not contain any nonzero product vector. If for all $\chi$ with $1/n\leq \chi \leq 1$ there exists a positive semidefinite product operator $E$ satisfying conditions 1--3 from the above theorem, then the nonlocality constant $\eta$ of $S$ is zero.
\end{lemma}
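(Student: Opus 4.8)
The plan is to exhibit a single positive semidefinite product operator at which the ratio in Definition~\ref{def:eta} is exactly $0$; since $\eta$ is a nonnegative infimum of such ratios, this immediately forces $\eta = 0$. The only subtlety is to make sure the operator we produce actually lies in the domain over which $\eta$ is defined, i.e.\ that $\bra{\psi_i}(a\otimes b)\ket{\psi_i} \neq 0$ for every $i$. The trick is that the hypothesis lets us choose the parameter $\chi$, and picking $\chi$ just slightly above $1/n$ automatically keeps all of the quantities $\bra{\psi_i}(a\otimes b)\ket{\psi_i}$ bounded away from zero.

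First I would observe that $n \geq 2$ (otherwise there is nothing to discriminate), so the interval $\bigl(\tfrac1n,\tfrac1{n-1}\bigr)$ is nonempty and contained in $[\tfrac1n,1]$. Fix any $\chi$ in this interval and apply the hypothesis to obtain a positive semidefinite product operator $E$, which we write as $E = a \otimes b$ with $a \in \Pos(\C^{d_A})$ and $b \in \Pos(\C^{d_B})$. Next I would translate conditions 1--3 into statements about $G_{ij} \defeq \bra{\psi_i}(a\otimes b)\ket{\psi_j}$: the first condition says $\sum_{j} G_{jj} = 1$; the second says $\max_j G_{jj} = \chi$; and since $\rho_i = \ketbras{\psi_i}$ is a rank-one projector, $\tr(E \rho_i E \rho_j) = \abs{\bra{\psi_i} E \ket{\psi_j}}^2 = \abs{G_{ij}}^2$, so the third says $G_{ij} = 0$ for all $i \neq j$.

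The point of requiring $\chi < \tfrac1{n-1}$ is that for each $j$ we then have $G_{jj} = 1 - \sum_{i \neq j} G_{ii} \geq 1 - (n-1)\chi > 0$, using $0 \leq G_{ii} \leq \chi$. So every diagonal entry is strictly positive, which means $(a,b)$ lies in the domain of the infimum defining $\eta$, and moreover $\delta_S(a \otimes b) = \max_{i \neq j} \abs{G_{ij}}/\sqrt{G_{ii} G_{jj}} = 0$ because the off-diagonal entries all vanish. On the other hand $\frac{\max_k G_{kk}}{\sum_j G_{jj}} - \frac1n = \chi - \frac1n > 0$. Substituting into Equation~(\ref{eq:eta}) gives $\eta\,(\chi - \tfrac1n) \leq 0$, hence $\eta \leq 0$; since $\eta \geq 0$ always (the value $\eta = 0$ trivially satisfies the defining inequality, because the bracketed quantity there is nonnegative), we conclude $\eta = 0$.

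I expect no real obstacle here beyond keeping the bookkeeping straight, provided one resists the temptation to use a naive value of $\chi$: a choice such as $\chi = 1$ produces an operator $E$ that may annihilate several of the $\ket{\psi_i}$, which would put it outside the domain of $\eta$ and force a delicate perturbation argument. Restricting $\chi$ to $\bigl(\tfrac1n,\tfrac1{n-1}\bigr)$ avoids this entirely. Note finally that the hypothesis that $\bigcap_i \ker(\ketbras{\psi_i})$ contains no nonzero product vector is not used in this direction; it appears only so that the lemma lines up with the theorem of \cite{KKB}.
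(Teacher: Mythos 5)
Your proposal is correct and follows essentially the same route as the paper: choose $\chi \in \bigl(\tfrac{1}{n},\tfrac{1}{n-1}\bigr)$, use conditions 1--2 to guarantee all $\bra{\psi_i}E\ket{\psi_i} > 0$ (so $\delta_S$ is well defined), use condition 3 to get $\delta_S(E) = 0$, and conclude from Equation~(\ref{eq:eta}) that $\eta\,(\chi - \tfrac{1}{n}) \leq 0$, hence $\eta = 0$. Your extra remarks (the explicit bound $G_{jj} \geq 1-(n-1)\chi$ and the observation that the kernel hypothesis is not needed for this direction) are accurate but do not change the argument.
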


\begin{proof}
Consider $\chi\in(\frac{1}{n},\frac{1}{n-1})$ and a positive semidefinite product operator $E_\chi$ satisfying conditions 1--3. Conditions 1 and 2 imply that $\bra{\psi_i}E_\chi\ket{\psi_i}>0$ thus making $\delta_S(E_\chi)$ well defined (see Definition~\ref{def:delta}). Moreover, by condition 3 we have that $\abs{\bra{\psi_i}E_\chi\ket{\psi_j}}^2=0$ for all $i\neq j$. Hence $\delta_S(E_\chi)=0$ according to Definition~\ref{def:delta}. Finally, from conditions 1 and 2, we get that
\begin{equation}
 \frac{      \max_{i} \bra{\psi_i} E_\chi \ket{\psi_i}}
       {\;\;\;\sum_{j} \bra{\psi_j} E_\chi \ket{\psi_j}}
  = \frac{\max_i\tr(E\rho_i)}{\sum_j\tr(E\rho_j)}
  = \chi.
\end{equation}
Using these observations we can rewrite Equation (\ref{eq:eta}) in the definition of $\eta$ as
\begin{equation}
  \eta \Bigl( \chi - \frac{1}{n} \Bigr) \leq 0.
\end{equation}
Since $\chi > \frac{1}{n}$ it follows from the above inequality that $\eta=0$.
\end{proof}

\section{Discussion and open problems} \label{sec:Conclusions}

We have developed a framework for quantifying the hardness of distinguishing sets of bipartite pure states with LOCC. Using this framework, we proved lower bounds on the error probability of distinguishing several sets of states, as summarized in Table~\ref{tab:Summary}.

\begin{table}[!ht]
\begin{center}
\newcommand{\rh}[2]{\rule{0pt}{#1}\\[#2]}
\begin{tabular}{l|c|c|c}
\textit{Set of states} & $c$ & $\eta$ & $p_{\error}$
\rh{0.5cm}{0.1cm} \hline \hline
Domino states                    & $4$
                                 & $\displaystyle \frac{1}{8}$
                                 & $1.96 \times 10^{-8}$
\rh{0.7cm}{0.3cm} \hline
Domino-type states               & $2D$
                                 & $\displaystyle \frac{1}{4D}$
                                 & $\displaystyle \frac{1}{216 D^2 (d_A d_B)^5}$
\rh{0.7cm}{0.3cm} \hline
$\theta$-rotated domino states   & $\displaystyle \frac{114}{\sin 2 \theta}$
                                 & $\displaystyle \frac{\sin 2 \theta}{227}$
                                 & $2.41 \times 10^{-11} \sin^2 (2 \theta)$
\rh{0.7cm}{0.3cm}
\end{tabular}
\caption{Rigidity $c$ and lower bounds on the nonlocality constant $\eta$ and error probability $p_{\error}$ for various states.}
\label{tab:Summary}
\end{center}
\end{table}

This work raises several open problems. While we were able to lower bound the nonlocality constant $\eta$ in many cases, it could be useful to develop more generic approaches to computing or lower bounding this quantity. We are also interested in applying our method to other sets of states. For example, we would like to apply the method when $S$ is an incomplete orthonormal set (e.g., the domino basis with the middle tile omitted) or a product basis with tiles of size larger than two (see Fig.~\ref{fig:Tilings} for concrete examples of such tilings where no bounds on $p_{\error}$ are known). It is unknown whether there exists a set $S$ of 2-qubit states that can be perfectly discriminated with separable operations, but for which any LOCC protocol has $p_{\error}(S) > 0$ (see \cite{DuanSep} for all possible candidate sets). Finally, it would be interesting to consider random product bases, since this would tell us how generic the phenomenon of nonlocality without entanglement is.

\begin{figure}[!ht]
\centering


\def\step{20pt} 

\begin{tikzpicture}[
  domino/.style = {rectangle, rounded corners = 0.2*\step, draw = black!95, fill = black!10},
  circ/.style = {circle, draw = black, fill = black, inner sep = 0mm, minimum size = 0.13*\step},
  gridlines/.style = {gray, semithick}
]

  \newcommand{\domino}[4]{
    \node[semithick, domino,
          minimum height = #3*\step + 0.8*\step,
          minimum width  = #4*\step + 0.8*\step] at (#1*\step, #2*\step) {};
    \foreach \i in {0,...,#4}{
      \foreach \j in {0,...,#3}{
        \node at (\i*\step+#1*\step-#4*\step/2,
                  \j*\step+#2*\step-#3*\step/2) [circ] {};
      }
    };
  }

  \newcommand{\sizeone}[2]{
    \pgfmathparse{#1+0.5};
    \let\x = \pgfmathresult;
    \pgfmathparse{#2+0.5};
    \let\y = \pgfmathresult;
    \domino{\x}{\y}{0}{0}
  }

  \newcommand{\sizetwo}[4]{
    \pgfmathparse{#1+1-0.5*#3};
    \let\x = \pgfmathresult;
    \pgfmathparse{#2+1-0.5*#4};
    \let\y = \pgfmathresult;
    \domino{\x}{\y}{#3}{#4}
    \draw[semithick]
         (\x*\step - #4*\step/2, \y*\step - #3*\step/2) --
         (\x*\step + #4*\step/2, \y*\step + #3*\step/2);
  }

  \newcommand{\sizefour}[2]{
    \pgfmathparse{#1+1.0};
    \let\x = \pgfmathresult;
    \pgfmathparse{#2+1.0};
    \let\y = \pgfmathresult;
    \domino{\x}{\y}{1}{1}
    \draw[semithick]
         (#1*\step + 1*\step/2, #2*\step + 1*\step/2) --
         (#1*\step + 3*\step/2, #2*\step + 1*\step/2) --
         (#1*\step + 3*\step/2, #2*\step + 3*\step/2) --
         (#1*\step + 1*\step/2, #2*\step + 3*\step/2) -- cycle;
  }

  \newcommand{\tiling}[3]{
    \begin{scope}[fill = white]
      \fill[clip] (0,0) rectangle (#1*\step, #2*\step);
      \draw[step = \step, gridlines] (0,0) grid (#1*\step, #2*\step);
      #3
    \end{scope}
    \draw[gridlines] (0,0) -- (#1*\step, 0) -- (#1*\step, #2*\step) -- (0, #2*\step) -- cycle;
  }


  \tiling{5}{5}{
    \sizeone{0}{3}
    \sizeone{1}{0}
    \sizeone{3}{4}
    \sizeone{4}{1}
    \sizeone{2}{2}

    \sizefour{0}{1}
    \sizefour{2}{0}
    \sizefour{3}{2}
    \sizefour{1}{3}

    \sizefour{-1}{-1}
    \sizefour{-1}{ 4}
    \sizefour{ 4}{-1}
    \sizefour{ 4}{ 4}
  }


  \begin{scope}[shift = {(-7*\step,1*\step)}]
  \tiling{4}{3}{
    \sizetwo{0}{0}{0}{1}
    \sizetwo{1}{1}{0}{1}
    \sizetwo{2}{2}{0}{1}
    \sizetwo{0}{1}{1}{0}
    \sizetwo{3}{0}{1}{0}
  }
  \end{scope}

\end{tikzpicture}

\caption{Tilings corresponding to an incomplete orthonormal set in $\bip{3}{4}$ (left) and a product basis of $\bip{5}{5}$ with larger tiles (right). On the right, the tiles of size four are induced by states of the form $\ket{\pm} \ket{\pm}$ and one of the tiles corresponds to the four corners of the grid.}
\label{fig:Tilings}
\end{figure}

We discussed some limitations of our framework in Section~\ref{sec:Limitations}, but we would like to better understand how broadly the framework can be applied. In particular, can it always be used to obtain a lower bound on $p_{\error}$ whenever such a bound exists? For example, from Section~\ref{sect:Rotated} we know that the answer to this question is ``yes'' for orthonormal product bases on two qutrits.

Finally, the gaps between the classes of separable and LOCC operations exhibited by our framework are rather small (see Table~\ref{tab:Summary}). One cannot hope to do significantly better within our framework, as shown in Section~\ref{subsec:SmallBound}. Is this due to limitations of our framework or because orthonormal product states in general can be discriminated well by LOCC?

Along these lines, a major open question raised by our work is the following: does there exist a sequence $S_1, S_2, S_3, \dotsc$ of sets of orthonormal product states such that
\begin{equation*}
  \lim_{l \to \infty} p^{LOCC}_{\error}(S_l) = 1?
\end{equation*}
Existence of such a sequence would give a strong separation between the classes of separable and LOCC measurements.
Note that the local standard basis measurement followed by guessing gives the correct answer with probability at least $1 / L_l$, where $L_l$ is the maximum number of states within a tile in the tiling induced by $S_l$. Thus for any such sequence, the value of $L_l$ must grow with $l$. In particular, the number of states (and hence the local dimensions) must also grow with $l$.

\section{Acknowledgements}

We thank Dagmar Bru\ss{}, Eric Chitambar, Sarah Croke, Oleg Gittsovich, Tsuyoshi Ito, Hermann Kampermann, Matthias Kleinmann, Will Matthews, Rajat Mittal, Marco Piani, David Roberson, Graeme Smith, and John Watrous for helpful discussions. This research was funded by CRC, CFI, CIFAR, MITACS, NSERC, ORF, the Ontario Ministry of Research and Innovation, and the US ARO/DTO.

\bibliographystyle{alphaurl}
\newcommand{\etalchar}[1]{$^{#1}$}

\appendix

\section{Rigidity of domino-type states (Lemma~\ref{lem:DimBox})} \label{apx:DimBox}

\DIMBOX*

\begin{proof}
We mimic the proof of Lemma~\ref{lem:Box} and make the appropriate generalizations when necessary. We want to show that
\begin{align}
  \Abs{a_{ii} b_{jj} - \frac{1}{d_A d_B} \tr(a \x b)} &\leq 2D \delta \tr(a \x b),
  \label{eq:DimDiag} \\
  \abs{a_{ij} b_{kt}} &\leq 2D \delta \tr(a \x b),
  \label{eq:DimOffdiag}
\end{align}
where $i \neq j$ or $k \neq t$ in the second inequality.

\partitle{Bounding the diagonal elements:}

Using the calculation in Equations~\mbox{(\ref{eq:b-diag1}--\ref{eq:b-diag3})} we can bound the difference of diagonal entries of $a$ and $b$. Whenever there is a $2 \times 1$ tile that connects rows $i$ and $j$, we get that
\begin{equation}
  \abs{a_{ii}-a_{jj}} \leq \delta \tr(a).
\end{equation}
A similar equation holds for $b$ whenever there is a $1 \times 2$ tile that connects columns $i$ and $j$.

Since $T$ is irreducible, the row graph of $T$ is connected. Moreover, any two vertices of this graph are connected by a path of length at most $D$. We apply the triangle inequality along this path in the same way as in Equation~(\ref{eq:Triangle inequality}). After at most $D-1$ repetitions we get that for any $i$ and $j$,
\begin{equation}
  \abs{a_{ii}-a_{jj}} \leq D \delta \tr(a).
  \label{eq:a-diag}
\end{equation}
A similar equation holds for $b$. When we repeat the calculation in Equations~\mbox{(\ref{eq:ab-diag1}--\ref{eq:ab-diag4})}, we get that for any $i,j,k,t$,
\begin{equation}
  \abs{a_{ii}b_{jj}-a_{kk}b_{tt}} \leq 2D \delta \tr(a \x b).
\end{equation}
Finally, we repeat the calculation in Equations~\mbox{(\ref{eq:ab-trace1}--\ref{eq:ab-trace3})} and get the desired bound stated in Equation~(\ref{eq:DimDiag}).

\partitle{Bounding the off-diagonal elements:}

From Lemma~\ref{lem:PairOfTiles} we get that
\begin{equation}
  2 \delta \tr(a \x b)
  \geq \abs{a_{r_1 r_2}} \cdot \abs{b_{c_1 c_2}}
  \label{eq:Offdiag12}
\end{equation}
for all $(r_1, c_1) \neq (r_2, c_2)$, except when $(r_1, c_1)$ and $(r_2, c_2)$ belong to the same tile of size two.

Suppose that we want to bound $\abs{a_{rr}} \cdot \abs{b_{c_1 c_2}}$ where $(r,c_1)$ and $(r,c_2)$ belong to the same $1 \times 2$ tile. Since $T$ is irreducible, we can find a row $r'$ such that $(r',c_1)$ and $(r',c_2)$ belong to different tiles (if $\set{r'} \times \set{c_1,c_2}$ is a tile for each $r'$ then $\set{c_1,c_2}$ is a connected component of the column graph of $T$, contradicting the irreducibility of $T$). From Equation~(\ref{eq:Offdiag2}) we get
\begin{equation}
  2 \delta \tr(a \x b)
  \geq \abs{a_{r'r'}} \cdot \abs{b_{c_1 c_2}}.
\end{equation}
According to Equation~(\ref{eq:a-diag}), $D \delta \tr(a) \geq \abs{a_{rr} - a_{r'r'}} \geq \abs{a_{rr}} - \abs{a_{r'r'}}$. Using this observation we repeat the calculation in Equations~\mbox{(\ref{eq:hard1}--\ref{eq:hard2})} and obtain
\begin{equation}
  2 \delta \tr(a \x b) \geq \abs{a_{rr}} \cdot \abs{b_{c_1 c_2}} - D \delta \tr(a \x b).
\end{equation}
After rearranging terms we get
\begin{equation}
  (D + 2) \delta \tr(a \x b) \geq \abs{a_{rr}} \cdot \abs{b_{c_1 c_2}}.
\end{equation}
The same bound also holds for entries corresponding to $2 \times 1$ tiles.
Together with Equation~(\ref{eq:Offdiag12}) this establishes the desired bound in Equation~(\ref{eq:DimOffdiag}).
\end{proof}

\section{Rigidity of rotated domino states (Lemma~\ref{lem:RotBox})} \label{apx:RotBox}

In this section we prove an analog of Lemma~\ref{lem:Box} for rotated domino states $S_3(\thetas)$. For simplicity we consider only the set $S_3(\theta) \defeq S_3(\theta,\theta,\theta,\theta)$ and obtain a bound as a function of $\theta$. In the more general case one can choose $\theta \defeq \min \set{\theta_1, \theta_2, \theta_3, \theta_4}$ and use the same bound.

\begin{lemma}\label{claim:a and b diagonal}
For $j \in \set{0,2}$ we have
\begin{equation}
  \abs{b_{11} - b_{jj}} 
  \leq \frac{2}{\sin 2 \theta}
       \bigl( \delta \norm{b}_{\infty} + \Abs{\Re b_{j1}} \bigr).
\end{equation}
The same inequality holds for $a$.
\end{lemma}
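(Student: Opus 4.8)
The plan is to reuse the computation in Equations~(\ref{eq:b-diag1}--\ref{eq:b-diag3}) from the proof of Lemma~\ref{lem:Box}, but now carry the rotation angle through. Consider first the case $j = 0$ for $b$. I would apply the product form of the definition of $\delta$ (Equation~(\ref{eq:delta ab})) to the pair of rotated states $\ket{\psi_2} = \ket{0}(\cos\theta\ket{0} + \sin\theta\ket{1})$ and $\ket{\psi_3} = \ket{0}(-\sin\theta\ket{0} + \cos\theta\ket{1})$. Since the Alice parts of $\ket{\psi_2}$ and $\ket{\psi_3}$ coincide (both equal $\ket{0}$), the Alice factor in Equation~(\ref{eq:delta ab}) is $\abs{a_{00}}/\abs{a_{00}} = 1$; this cancellation is legitimate because $a_{00} > 0$ at the end of stage~I (indeed $\bra{\psi_2}(a \x b)\ket{\psi_2} > 0$ forces $a_{00} > 0$). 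Bounding the denominator of the Bob factor by $\bra{\beta_2}b\ket{\beta_2}\,\bra{\beta_3}b\ket{\beta_3} \leq \norm{b}_\infty^2$ (which holds since $b \succeq 0$ and $\ket{\beta_2},\ket{\beta_3}$ are unit vectors), I obtain $\delta\,\norm{b}_\infty \geq \abs{\bra{\beta_2}b\ket{\beta_3}}$.

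The next step is to expand $\bra{\beta_2}b\ket{\beta_3}$ explicitly, using $b_{10} = \overline{b_{01}}$. A short computation gives
\[
  \bra{\beta_2}b\ket{\beta_3} = \tfrac12 \sin 2\theta\,(b_{11} - b_{00}) + \cos 2\theta\,\Re b_{01} + i\,\Im b_{01}.
\]
Since $b_{11} - b_{00}$ is real, the real part of the right-hand side equals $\tfrac12\sin 2\theta\,(b_{11}-b_{00}) + \cos 2\theta\,\Re b_{01}$, so, using $\abs{\cos 2\theta} \leq 1$, we get $\abs{\bra{\beta_2}b\ket{\beta_3}} \geq \abs{\Re\bra{\beta_2}b\ket{\beta_3}} \geq \tfrac12\sin 2\theta\,\abs{b_{11}-b_{00}} - \abs{\Re b_{01}}$. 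Combining this with $\delta\norm{b}_\infty \geq \abs{\bra{\beta_2}b\ket{\beta_3}}$ and rearranging (dividing by $\tfrac12\sin 2\theta > 0$, valid since $0 < \theta \leq \pi/4$) yields $\abs{b_{11}-b_{00}} \leq \tfrac{2}{\sin 2\theta}(\delta\norm{b}_\infty + \abs{\Re b_{01}})$, which is the claim for $j = 0$ because $b_{j1} = b_{01}$. The case $j = 2$ is handled identically, replacing the pair $\ket{\psi_2},\ket{\psi_3}$ by $\ket{\psi_4},\ket{\psi_5}$ (whose Bob parts span $\set{\ket{1},\ket{2}}$): the same expansion with the index $0$ replaced by $2$ gives $\abs{b_{11}-b_{22}} \leq \tfrac{2}{\sin 2\theta}(\delta\norm{b}_\infty + \abs{\Re b_{21}})$. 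Finally, the statement for $a$ follows verbatim, now using the pairs $\ket{\psi_8},\ket{\psi_9}$ (for $j = 0$) and $\ket{\psi_6},\ket{\psi_7}$ (for $j = 2$), whose Bob parts are both $\ket{2}$ and both $\ket{0}$ respectively, so that it is the Bob factor of Equation~(\ref{eq:delta ab}) that now cancels to $1$.

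I do not expect any genuine obstacle here; the argument is essentially bookkeeping. The two points needing a little care are (i) justifying that the ``trivial'' tensor factor being cancelled is strictly positive, which is precisely what the stage~I stopping condition supplies, and (ii) keeping the roles of $a$ and $b$ (and of the four size-two tiles) straight. The one structural difference from Lemma~\ref{lem:Box} is that the term $\cos 2\theta\,\Re b_{01}$ no longer vanishes (it did when $\theta = \pi/4$), which is why the extra summand $\abs{\Re b_{j1}}$ appears on the right-hand side and must subsequently be absorbed elsewhere in the proof of Lemma~\ref{lem:RotBox} in Appendix~\ref{apx:RotBox}.
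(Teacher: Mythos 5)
Your proposal is correct and follows essentially the same route as the paper's proof: bound $\delta\norm{b}_{\infty} \geq \abs{\bra{\beta_2}b\ket{\beta_3}}$ using the pair $\ket{\psi_2},\ket{\psi_3}$ (the Alice factor cancelling since $a_{00}>0$ at the end of stage~I), expand the matrix element to $\tfrac12\sin 2\theta\,(b_{11}-b_{00}) + \cos 2\theta \Re b_{01} + i\Im b_{01}$, drop to the real part, and rearrange; the other three cases use the corresponding size-two tiles exactly as you indicate. Your additional remarks on the positivity of the cancelled factor and the bound on the denominator are correct fillings-in of details the paper leaves implicit.
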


\begin{proof}
We show how to get the bound on $b$ for $j = 0$. The remaining three cases are similar.

We use the states $\ket{\psi_2}$ and $\ket{\psi_3}$ from Equation~(\ref{eq:psi23}) in the definition of $\delta$ in Equation~(\ref{eq:delta ab}):
\begin{align}
  \delta \norm{b}_{\infty}
   &\geq \abs{\bra{\beta_2} b \ket{\beta_3}} \\
   &=    \Abs{
           \mx{\cos \theta & \sin \theta}
           \mx{b_{00} & b_{01} \\ b_{10} & b_{11}}
           \mx{-\sin \theta \\ \cos \theta}
         } \\
   &=    \Abs{ (b_{11} - b_{00}) \sin \theta \cos \theta - b_{10} \sin^2 \theta + b_{01} \cos^2 \theta } \\
   &=    \Abs{ (b_{11} - b_{00}) \sin \theta \cos \theta
               + \Re b_{01} (\cos^2 \theta - \sin^2 \theta) + i \Im b_{01} } \\
   &\geq \Abs{ \frac{b_{11} - b_{00}}{2} \sin 2 \theta
               + \Re b_{01} \cos 2 \theta } \\
   &\geq \frac{\abs{b_{11}-b_{00}}}{2} \sin 2 \theta
               - \Abs{\Re b_{01}}.
\end{align}
By rearranging terms we get the desired bound.
\end{proof}

\begin{lemma}\label{claim:a and b bounds}
If $a_{11} \geq \frac{1}{s} \norm{a}_{\infty}$ for some $s > 0$ then for $j \in \set{0,2}$ we have
\begin{align}
  \abs{b_{j1}} &\leq \sqrt{2} s \delta \norm{b}_{\infty}, &
  \abs{b_{11} - b_{jj}} &\leq 2 (1 + \sqrt{2} s) \frac{\delta}{\sin 2 \theta} \norm{b}_{\infty}.
\end{align}
The same statement holds when the roles of $a$ and $b$ are exchanged.
\end{lemma}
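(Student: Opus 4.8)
The plan is to read off both inequalities from the ``pair of tiles'' lemma (Lemma~\ref{lem:PairOfTiles}) together with the already-established Lemma~\ref{claim:a and b diagonal}; the hypothesis $a_{11}\geq\tfrac{1}{s}\norm{a}_\infty$ will enter only to turn an estimate on the product $a_{11}\abs{b_{j1}}$ into one on $\abs{b_{j1}}$ itself. Recall that $S_3(\theta)$ induces the domino tiling of Figure~\ref{fig:Dominoes}: the singleton $\set{1}\times\set{1}$ from $\ket{\psi_1}$, the horizontal dominoes $\set{0}\times\set{0,1}$ and $\set{2}\times\set{1,2}$, and the vertical dominoes $\set{1,2}\times\set{0}$ and $\set{0,1}\times\set{2}$.

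First I would prove $\abs{b_{j1}}\leq\sqrt{2}\,s\,\delta\norm{b}_\infty$ for $j\in\set{0,2}$. I would invoke Lemma~\ref{lem:PairOfTiles} in the sharper form established in the course of its proof, $\sqrt{\abs{T_1}\cdot\abs{T_2}}\,\delta\,\norm{a}_\infty\norm{b}_\infty\geq\abs{a_{r_1r_2}}\abs{b_{c_1c_2}}$ for distinct tiles $T_1\neq T_2$, using that the operator norm is multiplicative over a tensor product so that $\norm{a\x b}_\infty=\norm{a}_\infty\norm{b}_\infty$. Taking $T_1=\set{1}\times\set{1}$ with $(r_1,c_1)=(1,1)$, I would pair it, for $j=0$, with the vertical domino $T_2=\set{1,2}\times\set{0}$ at $(r_2,c_2)=(1,0)$, and for $j=2$ with the vertical domino $T_2=\set{0,1}\times\set{2}$ at $(r_2,c_2)=(1,2)$; each such $T_2$ is chosen because it meets row~$1$, which forces $a_{r_1r_2}=a_{11}$. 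Since $\abs{T_1}\cdot\abs{T_2}=2$ and, by Hermiticity of $b$, $\abs{b_{10}}=\abs{b_{01}}$ and $\abs{b_{12}}=\abs{b_{21}}$, the lemma yields $\sqrt{2}\,\delta\norm{a}_\infty\norm{b}_\infty\geq a_{11}\abs{b_{j1}}$. As $a\neq0$ we have $\norm{a}_\infty>0$ and hence $a_{11}\geq\tfrac{1}{s}\norm{a}_\infty>0$, so dividing through gives the claimed bound.

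The diagonal estimate is then immediate: Lemma~\ref{claim:a and b diagonal} gives $\abs{b_{11}-b_{jj}}\leq\tfrac{2}{\sin 2\theta}\bigl(\delta\norm{b}_\infty+\abs{\Re b_{j1}}\bigr)$, and plugging in $\abs{\Re b_{j1}}\leq\abs{b_{j1}}\leq\sqrt{2}\,s\,\delta\norm{b}_\infty$ collapses the right-hand side to $2(1+\sqrt{2}\,s)\tfrac{\delta}{\sin 2\theta}\norm{b}_\infty$.

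For the statement with $a$ and $b$ interchanged I would run the mirror-image argument: assuming $b_{11}\geq\tfrac{1}{s}\norm{b}_\infty$, I would again take $T_1=\set{1}\times\set{1}$ at $(1,1)$ but now pair it with a \emph{horizontal} domino meeting column~$1$, namely $T_2=\set{0}\times\set{0,1}$ at $(0,1)$ for $j=0$ and $T_2=\set{2}\times\set{1,2}$ at $(2,1)$ for $j=2$, so that $b_{c_1c_2}=b_{11}$; this gives $\sqrt{2}\,\delta\norm{a}_\infty\norm{b}_\infty\geq b_{11}\abs{a_{j1}}$, hence $\abs{a_{j1}}\leq\sqrt{2}\,s\,\delta\norm{a}_\infty$, and the bound on $\abs{a_{11}-a_{jj}}$ follows from the ``$a$'' version of Lemma~\ref{claim:a and b diagonal} exactly as before. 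I do not foresee a genuine obstacle; the one thing requiring care is the combinatorial bookkeeping of which domino to pair with the singleton so that the shared coordinate equals~$1$ and the exposed entry is precisely $b_{j1}$ (resp.\ $a_{j1}$) --- this works because the tiling contains a size-two tile through every cell of row~$1$ other than $(1,1)$ and, symmetrically, through every cell of column~$1$ --- together with the mild point that one must use the $\norm{a\x b}_\infty$ version rather than the $\tr(a\x b)$ version of Lemma~\ref{lem:PairOfTiles} in order to land the factor $\norm{b}_\infty$ rather than something larger.
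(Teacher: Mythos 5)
Your proposal is correct and takes essentially the same route as the paper: apply Lemma~\ref{lem:PairOfTiles} (in the $\norm{a \x b}_{\infty}$ form implicit in its proof) to the singleton tile of $\ket{\psi_1}$ together with the size-two tile meeting row~$1$ (resp.\ column~$1$) to get $\sqrt{2}\,\delta\norm{a}_{\infty}\norm{b}_{\infty} \geq \abs{a_{11}}\cdot\abs{b_{j1}}$, use the hypothesis $a_{11} \geq \frac{1}{s}\norm{a}_{\infty}$ to divide off $a_{11}$, and then feed $\Abs{\Re b_{j1}} \leq \abs{b_{j1}}$ into Lemma~\ref{claim:a and b diagonal} for the diagonal bound, exactly as the paper does.
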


\begin{proof}
We show how to get bounds on $b$ for $j = 0$. The remaining three cases are identical, except one has to use states from different tiles.

We use Lemma~\ref{lem:PairOfTiles} with tiles corresponding to states $\ket{\psi_{6,7}}$ and $\ket{\psi_1}$:
\begin{equation}
  \sqrt{2} \delta \norm{a \x b}_{\infty}
    \geq \abs{a_{11} b_{01}}
    \geq \frac{1}{s} \norm{a}_{\infty} \abs{b_{01}},
\end{equation}
where the second inequality follows from our assumption $\abs{a_{11}} \geq \frac{1}{s} \norm{a}_{\infty}$. By rewriting this we get the first bound:
\begin{equation}
  \abs{b_{01}} \leq \sqrt{2} s \delta \norm{b}_{\infty}.
  \label{eq:b01}
\end{equation}
Since $\Abs{\Re b_{01}} \leq \abs{b_{01}} \leq \sqrt{2} s \delta \norm{b}_{\infty}$, we get the second bound from Lemma~\ref{claim:a and b diagonal}.
\end{proof}

\begin{lemma}\label{claim:a and b max norm}
If $a_{11} \geq \frac{1}{s} \norm{a}_{\infty}$ and $b_{11} \geq \frac{1}{s} \norm{b}_{\infty}$ for some $s > 0$ then
\begin{equation}
  \norm{\frac{a \x b}{\tr(a \x b)} - \frac{\id}{9}}_{\max}
  \leq 8 (1 + \sqrt{2} s) \frac{\delta}{\sin 2 \theta}.
\end{equation}
\end{lemma}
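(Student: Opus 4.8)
The plan is to control each entry of the $9\times 9$ matrix $\frac{a\x b}{\tr(a\x b)}-\frac{I}{9}$ separately, exploiting that $\tr(a\x b)=\tr(a)\tr(b)$ and that the entries of $a\x b$ are the products $a_{ij}b_{kt}$. We may assume $0<\theta\le\pi/4$ (if $\theta=0$ then $\sin 2\theta=0$ and the bound is vacuous), so that $S_3(\theta)$ induces the domino tiling of Figure~\ref{fig:Dominoes}. Both hypotheses $a_{11}\ge\frac1s\norm{a}_\infty$ and $b_{11}\ge\frac1s\norm{b}_\infty$ hold, so Lemma~\ref{claim:a and b bounds} applies in both directions: for $j\in\set{0,2}$ we get $\abs{a_{j1}}\le\sqrt2 s\,\delta\norm{a}_\infty$, $\abs{b_{j1}}\le\sqrt2 s\,\delta\norm{b}_\infty$, and, writing $\kappa\defeq 2(1+\sqrt2 s)\tfrac{\delta}{\sin 2\theta}$, also $\abs{a_{11}-a_{jj}}\le\kappa\norm{a}_\infty$ and $\abs{b_{11}-b_{jj}}\le\kappa\norm{b}_\infty$. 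I will also use freely that $\abs{a_{kl}}\le\norm{a}_\infty\le\tr(a)$ for positive semidefinite $a$ (and likewise for $b$).

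For the diagonal entries $a_{ii}b_{kk}$, the first step is to upgrade the bounds above by the triangle inequality to $\abs{a_{ii}-\tfrac13\tr(a)}\le\kappa\norm{a}_\infty$ for every $i$, and similarly for $b$. Then, setting $x\defeq a_{ii}/\tr(a)$ and $y\defeq b_{kk}/\tr(b)$, I bound $\bigl|\,xy-\tfrac19\,\bigr|\le\abs{x-\tfrac13}\,y+\tfrac13\abs{y-\tfrac13}$ and use $\abs{x-\tfrac13}\le\kappa$, $\abs{y-\tfrac13}\le\kappa$, $0\le y\le 1$ to obtain $\tfrac43\kappa$, which is comfortably below $8(1+\sqrt2 s)\tfrac{\delta}{\sin 2\theta}$.

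For the off-diagonal entries $a_{ij}b_{kt}$ (with $i\ne j$ or $k\ne t$), the heart of the argument is a short case analysis on the tiling. I will check that the cells $(i,k)$ and $(j,t)$ lie in a common tile only for the four products $a_{00}b_{01}$, $a_{22}b_{12}$, $a_{12}b_{00}$, $a_{01}b_{22}$ (up to the Hermitian symmetry $a_{ij}b_{kt}\leftrightarrow a_{ji}b_{tk}$). For every other off-diagonal entry the two cells lie in distinct tiles, so Lemma~\ref{lem:PairOfTiles} applies with the tiles containing $(i,k)$ and $(j,t)$ and---since all tiles have area at most two---yields $\abs{a_{ij}b_{kt}}\le 2\,\delta\tr(a\x b)$, hence $\abs{a_{ij}b_{kt}}/\tr(a\x b)\le 2\delta\le 2\tfrac{\delta}{\sin 2\theta}$. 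For the four exceptional products Lemma~\ref{lem:PairOfTiles} is unavailable, but in each the off-diagonal factor is precisely one of $\abs{a_{01}},\abs{a_{12}},\abs{b_{01}},\abs{b_{12}}$ already bounded above while the other factor is a diagonal entry bounded by the operator norm; this gives, e.g., $\abs{a_{00}b_{01}}\le\norm{a}_\infty\cdot\sqrt2 s\,\delta\norm{b}_\infty\le\sqrt2 s\,\delta\tr(a\x b)$, and the same for the other three. Finally, since $2\le 8(1+\sqrt2 s)$ and $\sqrt2 s\le 8(1+\sqrt2 s)$, all entries obey the claimed bound.

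The main obstacle is the bookkeeping in the off-diagonal step: one must pin down exactly which products $a_{ij}b_{kt}$ correspond to same-tile cell pairs (where Lemma~\ref{lem:PairOfTiles} fails), and then verify that for each such product the ``other'' factor is one of the particular entries $a_{01},a_{12},b_{01},b_{12}$ that Lemma~\ref{claim:a and b bounds} was tailored to bound---this is precisely the reason that lemma isolates $\abs{a_{j1}}$ and $\abs{b_{j1}}$ for $j\in\set{0,2}$. Once this matching is set up correctly, the remaining estimates are routine.
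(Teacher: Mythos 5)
Your proof is correct and takes essentially the same route as the paper's: Lemma~\ref{claim:a and b bounds} (applied in both directions) supplies the diagonal-difference bounds and the bounds on the four same-tile off-diagonal products, while Lemma~\ref{lem:PairOfTiles} covers all remaining off-diagonal entries. Your treatment of the diagonal entries (centering $a_{ii}/\tr(a)$ and $b_{kk}/\tr(b)$ at $\tfrac{1}{3}$) is only a minor algebraic variant of the paper's pairwise-difference-and-average step, and in fact gives the slightly better constant $\tfrac{8}{3}(1+\sqrt{2}s)\,\delta/\sin 2\theta$ on the diagonal, comfortably within the claimed bound.
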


\begin{proof}
We follow the proof of Lemma~\ref{lem:Box} and show the following generalizations of Equations~(\ref{eq:Diag}) and~(\ref{eq:Offdiag}):
\begin{align}
  \Abs{a_{ii} b_{jj} - \frac{1}{9} \tr(a \x b)}
    &\leq 8 (1 + \sqrt{2} s) \frac{\delta}{\sin 2 \theta} \norm{a \x b}_{\infty}, \label{eq:Rot-Diag} \\
  \abs{a_{ij}b_{kt}}
    &\leq \max \set{\sqrt{2}, 2, \sqrt{2} s} \delta \norm{a \x b}_{\infty}. \label{eq:Rot-Offdiag}
\end{align}
Note that the second inequality is stronger than we need, since $1 / \sin 2 \theta \geq 1$.

First, we use Lemma~\ref{claim:a and b bounds} to upper bound the difference of diagonal entries of $a$ and $b$. We use these bounds in the same way as in Lemma~\ref{lem:Box} to upper bound the differences of diagonal entries of $a \x b$ and to get Equation~(\ref{eq:Rot-Diag}). Finally, we use Lemma~\ref{lem:PairOfTiles} to upper bound most of the off-diagonal entries of $a \x b$ and Lemma~\ref{claim:a and b bounds} to upper bound the remaining ones. This gives us Equation~(\ref{eq:Rot-Offdiag}).

\partitle{Bounding the diagonal elements:}

From Lemma~\ref{claim:a and b bounds} we get bounds on $\abs{b_{11} - b_{ii}}$ and $\abs{a_{11} - a_{ii}}$ for $i \in \set{0,2}$. Using the triangle inequality, we get
\begin{equation}
  \abs{a_{ii} - a_{jj}} \leq 4 (1 + \sqrt{2} s) \frac{\delta}{\sin 2 \theta} \norm{a}_{\infty}
\end{equation}
for any $i,j \in \set{0,1,2}$ (and the same for $b$). Using the triangle inequality once more we can bound the difference of any two diagonal entries of $a \x b$:
\begin{equation}
  \abs{a_{ii} b_{jj} - a_{kk} b_{tt}}
  \leq 8 (1 + \sqrt{2} s) \frac{\delta}{\sin 2 \theta} \norm{a \x b}_{\infty}.
\end{equation}
From this we obtain Equation~(\ref{eq:Rot-Diag}) in the same way as in Lemma~\ref{lem:Box}.

\partitle{Bounding the off-diagonal elements:}

Equation~(\ref{eq:Rot-Offdiag}) can be obtained from Lemma~\ref{lem:PairOfTiles}. For most of the entries the constant is either $\sqrt{2}$ or $2$, depending on the sizes of the tiles. For the remaining four entries, listed in Equation~(\ref{eq:Hard cases}), we proceed in a slightly different way. For example, for $a_{00} b_{01}$ we use Equation~(\ref{eq:b01}) to see that
\begin{equation}
  \abs{a_{00}} \cdot \abs{b_{01}} \leq \norm{a}_{\infty} \cdot \sqrt{2} s \delta \norm{b}_{\infty}.
\end{equation}
A similar strategy works for the remaining three entries.
\end{proof}

\begin{lemma}
\label{claim:Small delta}
Fix any $s \geq 3$ and let
\begin{equation}
  \frac{1}{r(s)} \defeq \min \set{
                    \frac{1}{14} \biggl( \frac{1}{3} - \frac{1}{s} \biggr),
    \frac{1}{2 (1 + \sqrt{2} s)} \biggl( \frac{1}{3} - \frac{1}{s} \biggr)
  }.
  \label{eq:r}
\end{equation}
If $\frac{\delta}{\sin 2 \theta} \leq \frac{1}{r(s)}$ then $a_{11} \geq \frac{1}{s} \norm{a}_{\infty}$ and $b_{11} \geq \frac{1}{s} \norm{b}_{\infty}$.
\end{lemma}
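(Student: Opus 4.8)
The plan is to argue by contradiction in two stages: first show that at least one of $a_{11} \geq \frac{1}{s}\norm{a}_\infty$ and $b_{11} \geq \frac{1}{s}\norm{b}_\infty$ holds, then boost this to both using the $a\leftrightarrow b$ ``transfer'' afforded by Lemma~\ref{claim:a and b bounds}. We may assume $s>3$: if $s=3$ then $1/r(s)=0$, so the hypothesis forces $\delta=0$, and then the conclusion follows directly from Lemmas~\ref{lem:PairOfTiles} and~\ref{claim:a and b diagonal} (all off-diagonal entries of $a,b$ vanish and the diagonal entries within each matrix are equal). Hence $\frac13-\frac1s>0$. Also, since $a,b$ are positive semidefinite, $\norm{a}_\infty\leq\tr(a)\leq 3\max_i a_{ii}$, so $\max_i a_{ii}\geq\frac13\norm{a}_\infty$, and likewise for $b$. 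Denote the four domino tiles by $T_{23},T_{45},T_{67},T_{89}$ after the pairs of states inducing them, so $T_{23}=\set{(0,0),(0,1)}$, $T_{45}=\set{(2,1),(2,2)}$, $T_{67}=\set{(1,0),(2,0)}$, $T_{89}=\set{(0,2),(1,2)}$ (cell $(r,c)$ = Alice index $r$, Bob index $c$).

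For the first stage, suppose toward a contradiction that $a_{11}<\frac1s\norm{a}_\infty$ and $b_{11}<\frac1s\norm{b}_\infty$. Then $a_{11}$ is not a largest diagonal entry of $a$ (that entry is $\geq\frac13\norm{a}_\infty>\frac1s\norm{a}_\infty$), so $a_{pp}\geq\frac13\norm{a}_\infty$ for some $p\in\set{0,2}$; similarly $b_{qq}\geq\frac13\norm{b}_\infty$ for some $q\in\set{0,2}$. The cell $(p,q)$ lies inside one of $T_{23},T_{45},T_{67},T_{89}$, giving four cases. In each I would bound one off-diagonal entry of $a$ or $b$ by $6\delta$ times the relevant operator norm using Lemma~\ref{lem:PairOfTiles} (in the sharper $\norm{\cdot}_\infty$ form already used in the proof of Lemma~\ref{claim:a and b bounds}), and then propagate the large diagonal entry toward index $1$ along a rotated domino via Lemma~\ref{claim:a and b diagonal}. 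For instance, if $(p,q)=(2,0)$ (hot cell in $T_{67}$): applying Lemma~\ref{lem:PairOfTiles} to the distinct tiles $T_{67}\ni(2,0)$ and $T_{45}\ni(2,1)$ (both meeting row $2$) gives $2\delta\norm{a}_\infty\norm{b}_\infty\geq\abs{a_{22}}\abs{b_{01}}$, hence $\abs{b_{01}}\leq 6\delta\norm{b}_\infty$ since $a_{22}\geq\frac13\norm{a}_\infty$; then Lemma~\ref{claim:a and b diagonal} (with $a,b$ exchanged, $j=0$) gives $\abs{b_{11}-b_{00}}\leq\frac{2}{\sin 2\theta}\bigl(\delta\norm{b}_\infty+\abs{\Re b_{01}}\bigr)\leq\frac{2}{\sin 2\theta}\bigl(\delta\norm{b}_\infty+6\delta\norm{b}_\infty\bigr)=\frac{14\delta}{\sin 2\theta}\norm{b}_\infty$, so $b_{11}\geq b_{00}-\frac{14\delta}{\sin 2\theta}\norm{b}_\infty\geq\bigl(\frac13-\frac{14\delta}{\sin 2\theta}\bigr)\norm{b}_\infty\geq\frac1s\norm{b}_\infty$ by the first branch $\frac{\delta}{\sin 2\theta}\leq\frac1{14}\bigl(\frac13-\frac1s\bigr)$ of $1/r(s)$, contradicting $b_{11}<\frac1s\norm{b}_\infty$. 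The cases $(p,q)=(0,0),(2,2),(0,2)$ run the same way: one propagates $a$ from $a_{00}$ (after bounding $\abs{a_{01}}$ via the pair $T_{23},T_{67}$ and $b_{00}$), $a$ from $a_{22}$ (after bounding $\abs{a_{12}}$ via $T_{89},T_{45}$ and $b_{22}$), and $b$ from $b_{22}$ (after bounding $\abs{b_{12}}$ via $T_{89},T_{23}$ and $a_{00}$), each contradicting $a_{11}<\frac1s\norm{a}_\infty$ or $b_{11}<\frac1s\norm{b}_\infty$.

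For the second stage, we now know $a_{11}\geq\frac1s\norm{a}_\infty$ or $b_{11}\geq\frac1s\norm{b}_\infty$; by the symmetry of Lemma~\ref{claim:a and b bounds} in $a$ and $b$ we may assume the former. Lemma~\ref{claim:a and b bounds} then yields $\abs{b_{11}-b_{jj}}\leq 2(1+\sqrt{2}s)\frac{\delta}{\sin 2\theta}\norm{b}_\infty$ for $j\in\set{0,2}$, so $b_{11}\geq\max_j b_{jj}-2(1+\sqrt{2}s)\frac{\delta}{\sin 2\theta}\norm{b}_\infty\geq\bigl(\frac13-2(1+\sqrt{2}s)\frac{\delta}{\sin 2\theta}\bigr)\norm{b}_\infty\geq\frac1s\norm{b}_\infty$, where the last step uses the second branch $\frac{\delta}{\sin 2\theta}\leq\frac1{2(1+\sqrt{2}s)}\bigl(\frac13-\frac1s\bigr)$ of the hypothesis $\frac{\delta}{\sin 2\theta}\leq\frac1{r(s)}$. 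This gives both inequalities and completes the proof.

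The part I expect to be the main obstacle is the four-case analysis in the first stage. The naive idea --- propagate the large diagonal entry straight to index $1$ along the rotated domino joining their rows (or columns) --- can fail, because that very domino may force the off-diagonal entry one needs to bound to lie inside a single tile, where Lemma~\ref{lem:PairOfTiles} gives nothing. One has to use the concrete geometry of the $3\times3$ domino tiling to exhibit, in every case, an alternative route --- sometimes by switching which of $a$, $b$ is propagated --- and it is exactly this step that produces the constant $14$, and hence the first branch of $1/r(s)$.
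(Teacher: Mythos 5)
Your proof is correct and essentially the same as the paper's: it uses the identical chain of ingredients (Lemma~\ref{lem:PairOfTiles} with a large diagonal entry of the other party to get $\abs{b_{01}}\leq 6\delta\norm{b}_{\infty}$ and its analogues, propagation via Lemma~\ref{claim:a and b diagonal} giving the constant $14$, then the transfer via Lemma~\ref{claim:a and b bounds} giving $2(1+\sqrt{2}s)$), matching the two branches of $r(s)$ exactly. Your contradiction framing with the four ``hot cell'' cases is just a repackaging of the paper's case analysis on which diagonal entries of $a$ and $b$ are maximal (reduced there by symmetry to the case $\max_i a_{ii}=a_{22}$, $\max_i b_{ii}=b_{00}$), and your separate $s=3$ remark, while correct, is not needed since the general argument covers it.
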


\begin{proof}
We get one of the two lower bounds almost for free. We combine this with Lemma~\ref{lem:PairOfTiles} and the triangle inequality to get the other lower bound.

If $\max_i a_{ii} = a_{11}$ then $a_{11} \geq \frac{1}{3} \tr(a) \geq \frac{1}{3} \norm{a}_{\infty} \geq \frac{1}{s} \norm{a}_{\infty}$ and we are done with $a$. Similarly, if $\max_i b_{ii} = b_{11}$ then $b_{11} \geq \frac{1}{s} \norm{b}_{\infty}$. Thus it only remains to consider the cases when $\max_i a_{ii} \in \set{a_{00}, a_{22}}$ and $\max_i b_{ii} \in \set{b_{00}, b_{22}}$. By symmetry, it suffices to consider the case where $\max_i a_{ii} = a_{22}$ and $\max_i b_{ii} = b_{00}$. The remaining three cases are similar.

Using the tiles that correspond to states $\ket{\psi_{6,7}}$ and $\ket{\psi_{4,5}}$, we get
\begin{equation}
  2 \delta \norm{a \x b}_{\infty} \geq \abs{a_{22} b_{01}} \geq \frac{1}{3} \norm{a}_{\infty} \abs{b_{01}}.
\end{equation}
Thus $\Abs{\Re b_{01}} \leq \abs{b_{01}} \leq 6 \delta \norm{b}_{\infty}$ and using Lemma~\ref{claim:a and b diagonal}, we get
\begin{align}
  b_{00} - b_{11}
  & \leq \abs{b_{11} - b_{00}} \\
  & \leq \frac{2}{\sin 2 \theta} (\delta \norm{b}_{\infty} + \Abs{\Re b_{01}}) \\
  & \leq 14 \frac{\delta}{\sin 2 \theta} \norm{b}_{\infty}.
\end{align}
We assumed that $\max_i b_{ii} = b_{00}$, so
\begin{equation}
  \frac{1}{3} \norm{b}_{\infty} \leq b_{00} \leq b_{11} + 14 \frac{\delta}{\sin 2 \theta} \norm{b}_{\infty}.
\end{equation}
By assumption, $\frac{\delta}{\sin 2 \theta} \leq \frac{1}{r(s)} \leq \frac{1}{14} \bigl( \frac{1}{3} - \frac{1}{s} \bigr)$, so we get the desired bound $b_{11} \geq \frac{1}{s} \norm{b}_{\infty}$.

As we have a lower bound on $b_{11}$, we can use Lemma~\ref{claim:a and b bounds} and get
\begin{equation}
  \abs{a_{11} - a_{22}} \leq 2 (1 + \sqrt{2} s) \frac{\delta}{\sin 2 \theta} \norm{a}_{\infty}.
\end{equation}
We assumed that $\max_i a_{ii} = a_{22}$, so we can rewrite this as
\begin{equation}
  \frac{1}{3} \norm{a}_{\infty}
    \leq a_{22}
    \leq a_{11} + 2 (1 + \sqrt{2} s) \frac{\delta}{\sin 2 \theta} \norm{a}_{\infty}.
\end{equation}
By assumption, $\frac{\delta}{\sin 2 \theta} \leq \frac{1}{r(s)} \leq  \frac{1}{2 (1 + \sqrt{2} s)} \bigl( \frac{1}{3} - \frac{1}{s} \bigr)$, so we get the desired bound $a_{11} \geq \frac{1}{s} \norm{a}_{\infty}$.
\end{proof}

\begin{lemma}
\label{claim:Small and big delta}
For any fixed $s \geq 3$ we have the following:
\begin{itemize}
  \item if $\frac{\delta}{\sin 2 \theta} \leq \frac{1}{r(s)}$ then
        $\norm{\frac{a \x b}{\tr(a \x b)} - \frac{\id}{9}}_{\max}
         \leq 8 (1 + \sqrt{2} s) \frac{\delta}{\sin 2 \theta}$,
  \item if $\frac{\delta}{\sin 2 \theta} \geq \frac{1}{r(s)}$ then
        $\norm{\frac{a \x b}{\tr(a \x b)} - \frac{\id}{9}}_{\max}
         \leq r(s) \frac{\delta}{\sin 2 \theta}$,
\end{itemize}
where $r(s)$ is defined in Equation~(\ref{eq:r}).
\end{lemma}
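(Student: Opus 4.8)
The plan is to obtain each of the two cases from a result already established, so that this lemma is essentially a gluing statement. We may assume $\tr(a \x b) > 0$ throughout, since otherwise $a \x b = 0$ (as $a$ and $b$ are positive semidefinite) and the normalized quantity is vacuous.

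For the first case, suppose $\frac{\delta}{\sin 2\theta} \leq \frac{1}{r(s)}$. This is precisely the hypothesis of Lemma~\ref{claim:Small delta}, so we conclude $a_{11} \geq \frac{1}{s}\norm{a}_{\infty}$ and $b_{11} \geq \frac{1}{s}\norm{b}_{\infty}$. Feeding these two inequalities into Lemma~\ref{claim:a and b max norm} immediately yields $\norm{\frac{a \x b}{\tr(a \x b)} - \frac{\id}{9}}_{\max} \leq 8(1 + \sqrt{2}s)\frac{\delta}{\sin 2\theta}$, which is exactly the claimed bound. So the first bullet is a two-step composition requiring no new computation.

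For the second case, suppose $\frac{\delta}{\sin 2\theta} \geq \frac{1}{r(s)}$. Here the strategy is to fall back on a crude, dimension-free bound valid for \emph{any} nonzero $M \in \Pos(\C^9)$ (applied to $M = a \x b$). Writing $x_{ii} \defeq M_{ii}/\tr(M) \in [0,1]$, each diagonal entry of the normalized matrix minus $\frac{\id}{9}$ satisfies $\abs{x_{ii} - \tfrac{1}{9}} \leq \max\set{\tfrac{1}{9}, \tfrac{8}{9}} = \tfrac{8}{9}$, while each off-diagonal entry satisfies $\abs{M_{ij}}/\tr(M) \leq \sqrt{M_{ii}M_{jj}}/\tr(M) \leq \tfrac12(x_{ii} + x_{jj}) \leq \tfrac12$, using that the sum of any two diagonal entries of a positive semidefinite matrix is at most its trace. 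Hence $\norm{\frac{a \x b}{\tr(a \x b)} - \frac{\id}{9}}_{\max} \leq \tfrac{8}{9} < 1$. On the other hand, the case hypothesis gives $r(s)\,\frac{\delta}{\sin 2\theta} \geq 1$ (note $r(s) > 0$ for $s > 3$), so $\norm{\frac{a \x b}{\tr(a \x b)} - \frac{\id}{9}}_{\max} \leq 1 \leq r(s)\,\frac{\delta}{\sin 2\theta}$, as required.

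Since neither case introduces a genuine difficulty, I do not expect an obstacle; the only points deserving a moment's care are verifying the positive-semidefinite estimate $\abs{M_{ij}} \leq \tfrac12(M_{ii}+M_{jj})$ and the fact $\sum_k M_{kk} = \tr(M)$ together cap every normalized entry by $\tfrac{8}{9}$, and confirming that the threshold $1/r(s)$ is positive so that the dichotomy is exhaustive and the inequality $r(s)\,\frac{\delta}{\sin 2\theta} \geq 1$ used in the second case is legitimate.
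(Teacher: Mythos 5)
Your proposal is correct and follows essentially the same route as the paper: the first bullet is exactly the composition of Lemma~\ref{claim:Small delta} with Lemma~\ref{claim:a and b max norm}, and the second bullet is the same trivial observation that every entry of the positive semidefinite matrix $\frac{a \x b}{\tr(a \x b)}$ is at most $1$, so the max-norm of the difference is at most $1 \leq r(s)\,\frac{\delta}{\sin 2\theta}$ (your refinement to $\tfrac{8}{9}$ via $\abs{M_{ij}} \leq \tfrac{1}{2}(M_{ii}+M_{jj})$ is a minor sharpening, not a different argument).
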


\begin{proof}
The first part follows by combining Lemmas~\ref{claim:a and b max norm} and \ref{claim:Small delta}. To obtain the second part, notice that all diagonal entries of $\frac{a \x b}{\tr(a \x b)}$ are at most $1$. Since this matrix is positive semidefinite, the off-diagonal entries are also at most $1$, so the bound follows.
\end{proof}

\ROTBOX*

\begin{proof}
Let us denote the largest of the two constants in Lemma~\ref{claim:Small and big delta} by
\begin{align}
  C(s) \defeq& \max \set{8 (1 + \sqrt{2} s), r(s)} \\
            =& \max \set{8 (1 + \sqrt{2} s),
                             14 \frac{3s}{s-3},
             2 (1 + \sqrt{2} s) \frac{3s}{s-3}
       },
\end{align}
where we substituted $r(s)$ from Equation~(\ref{eq:r}). We want to make this constant as small as possible, so the best possible value is
\begin{align}
  C &=
  \min_{s \geq 3} \; C(s) \\
  &= \min_{s \geq 3} \; 2(1+\sqrt{2}s)\frac{3s}{s-3} \\
  &= 6 \Bigl( 1 + 6 \sqrt{2} + 2 \sqrt{3 (6 + \sqrt{2})} \Bigr),
\end{align}
where the minimum is reached at $s = 3 + \sqrt{9 + 3 / \sqrt{2}} \approx 6.33$.
\end{proof}

\end{document}